\newtheorem{satz}{Theorem}[section]
\newtheorem{lem}[satz]{Lemma}
\newtheorem{proposition}[satz]{Proposition}
\newtheorem{corollary}[satz]{Corollary}
\newtheorem{anm}[satz]{Remark}
\newtheorem{defi}{Definition}[section]
\newcommand{\N}{\ensuremath{{\mathbb N}}}
\newcommand{\E}{\ensuremath{{\mathbb E}}}
\DeclareMathOperator*{\esssup}{esssup}
\begin{document}

\title{Optimal multiple stopping with random waiting times}
\author{S\"oren Christensen\thanks{Christian-Albrechts-Universit\"at, Mathematisches Seminar, Ludewig-Meyn-Str. 4, 24098 Kiel, Germany, email: \textit{lastname}@math.uni-kiel.de.} \and Albrecht Irle\footnotemark[1] \and Stephan J\"urgens\footnotemark[1]}
\date{\today}
\maketitle

\begin{abstract}
In the standard models for optimal multiple stopping problems it is assumed that between two exercises there is always a time period of deterministic length $\delta$, the so called refraction period. This prevents the optimal exercise times  from bunching up together on top of the optimal stopping time for the one-exercise case. In this article we generalize the standard model by considering random refraction times. We develop the theory and reduce the problem to a sequence of ordinary stopping problems  thus extending the results for deterministic times. This requires an extension of the underlying filtrations in general. Furthermore we consider the Markovian case and treat an example explicitly.
\end{abstract}

\textbf{Keywords:} {Optimal multiple stopping, swing options, random waiting times}\\[.2cm]

\section{Introduction}\label{intro}
The aim of this paper is to extend the theory of optimal multiple stopping. These stopping problems with multiple exercise opportunities arise in different fields of applied probability, in particular in the analysis of options traded on the energy market. For example swing options entitle the buyer to exercise a certain right $n$ times in a given time interval. The buyer of the option is faced with the following optimization problem: What are the best times to exercise these rights? This leads to the problem
\[\mbox{Maximize }\E(Y(\tau_1)+...+Y(\tau_n)) \mbox{ for $(\tau_1,...,\tau_n)$ with }  \tau_1\leq ...\leq \tau_n.\]
Without any further restrictions one can see that the optimal strategy is to exercise all rights at that same time, namely at the time when one would exercise for $n=1$. But typically there are restrictions imposed on the exercise dates. The most relevant restriction is that between each two exercise times there must be a pre-specified time interval, the so called refraction period, of length $\delta$, i.e. $\tau_i+\delta\leq \tau_{i+1}$ for all $i$.\\
In the existing models for such situations it is assumed that $\delta$ is a given constant, see e.g. \cite{Touzi}, but in real world situations random waiting times might arise. So, in this article we introduce random waiting times $\delta_1,...,\delta_{n-1}$. For dealing with this extended problem we note that the optimal strategies $\tau_i,i\geq 2,$ will depend on the values $\delta_1,...,\delta_{i-1}$; therefore we have to enrich the standard filtration to include the information given by the random waiting times. This is carried out for the continuous time case in Section \ref{model} and first properties of the enriched filtration are noted there.\\
The main tool for a solution of optimal multiple stopping problems lies in the reduction of the original problem to a sequence of $n$ ordinary optimal stopping problems. This approach is well known for the problem with deterministic waiting times, see e.g. \cite[Section 2]{Bender} and the references therein for the discrete time case and \cite[Proposition 3.2]{Car} and \cite[Theorem 2.1]{Touzi} for the continuous time case. An extension to a more general situation can be found in \cite{KQR}. This reduction can be used to apply known techniques for ordinary optimal stopping problems to solve multiple stopping problems (semi-)explicitly or numerically, see e.g. \cite{Jaillet}, \cite{Thompson} and \cite{Meinshausen}, also \cite{BenderDual} and \cite{BenderDual2}. We establish such a reduction principle in Section \ref{reduction} and furthermore carry over the results to the discrete time case. In Section \ref{markov} we establish the theory for underlying Markov processes. We treat an example in Subsection \ref{subsection:house}, where we can find the explicit solution. \\
In Section \ref{sec:second_model} we introduce a second model that covers other classes of real-world problems, such as employee options. The basic idea is that additionally to the reward process another process is running, and it is only possible to exercise again when the second process enters a given set $B$. We develop the theory, but because this model is easier to handle and many ideas are similar to the arguments given before we just sketch the proofs.
We use these results for treating an example with explicit solution in Section \ref{sec:example}. This is remarkable since in the theory for deterministic waiting times no such examples seem to be known.

\section{The model and first properties}
\label{model}
For the following we fix a probability space $(\Omega,\mathcal{F},\mathbb{P})$ and a filtration $\mathfrak{F}=(\mathcal{F}_t)_{t \geq 0}$ fulfilling the usual conditions, i.e. $\mathfrak{F}$ is right-continuous and $\mathcal{F}_0$ contains all $P$-null-sets. Furthermore let $ Y = (Y(t))_{t \geq 0}$ be an $\mathfrak{F}$-adapted,
    non-negative and right-continuous stochastic process with the property
    \begin{equation}\label{eigenschaft supremumsprozess int}
            \E\Bigr{(} \sup_{t\geq 0}Y(t)\Bigr{)}< \infty.
    \end{equation}
    We denote by $\mathcal{S}$ the set of all $\mathfrak{F}$-stopping times with values in $[0, \infty]$ and write
     \[
        \mathcal{S}_\sigma:=\{ \tau \in \mathcal{S};\; \tau \geq \sigma\}
    \]
    for all $\sigma\in\mathcal{S}$. Note that the value $\infty$ is admitted.
    Furthermore, we write $\mathcal{F}_\infty:= \sigma\Bigr{(}\bigcup_{t\geq 0} \mathcal{F}_t \Bigl{)}$
    and $Y(+\infty):= \limsup  Y(t).  $\\
Moreover we fix a number $n$ that represents the number of exercise opportunities. As discussed in the introduction, between each two exercise times $\tau_i$ and $\tau_{i+1}$ we have to wait at least $\delta_i$ time units. Now we assume $\delta_1,...,\delta_{n-1}$ to be non-negative random variables which are finite a.s. In this case the holder of the option has to use the information given by the waiting times for the next decision. This information is a priori unrelated to the information given by the filtration $\mathfrak{F}$. Hence $\mathfrak{F}$ is not the adequate filtration for formulating the multiple optimal stopping problem with random waiting times:\\
If the holder of the option exercises the first time using the strategy $\tau_1$ and waits for $\delta_1$ time units, then the information available at time $t$ is given by $\mathcal{F}_t \cup \{\{ \tau_1+\delta_1 \leq s\};\;s\leq t\}$. Hence for the strategy $\tau_2$  the holder has this information  at time $t$, i.e. $\tau_2$ should be modeled as a stopping time with respect to $(\sigma(\mathcal{F}_t \cup \{\{ \tau_1+\delta_1 \leq s\};\;s\leq t\}))_{t\geq 0}$. In the following subsection we study this filtration in detail before formulating the multiple optimal stopping problem with random waiting times in this model.
In order to enhance readability, some proofs are are given in an appendix.

\subsection{Exercise strategies with random waiting times}\label{abschnitt zuf wart filt}

For shorter notation we use the following definition:

\begin{defi}\label{defi filtration der stoppzeit}
        For each $\rho:\Omega \rightarrow [0,\infty]$ let $\mathcal{G}^\rho$ be the smallest filtration such that $\rho$ is a stopping time, i.e.
        \[
            \mathcal{G}^\rho_t = \bigcap \{\mathcal{A}_t;\; \mathfrak{A}=(\mathcal{A}_s)_{s\geq 0} \text{ is a filtration and $\rho$ is an $\mathfrak{A}$-stopping time}\},\hspace{0.5cm}{t\geq 0}.
        \]
\end{defi}

For example if $\rho$ is an $\mathcal{F}$-stopping time, then $\mathcal{G}_t^\rho\subseteq \mathcal{F}_t$ for all $t\geq 0$. In particular if $\rho$ is a constant, then $\mathcal{G}_t^\rho = \{\emptyset,\Omega\}$. Furthermore one immediately checks that
    \begin{eqnarray}
        \mathcal{G}^\rho_t
        & = &  \sigma\Bigr{(} \{ \{\rho\leq s\};\; s\leq t\} \Bigl{)}\label{eigenschaft charak neue filtration 1} \\
        & = & \sigma\Bigr{(} \rho \, 1_{[0,t]}(\rho),\, 1_{\{0\}}(\rho) \Bigl{)}. \label{eigenschaft charak neue filtration 2}
    \end{eqnarray}

\begin{defi}\label{defi filtration der stoppzeit 2}
For each filtration $\mathfrak{A}=(\mathcal{A}_t)_{t\geq 0}$ and each $\rho:\Omega \rightarrow [0,\infty]$ let $\mathfrak{A}^\rho= (\mathcal{A}_t^\rho)_{t\geq 0}$ be the smallest right-continuous filtration containing $\mathfrak{A}$ such that $\rho$ is a stopping time, i.e. $\mathfrak{A}^\rho$ the right-continuous filtration generated by filtration  $\Bigr{(}\sigma(\mathcal{A}_t,\mathcal{G}^\rho_t)\Bigl{)}_{t\geq 0}$.\\
Furthermore for all  $\rho_1,...,\rho_n:\Omega \rightarrow [0,\infty]$ define recursively
\[\mathfrak{A}^{\rho_1,...,\rho_i}= \bigr{(}\mathfrak{A}^{\rho_1,...,\rho_{i-1}}\bigl{)}^{\rho_i}.\]
\end{defi}

The following proposition lists some properties of $\mathfrak{A}^\rho$ for later use.

\begin{proposition}\label{satz eigenschaften von neuer filtration}
    With the notation given above it holds that:
    \begin{enumerate}[(i)]
        \item $\mathfrak{A}^\rho=\mathfrak{A}$, if $\rho$ is an $\mathfrak{A}$-stopping time and $\mathfrak{A}$ is right-continuous.
        \item $\mathcal{G}^\rho_{t-s}= \mathcal{G}^{\rho+s}_{t}$ and $\mathcal{A}^\rho_{t-s}\subseteq \mathcal{A}_t^{\rho+s}$  for all $s\leq t$.
        \item  $\mathcal{A}_t^{\sigma+\rho}\mid_{ \{\sigma=\tau\}} =  \mathcal{A}_t^{\tau+\rho}\mid_{ \{\sigma=\tau\}}$ for all $t\geq 0$ and $\mathfrak{A}$-stopping times $\sigma$ and $\tau$, where $\mathcal{C}\mid_{ D}:= \{ C\cap D;\; C\in\mathcal{C}\}$ for all $\sigma$-algebras $\mathcal{C}$, $D\subseteq\Omega$.
        \item $\mathcal{A}^{\sigma + \rho}_t \supseteq \mathcal{A}^{\tau + \rho}_t$ for all $t\geq 0$ and $\mathfrak{A}$-stopping times $\sigma\leq \tau$ with countable range.
    \end{enumerate}
\end{proposition}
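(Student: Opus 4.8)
The plan is to work directly from the right-continuous description $\mathcal{A}^{\kappa}_t=\bigcap_{u>t}\sigma(\mathcal{A}_u,\mathcal{G}^{\kappa}_u)$ coming from Definition \ref{defi filtration der stoppzeit 2}, together with the generating description $\mathcal{G}^{\kappa}_u=\sigma(\{\{\kappa\le s\};\,s\le u\})$ from (\ref{eigenschaft charak neue filtration 1}). Since intersection over $u>t$ is monotone, it suffices to prove the fixed-level inclusion $\sigma(\mathcal{A}_u,\mathcal{G}^{\tau+\rho}_u)\subseteq\sigma(\mathcal{A}_u,\mathcal{G}^{\sigma+\rho}_u)$ for every $u\ge 0$; intersecting over $u>t$ then yields the claimed $\mathcal{A}^{\tau+\rho}_t\subseteq\mathcal{A}^{\sigma+\rho}_t$, which is the asserted inclusion. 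Because $\mathcal{G}^{\tau+\rho}_u$ is generated by the events $\{\tau+\rho\le s\}$ with $s\le u$, the whole problem reduces to showing that each such generator lies in $\sigma(\mathcal{A}_s,\mathcal{G}^{\sigma+\rho}_s)$, which is contained in $\sigma(\mathcal{A}_u,\mathcal{G}^{\sigma+\rho}_u)$ whenever $s\le u$.

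First I would use the countable ranges $R_\sigma$ and $R_\tau$ to split $\{\tau+\rho\le s\}$ into the countable disjoint union, over $a\in R_\sigma$ and $b\in R_\tau$, of the pieces $\{\sigma=a\}\cap\{\tau=b\}\cap\{\rho\le s-b\}$, keeping only $b\le s$ (otherwise $\{\rho\le s-b\}=\emptyset$ since $\rho\ge 0$) and $a\le b$ (otherwise $\{\sigma=a\}\cap\{\tau=b\}=\emptyset$ by $\sigma\le\tau$). Here $\{\sigma=a\}\in\mathcal{A}_a\subseteq\mathcal{A}_s$ and $\{\tau=b\}\in\mathcal{A}_b\subseteq\mathcal{A}_s$ are available because $\sigma$ and $\tau$ are $\mathfrak{A}$-stopping times and $a,b\le s$. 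The decisive step is to rewrite, on the atom $\{\sigma=a\}$, the event $\{\rho\le s-b\}$ as $\{\sigma+\rho\le s-b+a\}$; this is exactly the move that converts information about $\rho$ into information about $\sigma+\rho$, i.e.\ into the filtration $\mathfrak{A}^{\sigma+\rho}$ we are allowed to use.

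The point where the hypothesis $\sigma\le\tau$ is genuinely needed, and what I expect to be the only real obstacle, is the \emph{time level} at which this last event lives. Since $a\le b$ on the relevant atoms, we have $s-b+a\le s$, so $\{\sigma+\rho\le s-b+a\}\in\mathcal{G}^{\sigma+\rho}_{s-b+a}\subseteq\mathcal{G}^{\sigma+\rho}_s$; without $\sigma\le\tau$ the level $s-b+a$ could exceed $s$ and the argument would break. Assembling the three ingredients shows each atom-piece lies in $\sigma(\mathcal{A}_s,\mathcal{G}^{\sigma+\rho}_s)$, and the countable union inherits this, giving $\{\tau+\rho\le s\}\in\sigma(\mathcal{A}_s,\mathcal{G}^{\sigma+\rho}_s)$ as required. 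The countable range is used precisely to make the decomposition a countable union of such pieces, the right-continuity is handled for free by the reduction to fixed levels $u$ noted at the outset, and possible values $\rho=\infty$ or $\sigma=\infty$ cause no trouble since they are automatically excluded from all the events $\{\,\cdot\le s\}$ under consideration.
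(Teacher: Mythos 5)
Your treatment of part (iv) is correct, and it is essentially the paper's own argument: the same decomposition of $\{\tau+\rho\le s\}$ over the countable joint range of $\sigma$ and $\tau$, the same conversion of $\{\rho\le s-b\}$ into $\{\sigma+\rho\le s-b+a\}$ on the atom $\{\sigma=a\}$, and the same use of $\sigma\le\tau$ at exactly the same spot, namely to guarantee that the resulting time level $s-b+a$ does not exceed $s$. The only difference is bookkeeping: you establish the fixed-level inclusion $\sigma(\mathcal{A}_u,\mathcal{G}^{\tau+\rho}_u)\subseteq\sigma(\mathcal{A}_u,\mathcal{G}^{\sigma+\rho}_u)$ before regularization and then intersect over $u>t$, whereas the paper places the relevant events directly into $\mathcal{A}^{\sigma+\rho}_{t+\lambda_i-\lambda_j}\subseteq\mathcal{A}^{\sigma+\rho}_t$; both routes yield $\mathcal{A}^{\tau+\rho}_t\subseteq\mathcal{A}^{\sigma+\rho}_t$, and your handling of the values $b>s$, $a>b$ and of $\rho=\infty$ is sound.

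The genuine gap is coverage: the proposition has four parts and your proposal proves only (iv). Parts (i) and (ii) are immediate from the definitions (the paper dismisses them in one line), so that omission is minor; but part (iii), the identity $\mathcal{A}_t^{\sigma+\rho}\mid_{\{\sigma=\tau\}}=\mathcal{A}_t^{\tau+\rho}\mid_{\{\sigma=\tau\}}$ for \emph{arbitrary} stopping times $\sigma,\tau$, is a separate assertion: it carries no ordering and no countable-range hypothesis, so it neither follows from (iv) nor is touched by your decomposition argument. The paper proves it by observing that the generators coincide after restriction, $\{\sigma+\rho\le s\}\cap\{\sigma=\tau\}=\{\tau+\rho\le s\}\cap\{\sigma=\tau\}$ for all $s\le t$, hence $\mathcal{G}^{\sigma+\rho}_t\mid_{\{\sigma=\tau\}}=\mathcal{G}^{\tau+\rho}_t\mid_{\{\sigma=\tau\}}$ and therefore $\sigma(\mathcal{A}_t,\mathcal{G}^{\sigma+\rho}_t)\mid_{\{\sigma=\tau\}}=\sigma(\mathcal{A}_t,\mathcal{G}^{\tau+\rho}_t)\mid_{\{\sigma=\tau\}}$. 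Since (iii) is precisely what the paper uses downstream (in the proofs of Lemma \ref{lemma eigenschaften stopp zufaellig}(ii) and Lemma \ref{lemma abz stoppteiten filtration}), your proposal is incomplete as a proof of the stated proposition; adding this short restriction argument would close the gap.
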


\begin{proof}
See \ref{proof satz eigenschaften von neuer filtration}.
\end{proof}

Now we come back to the model described at the beginning of this section by introducing the strategies:
\begin{defi}\label{defi stoppzeiten zufaellig} Write $\delta:=(\delta_1,...,\delta_{n-1})$ for short. For each $\mathfrak{F}$-stopping time $\sigma$ define
        \begin{align}
            \mathcal{S}_\sigma^n(\delta,\mathfrak{F}):=\Bigr{\{}(\tau_1,...,\tau_n);\;& \tau_1 \text{ is $\mathfrak{F}$-stopping time and } \sigma\leq \tau_1 ,
            \\ & \tau_i \text{ is $\mathfrak{F}^{\tau_1+\delta_1,...,\tau_{i-1}+\delta_{i-1}}$-stopping time } \nonumber\\
            & \text{and } \tau_{i-1}+\delta_{i-1}\leq \tau_i, i=2,...,n\nonumber \Bigl{\}}
        \end{align}
        and
        \begin{eqnarray}
         \mathcal{S}_\sigma^n(\delta,\mathfrak{F})_{\text{disc}}:=\Bigr{\{}(\tau_1,...,\tau_n)\in  \mathcal{S}_\sigma^n(\delta,\mathfrak{F});\; \text{range}(\tau_1) \, \text{is countable}\Bigl{\}}.
        \end{eqnarray}
\end{defi}
We also write $\mathcal{S}_\sigma^n(\delta):= \mathcal{S}_\sigma^n(\delta,\mathfrak{F})$ and $\mathcal{S}_\sigma^n(\delta)_{\text{disc}}:=\mathcal{S}_\sigma^n(\delta,\mathfrak{F})_{\text{disc}}$ for short. The notation is a direct generalization of that given, e.g., in \cite{Touzi} for deterministic $\delta$ since from Proposition 2.1 it follows
\[\mathcal{S}_\sigma^n(\delta)= \{ \tau\in\mathcal{S}^n;\; \sigma\leq \tau_1,\; \tau_{i-1}+\delta_{i-1}\leq \tau_i, i=1,..,n\}\]
if $\delta_1,...,\delta_{n-1}$ are $\mathfrak{F}$-stopping times, in particular if $\delta=\delta_1=...=\delta_{n-1}$ are constants.

The following lemma summarizes some useful facts for later use.
\begin{lem}\label{lemma eigenschaften stopp zufaellig}
    Let  $\sigma$ and $\tau$ be  $\mathfrak{F}$-stopping times and write $L\delta:=(\delta_2,...,\delta_{n-1})$.
    \begin{enumerate}[(i)]
        \item  It holds that
        \[\mathcal{S}_\sigma^n(\delta,\mathfrak{F})= \Bigr{\{} (\tau_1,...,\tau_n);\; \tau_1\in\mathcal{S}_\sigma,\, (\tau_2,...,\tau_n)\in \mathcal{S}_{\delta_1+\tau_1}^{n-1}(L\delta,\mathfrak{F}^{\tau_1+\delta_1})\Bigl{\}}\] and
          \[\mathcal{S}_\sigma^n(\delta,\mathfrak{F})_{\textnormal{disc}}= \Bigr{\{} (\tau_1,...,\tau_n);\; \tau_1\in\mathcal{S}_{\sigma,\textnormal{disc}} ,\, (\tau_2,...,\tau_n)\in \mathcal{S}_{\delta_1+\tau_1}^{n-1}(L\delta,\mathfrak{F}^{\tau_1+\delta_1})\Bigl{\}}.\]
        \item
            For all $(\tau_2,...,\tau_{n})\in \mathcal{S}_{\tau+\delta_1}^{n-1}(L\delta,\mathfrak{F}^{\tau+\delta_1})$ and $\lambda{\geq 0}$ it holds that
            \[
                \Bigr{(}  \tau_2 1_{\{\tau=\lambda\}} +  (\lambda+ \delta_1) 1_{\{\tau\neq\lambda\}},..., \tau_n 1_{\{\tau=\lambda\}} + (\lambda+\delta_1+...+\delta_{n-1}) 1_{\{\tau\neq\lambda\}}  \Bigl{)}\in\mathcal{S}_{\lambda+\delta_1}^{n-1}(L\delta,\mathfrak{F}^{\lambda+\delta_1}).
            \]
         \item
         For all $(\tau_2,...,\tau_{n})\in \mathcal{S}_{\lambda+\delta_1}^{n-1}(L\delta,\mathfrak{F}^{\lambda+\delta_1})$  it holds that
            \[
                \Bigr{(}  \tau_2 1_{\{\tau=\lambda\}} +  (\tau+ \delta_1) 1_{\{\tau\neq\lambda\}},..., \tau_n 1_{\{\tau=\lambda\}} + (\tau+\delta_1+...+\delta_{n-1}) 1_{\{\tau\neq\lambda\}}  \Bigl{)}\in \mathcal{S}_{\tau+\delta_1}^{n-1}(L\delta,\mathfrak{F}^{\tau+\delta_1}).
            \]
          \item
              For all $(\tau_2,...,\tau_{n})\in \mathcal{S}_{\tau+\delta_1}^{n-1}(L\delta,\mathfrak{F}^{\tau+\delta_1})$  and $c{\geq 0}$
              \[
                (\tau_2+c,...,\tau_{n}+c)\in\mathcal{S}_{\tau+ c +\delta_1}^{n-1}(L\delta,\mathfrak{F}^{\tau+c+\delta_1}).
             \]
                 \end{enumerate}
\end{lem}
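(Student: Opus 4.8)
The plan is to dispatch (i) as a bookkeeping unfolding of the recursive definition and to reduce (ii)--(iv) to two ``iterated'' versions of Proposition~\ref{satz eigenschaften von neuer filtration}: one for constant time shifts and one for gluing on the set $\{\tau=\lambda\}$.

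For (i) I would just read off the defining conditions of $\mathcal{S}_\sigma^n(\delta,\mathfrak{F})$ in Definition~\ref{defi stoppzeiten zufaellig} and match them termwise against those of $\mathcal{S}_{\delta_1+\tau_1}^{n-1}(L\delta,\mathfrak{F}^{\tau_1+\delta_1})$. Writing a generic tuple of the latter as $(\tilde\tau_1,\dots,\tilde\tau_{n-1})$ and identifying $\tilde\tau_j=\tau_{j+1}$, the $j$-th waiting time is $(L\delta)_j=\delta_{j+1}$ and the $j$-th enrichment time is $\tilde\tau_j+(L\delta)_j=\tau_{j+1}+\delta_{j+1}$. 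The only substantive point is that the filtration governing $\tau_i$, namely $(\mathfrak{F}^{\tau_1+\delta_1})^{\tau_2+\delta_2,\dots,\tau_{i-1}+\delta_{i-1}}$, equals $\mathfrak{F}^{\tau_1+\delta_1,\dots,\tau_{i-1}+\delta_{i-1}}$, which is exactly the recursion in Definition~\ref{defi filtration der stoppzeit 2}; the constraints $\tau_{i-1}+\delta_{i-1}\le\tau_i$ then line up, and the discrete version follows by additionally noting that the condition isolates $\mathrm{range}(\tau_1)$ only. No filtration theory is needed here.

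The workhorses for the remaining parts are two inductions on the number $k$ of enrichment layers, recalling that $\mathcal{A}^\rho_t=\bigcap_{u>t}\sigma(\mathcal{A}_u,\mathcal{G}^\rho_u)$. (a) For $c\ge 0$ and any $\rho_1,\dots,\rho_k$ one has $\mathcal{F}^{\rho_1,\dots,\rho_k}_{t-c}\subseteq\mathcal{F}^{\rho_1+c,\dots,\rho_k+c}_{t}$; the base case is $\mathcal{F}_{t-c}\subseteq\mathcal{F}_t$, and the step rests on the single-layer fact that $\mathcal{A}_{t-c}\subseteq\mathcal{B}_t$ for all $t$ forces $\mathcal{A}^\rho_{t-c}\subseteq\mathcal{B}^{\rho+c}_t$, which follows by writing $\mathcal{A}^\rho_{t-c}=\bigcap_{v>t}\sigma(\mathcal{A}_{v-c},\mathcal{G}^\rho_{v-c})$ and substituting $\mathcal{G}^\rho_{v-c}=\mathcal{G}^{\rho+c}_{v}$ and $\mathcal{A}_{v-c}\subseteq\mathcal{B}_v$ from Proposition~\ref{satz eigenschaften von neuer filtration}(ii). (b) If $\rho_j=\rho_j'$ on $\{\tau=\lambda\}$ for all $j$, with $\tau$ an $\mathfrak{F}$-stopping time and $\lambda\ge 0$, then $\mathcal{F}^{\rho_1,\dots,\rho_k}_t\mid_{\{\tau=\lambda\}}=\mathcal{F}^{\rho_1',\dots,\rho_k'}_t\mid_{\{\tau=\lambda\}}$ for $t\ge\lambda$; the base case is Proposition~\ref{satz eigenschaften von neuer filtration}(iii), and the step uses that the trace $\sigma$-algebra commutes with joins together with $\mathcal{G}^{\rho_k}_t\mid_{\{\tau=\lambda\}}=\mathcal{G}^{\rho_k'}_t\mid_{\{\tau=\lambda\}}$, which is immediate from \eqref{eigenschaft charak neue filtration 1} since $\rho_k,\rho_k'$ agree on $\{\tau=\lambda\}$. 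With these in hand, (iv) is immediate from (a): $\{\tau_i+c\le t\}=\{\tau_i\le t-c\}$ lies in $\mathcal{F}^{\tau+\delta_1,\dots,\tau_{i-1}+\delta_{i-1}}_{t-c}\subseteq\mathcal{F}^{(\tau+c)+\delta_1,\dots,(\tau_{i-1}+c)+\delta_{i-1}}_{t}$, and the ordering and minimality constraints transfer trivially under $\tau_i\mapsto\tau_i+c$. For (ii) and (iii), which are mirror images (constant start $\lambda$ versus random start $\tau$), I would use (i) and check the modified tuple layer by layer, calling its $i$-th entry $\hat\tau_i$ (resp.\ $\check\tau_i$). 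On $\{\tau\neq\lambda\}$ that entry equals its predecessor plus $\delta_{i-1}$, hence coincides there with an enrichment time and is automatically admissible; on $\{\tau=\lambda\}$ it equals $\tau_i$, and since the governing enrichment times agree on $\{\tau=\lambda\}$, part (b) transports the stopping-time property between the two filtrations. The pieces are glued using $\{\tau=\lambda\}\in\mathcal{F}_\lambda$, while for $t<\lambda$ the relevant event $\{\hat\tau_i\le t\}\cap\{\tau=\lambda\}$ (resp.\ with $\check\tau_i$) is empty because $\hat\tau_i\ge\lambda$, so (b) is only ever invoked for $t\ge\lambda$.

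I expect the main obstacle to be making (b) rigorous: restriction to $\{\tau=\lambda\}$ must be shown to commute with the right-continuous closure $\bigcap_{v>t}$ defining the enriched filtration, which is only legitimate once $\{\tau=\lambda\}$ lies in each $\sigma$-algebra being intersected — precisely why the statement is confined to $t\ge\lambda$. Verifying this interchange and checking that it feeds correctly through the inductive step of (b) is the delicate point; the shift lemma (a) and all the ordering and minimality bookkeeping are routine by comparison.
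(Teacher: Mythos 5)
Your proposal is correct and takes essentially the same route as the paper's proof: (i) read off from the recursive definition, (iv) from an inductive shift inclusion (your lemma (a) is precisely the paper's inclusion \eqref{eigenschaft lemma fil}), and (ii)/(iii) by splitting events along $\{\tau=\lambda\}$ versus $\{\tau\neq\lambda\}$ and transporting the stopping-time property via the iterated trace-equality of enriched filtrations (your lemma (b) is the paper's recursive application of Proposition \ref{satz eigenschaften von neuer filtration}(iii)), glued with $\{\tau=\lambda\}\in\mathcal{F}_\lambda$ and emptiness of the relevant events for $t<\lambda$. The only difference is presentational: you package the two inductions as standalone lemmas, while the paper inlines them in the proofs of (ii) and (iv).
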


\begin{proof}
See \ref{proof lemma eigenschaften stopp zufaellig}.
\end{proof}

\begin{anm}\label{bem komp max un min bildung}
At a first glance one might think that the set $\mathcal{S}_\sigma^n(\delta,\mathfrak{F})$ is closed under taking (component wise) maxima. This is not true in general: Indeed
if $\rho_1$ and $\tau_1$ are $\mathfrak{F}$-stopping times with $\rho_1 \geq \tau_1$, $\tau_2$ is an $\mathfrak{F}^{\tau_1+\delta_1}$-stopping time and $\rho_2$ an $\mathfrak{F}^{\rho_1+\delta_1}$-stopping time with $\rho_2\leq \tau_2$ and $\tau_2$ is not an $\mathfrak{F}^{\rho_1+\delta_1}$-stopping time, then $\rho_2\vee \tau_2=\tau_2$ is not an $\mathfrak{F}^{(\rho_1\vee\tau_1)+\delta_1 }= \mathfrak{F}^{\rho_1+\delta_1}$-stopping time.\\
On the other hand if $\delta_1,...,\delta_{n-1}$ are $\mathfrak{F}$-stopping times, then this property obviously holds.
\end{anm}

\subsection{Formulation of the problem}\label{Abschnitt zuf wart}

Using the notation given in the previous subsection we can state the problem of optimal multiple stopping with random waiting times:\\
Maximize the expectation
\[\E\Bigr{(}\sum_{i=1}^nY(\tau_i)\Bigl{)}\]
over all $(\tau_1,...,\tau_n)\in  \mathcal{S}_0^n(\delta,\mathfrak{F})$. To treat this problem we extend it in the usual way as follows:

\begin{defi}\label{defi stoppproblem zuf}
     For each $\mathfrak{F}$-stopping time $\sigma$ write
     \begin{equation}\label{eigenschaft defi stoppproblem mit zuf wart}
        Z_n(\sigma):= Z_n^{\delta,\mathfrak{F}}(\sigma):= \underset{\tau\in \mathcal{S}_\sigma^n(\delta,\mathfrak{F})}{\esssup} \E\Bigr{(}\sum_{i=1}^nY(\tau_i)\Big{|}\mathcal{F}_\sigma\Bigl{)}
     \end{equation}
     and
     \begin{equation}
        Z_n(\sigma)_{\text{disc}}:= Z_n^{\delta,\mathfrak{F}}(\sigma)_{\text{disc}}:= \underset{\tau\in \mathcal{S}_\sigma^n(\delta,\mathfrak{F})_{\text{disc}}}{\esssup} \E\Bigr{(}\sum_{i=1}^nY(\tau_i)\Big{|}\mathcal{F}_\sigma\Bigl{)}.
     \end{equation}
      Furthermore let $Z_0(\sigma)\equiv Z_0(\sigma)_{\text{disc}}:\equiv 0$.
\end{defi}

We immediately obtain

\begin{proposition}\label{satz zus zuf wart disk nicht disk}
    For all $\mathfrak{F}$-stopping times $\sigma$ it holds that
    \begin{enumerate}[(i)]
        \item
            $Z_n(\sigma)_{\textnormal{disc}}\leq Z_n(\sigma)$.
        \item
            If $\delta_1,...,\delta_{n-1}$ are $\mathfrak{F}$-stopping times, then there is equality in $(i)$.
    \end{enumerate}
\end{proposition}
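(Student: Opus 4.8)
Part (i) requires no work: since $\mathcal{S}_\sigma^n(\delta,\mathfrak{F})_{\mathrm{disc}}\subseteq\mathcal{S}_\sigma^n(\delta,\mathfrak{F})$, the essential supremum defining $Z_n(\sigma)_{\mathrm{disc}}$ runs over a smaller family, so $Z_n(\sigma)_{\mathrm{disc}}\leq Z_n(\sigma)$ by monotonicity of the essential supremum. For part (ii) it therefore remains to establish the reverse inequality $Z_n(\sigma)\leq Z_n(\sigma)_{\mathrm{disc}}$. Since $Z_n(\sigma)$ is the least upper bound of the family $\{\E(\sum_{i}Y(\tau_i)\mid\mathcal{F}_\sigma);\,\tau\in\mathcal{S}_\sigma^n(\delta)\}$, the plan is to show that for each \emph{fixed} $\tau=(\tau_1,\dots,\tau_n)\in\mathcal{S}_\sigma^n(\delta)$ one has $\E(\sum_iY(\tau_i)\mid\mathcal{F}_\sigma)\leq Z_n(\sigma)_{\mathrm{disc}}$ almost surely; the essential supremum over $\tau$ then yields the claim. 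Because the $\delta_i$ are $\mathfrak{F}$-stopping times, the characterisation of $\mathcal{S}_\sigma^n(\delta)$ recorded after Definition~\ref{defi stoppzeiten zufaellig} lets me treat $\tau$ simply as a tuple of ordinary $\mathfrak{F}$-stopping times with $\sigma\leq\tau_1$ and $\tau_{i-1}+\delta_{i-1}\leq\tau_i$, and I shall approximate $\tau$ from above by elements of $\mathcal{S}_\sigma^n(\delta)_{\mathrm{disc}}$, exploiting the right-continuity of $Y$.

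Concretely, for $m\in\N$ let $\tau_1^{(m)}:=2^{-m}\lceil 2^m\tau_1\rceil$ be the dyadic upper discretisation of $\tau_1$ (with $\tau_1^{(m)}:=\infty$ on $\{\tau_1=\infty\}$); this is an $\mathfrak{F}$-stopping time with countable range satisfying $\tau_1^{(m)}\geq\tau_1\geq\sigma$ and $\tau_1^{(m)}\downarrow\tau_1$. Raising $\tau_1$ may destroy the constraint $\tau_1+\delta_1\leq\tau_2$, so I restore it by propagating the shift:
\[
\tilde\tau_1^{(m)}:=\tau_1^{(m)},\qquad \tilde\tau_i^{(m)}:=\tau_i\vee\bigl(\tilde\tau_{i-1}^{(m)}+\delta_{i-1}\bigr),\quad i=2,\dots,n.
\]
Each $\tilde\tau_i^{(m)}$ is again an $\mathfrak{F}$-stopping time, being the maximum of $\tau_i$ and the sum of the $\mathfrak{F}$-stopping times $\tilde\tau_{i-1}^{(m)}$ and $\delta_{i-1}$; by construction $\tilde\tau_{i-1}^{(m)}+\delta_{i-1}\leq\tilde\tau_i^{(m)}$ and $\sigma\leq\tilde\tau_1^{(m)}$, while $\mathrm{range}(\tilde\tau_1^{(m)})$ is countable. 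Hence $(\tilde\tau_1^{(m)},\dots,\tilde\tau_n^{(m)})\in\mathcal{S}_\sigma^n(\delta)_{\mathrm{disc}}$. A short induction using $\tau_1^{(m)}\downarrow\tau_1$ together with $\tau_{i-1}+\delta_{i-1}\leq\tau_i$ (so that $\tau_i\vee(\tau_{i-1}+\delta_{i-1})=\tau_i$) shows $\tilde\tau_i^{(m)}\downarrow\tau_i$ as $m\to\infty$ for every $i$.

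Since $Y$ is right-continuous and $\tilde\tau_i^{(m)}\downarrow\tau_i$ from above, $Y(\tilde\tau_i^{(m)})\to Y(\tau_i)$ pointwise on $\Omega$ (on $\{\tau_i=\infty\}$ both sides equal $Y(+\infty)$). As $0\leq Y(\tilde\tau_i^{(m)})\leq\sup_{t\geq0}Y(t)$ with the latter integrable by \eqref{eigenschaft supremumsprozess int}, the conditional dominated convergence theorem gives
\[
\E\Bigl(\sum_{i=1}^nY\bigl(\tilde\tau_i^{(m)}\bigr)\,\Big|\,\mathcal{F}_\sigma\Bigr)\longrightarrow\E\Bigl(\sum_{i=1}^nY(\tau_i)\,\Big|\,\mathcal{F}_\sigma\Bigr)\quad\text{a.s.}
\]
The left-hand side is $\leq Z_n(\sigma)_{\mathrm{disc}}$ for every $m$, so the limit satisfies $\E(\sum_iY(\tau_i)\mid\mathcal{F}_\sigma)\leq Z_n(\sigma)_{\mathrm{disc}}$ a.s., and taking the essential supremum over $\tau\in\mathcal{S}_\sigma^n(\delta)$ yields $Z_n(\sigma)\leq Z_n(\sigma)_{\mathrm{disc}}$, hence equality. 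I expect the delicate point to be the feasibility of the approximants: the max-propagation must simultaneously re-establish every constraint $\tilde\tau_{i-1}^{(m)}+\delta_{i-1}\leq\tilde\tau_i^{(m)}$, keep each coordinate an $\mathfrak{F}$-stopping time, and still converge down to $\tau$. All three hinge on $\delta_1,\dots,\delta_{n-1}$ being $\mathfrak{F}$-stopping times, so that sums and maxima remain $\mathfrak{F}$-stopping times; this is precisely where the argument breaks in the general case, since there enlarging $\tau_1$ alters the enriched filtrations $\mathfrak{F}^{\tau_1+\delta_1},\dots$ and the restriction to countable range becomes genuinely binding.
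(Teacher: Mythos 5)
Your proof is correct and follows essentially the same route as the paper's: part (i) by inclusion of the strategy sets, and part (ii) by approximating $\tau_1$ from above with countable-range stopping times, restoring the refraction constraints via the max-propagation $\tilde\tau_i=\tau_i\vee(\tilde\tau_{i-1}+\delta_{i-1})$, and passing to the limit using right-continuity of $Y$ and the integrability bound \eqref{eigenschaft supremumsprozess int}. The only differences are cosmetic: you make the dyadic discretisation and the conditional dominated convergence explicit, where the paper simply asserts the existence of such a sequence and the limit interchange.
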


\begin{proof}
 $(i)$ is immediate since $\mathcal{S}_\sigma^n(\delta,\mathfrak{F})_{\text{disc}} \subseteq \mathcal{S}_\sigma^n(\delta,\mathfrak{F})$.\\
 For $(ii)$ let $\tau\in \mathcal{S}_\sigma^n(\delta,\mathfrak{F})$. Take a sequence $(\tau_k^1)_{k\in\N}$ of $\mathfrak{F}$-stopping times with countable range  and $\tau_k^1 \downarrow \tau_1 $ for $k\rightarrow \infty$. For each $k\in\N$ define
                \begin{equation}\label{eigenschaft tilde tau}
                   \tilde{\tau}_k^1:=\tau_k^1 \quad \text{and} \quad
                    \tilde{\tau}_k^{i+1}= \max\{  \tilde{\tau}_k^i+\delta_i, \tau_{i+1} \}.
                \end{equation}
                Then $(\tilde{\tau}_k^1,...,\tilde{\tau}_k^n)\in\mathcal{S}_\sigma^n(\delta)_{\text{disc}}$ for all $k\in\N$ and $(\tilde{\tau}_k^i)_{k\in\N}$ converges to $\tau_i$ a.s. for all $i\in\N_{\leq n}$. Therefore
                \[
                    \E\Bigr{(}\sum_{i=1}^nY(\tau_i)\Big{|}\mathcal{F}_\sigma\Bigl{)}
                    = \lim_{k\rightarrow \infty } \E\Bigr{(}\sum_{i=1}^nY(\tilde{\tau}_k^i)\Big{|}\mathcal{F}_\sigma\Bigl{)}
                    \leq  \underset{\rho\in \mathcal{S}_\sigma^n(\delta)_{\text{disc}}}{\esssup}\! \E\Bigr{(}\sum_{i=1}^nY(\rho_i)\Big{|}\mathcal{F}_\sigma\Bigl{)}\leq Z_n(\sigma)_{\text{disc}}.
                \]
\end{proof}


\section{Reduction principle for the problem}\label{reduction}

To develop the theory of multiple optimal stopping the following lemma is fundamental:

\begin{lem}\label{lemma oben gerich} The sets
    \[ \left\{\E\Bigr{(}\underset{i=1}{\overset{n}{\sum}}Y(\tau_i)\Big{|} \mathcal{F}_\sigma\Bigl{)};\; \tau\in\mathcal{S}_\sigma^{n}(\delta)\right\} \quad \text{and} \quad  \left\{\E\Bigr{(}\underset{i=1}{\overset{n}{\sum}}Y(\tau_i)\Big{|} \mathcal{F}_\sigma\Bigl{)};\; \tau\in\mathcal{S}_\sigma^{n}(\delta)_{\textnormal{disc}}\right\}
 \]
are directed upwards; here an ordered set $M$ is called directed upwards if for all $a,b\in M$ there exists $c\in M$ such that $\max\{a,b\}\leq c$.
\end{lem}

\begin{proof}
See \ref{proof lemma oben gerich}
\end{proof}

Now we can formulate the first step in the reduction of the problem:

\begin{satz}\label{sazu zuf wart ind schr}
    For each $\mathfrak{F}$-stopping time $\sigma$ it holds that
    \begin{equation}
         Z_{n}^{\delta,\mathfrak{F}}(\sigma)= \underset{\tau\in \mathcal{S}_\sigma}{\esssup} \; \E\Bigr{(} Y(\tau)+ \E\bigr{(}Z_{n-1}^{L\delta,\mathfrak{F}^{\tau+\delta_1}}(\tau+\delta_1)|\mathcal{F}_\tau \bigl{)}
                \Big{|}\mathcal{F}_\sigma \Bigl{)}
    \end{equation}
    and
    \begin{equation}\label{eigenschaft ind schritt zuf wart disk fall}
        Z_{n}^{\delta,\mathcal{F}}(\sigma)_{\textnormal{disc}}= \underset{\tau\in \mathcal{S}_{\sigma,\textnormal{disc}}}{\esssup} \; \E\Bigr{(} Y(\tau)+ \E\bigr{(}Z_{n-1}^{L\delta,\mathcal{F}^{\tau+\delta_1}}(\tau+\delta_1)|\mathcal{F}_\tau \bigl{)}  \Big{|}\mathcal{F}_\sigma \Bigl{)}.
    \end{equation}
\end{satz}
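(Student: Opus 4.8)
The plan is to establish the two displayed identities by proving the inequalities ``$\leq$'' and ``$\geq$'' separately, in each direction passing between the $n$-exercise problem and the $(n-1)$-exercise problem on the enriched filtration $\mathfrak{F}^{\tau+\delta_1}$ via the decomposition in Lemma~\ref{lemma eigenschaften stopp zufaellig}(i) together with the tower property. Throughout, the bound $\sum_{i=1}^n Y(\tau_i)\leq n\sup_{t\geq 0}Y(t)$ combined with \eqref{eigenschaft supremumsprozess int} guarantees that every conditional expectation occurring is well defined and finite, so that dominated and monotone convergence are available. I treat the non-discrete case in detail; the discrete case runs verbatim, the only change being that the first exercise time ranges over $\mathcal{S}_{\sigma,\textnormal{disc}}$.

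For ``$\leq$'', fix $(\tau_1,\dots,\tau_n)\in\mathcal{S}_\sigma^n(\delta,\mathfrak{F})$ and set $\tau:=\tau_1\in\mathcal{S}_\sigma$. By Lemma~\ref{lemma eigenschaften stopp zufaellig}(i) the tail $(\tau_2,\dots,\tau_n)$ lies in $\mathcal{S}_{\tau+\delta_1}^{n-1}(L\delta,\mathfrak{F}^{\tau+\delta_1})$, so by the very definition of $Z_{n-1}$ one has $\E(\sum_{i=2}^n Y(\tau_i)\,|\,\mathcal{F}^{\tau+\delta_1}_{\tau+\delta_1})\leq Z_{n-1}^{L\delta,\mathfrak{F}^{\tau+\delta_1}}(\tau+\delta_1)$. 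Since $Y(\tau)$ is $\mathcal{F}_\tau$-measurable and $\mathcal{F}_\sigma\subseteq\mathcal{F}_\tau\subseteq\mathcal{F}^{\tau+\delta_1}_{\tau+\delta_1}$, the tower property yields
\[
\E\Bigl(\sum_{i=1}^n Y(\tau_i)\,\Big|\,\mathcal{F}_\sigma\Bigr)
= \E\Bigl(Y(\tau)+\E\bigl(\sum_{i=2}^n Y(\tau_i)\,\big|\,\mathcal{F}_\tau\bigr)\,\Big|\,\mathcal{F}_\sigma\Bigr)
\leq \E\Bigl(Y(\tau)+\E\bigl(Z_{n-1}^{L\delta,\mathfrak{F}^{\tau+\delta_1}}(\tau+\delta_1)\,\big|\,\mathcal{F}_\tau\bigr)\,\Big|\,\mathcal{F}_\sigma\Bigr),
\]
where the inner bound is pushed from $\mathcal{F}^{\tau+\delta_1}_{\tau+\delta_1}$ down to $\mathcal{F}_\tau$ by a further application of the tower property. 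The right-hand side is one of the random variables over which the essential supremum in the claimed formula is taken, so passing to the essential supremum over $(\tau_1,\dots,\tau_n)$ gives ``$\leq$''.

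For ``$\geq$'', fix $\tau\in\mathcal{S}_\sigma$. Applying Lemma~\ref{lemma oben gerich} to the $(n-1)$-exercise problem in the filtration $\mathfrak{F}^{\tau+\delta_1}$ started at $\tau+\delta_1$, the defining family is directed upwards, so I may choose a sequence $(\rho_2^k,\dots,\rho_n^k)\in\mathcal{S}_{\tau+\delta_1}^{n-1}(L\delta,\mathfrak{F}^{\tau+\delta_1})$ with $\E(\sum_{i=2}^n Y(\rho_i^k)\,|\,\mathcal{F}^{\tau+\delta_1}_{\tau+\delta_1})\uparrow Z_{n-1}^{L\delta,\mathfrak{F}^{\tau+\delta_1}}(\tau+\delta_1)$ a.s. By Lemma~\ref{lemma eigenschaften stopp zufaellig}(i) each spliced strategy $(\tau,\rho_2^k,\dots,\rho_n^k)$ belongs to $\mathcal{S}_\sigma^n(\delta,\mathfrak{F})$, so $\E(Y(\tau)+\sum_{i=2}^n Y(\rho_i^k)\,|\,\mathcal{F}_\sigma)\leq Z_n(\sigma)$. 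Using $\mathcal{F}_\sigma\subseteq\mathcal{F}^{\tau+\delta_1}_{\tau+\delta_1}$, the tower property, and conditional monotone convergence, the left-hand sides increase a.s.\ to $\E(Y(\tau)+Z_{n-1}^{L\delta,\mathfrak{F}^{\tau+\delta_1}}(\tau+\delta_1)\,|\,\mathcal{F}_\sigma)$, which by the tower property equals $\E(Y(\tau)+\E(Z_{n-1}^{L\delta,\mathfrak{F}^{\tau+\delta_1}}(\tau+\delta_1)\,|\,\mathcal{F}_\tau)\,|\,\mathcal{F}_\sigma)$. Hence this quantity is $\leq Z_n(\sigma)$, and taking the essential supremum over $\tau\in\mathcal{S}_\sigma$ gives ``$\geq$''.

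The step I expect to be the main obstacle is the filtration bookkeeping rather than the convergence arguments: one must justify the chain of inclusions $\mathcal{F}_\sigma\subseteq\mathcal{F}_\tau\subseteq\mathcal{F}^{\tau+\delta_1}_{\tau+\delta_1}$, so that the tower property may be invoked in both directions, and verify that $Z_{n-1}^{L\delta,\mathfrak{F}^{\tau+\delta_1}}(\tau+\delta_1)$ is genuinely $\mathcal{F}^{\tau+\delta_1}_{\tau+\delta_1}$-measurable, which is what makes its interchange with conditional expectations legitimate. These inclusions rest on $\mathfrak{F}\subseteq\mathfrak{F}^{\tau+\delta_1}$ and on the properties of the enriched filtration collected in Proposition~\ref{satz eigenschaften von neuer filtration}. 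The discrete identity \eqref{eigenschaft ind schritt zuf wart disk fall} is then obtained by repeating the argument with $\mathcal{S}_\sigma^n(\delta,\mathfrak{F})_{\textnormal{disc}}$ and the second decomposition in Lemma~\ref{lemma eigenschaften stopp zufaellig}(i), noting that only the \emph{first} exercise time is required to have countable range, while the inner problem remains the non-discrete $Z_{n-1}$.
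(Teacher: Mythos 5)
Your proposal is correct and is essentially the paper's own argument: the upper bound via the decomposition of Lemma~\ref{lemma eigenschaften stopp zufaellig}(i), the inclusions $\mathcal{F}_\sigma\subseteq\mathcal{F}_\tau\subseteq\mathcal{F}^{\tau+\delta_1}_{\tau+\delta_1}$ and the tower property, and the lower bound by splicing a fixed $\tau\in\mathcal{S}_\sigma$ with an increasing maximizing sequence obtained from the directedness of Lemma~\ref{lemma oben gerich} and passing to the limit by conditional monotone convergence. If anything, your version states the key step more carefully than the printed proof: you apply the directedness lemma to the inner $(n-1)$-problem $Z_{n-1}^{L\delta,\mathfrak{F}^{\tau+\delta_1}}(\tau+\delta_1)$ in the enriched filtration, which is exactly what the monotone limit argument requires, whereas the paper's text nominally takes a maximizing sequence for $Z_n^{\delta,\mathfrak{F}}(\sigma)$ itself and splices its tails with $\tau_1$ --- an admissibility and monotonicity gap that your formulation repairs.
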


\begin{proof}We only prove the first statement, the second one may be proved in the same manner.\\
Let  $(\tau_1,...,\tau_n)\in\mathcal{S}_\sigma^n(\delta,\mathfrak{F})$.  By Lemma \ref{lemma eigenschaften stopp zufaellig} (i) it holds that $\tau_1\in\mathcal{S}_\sigma$ and $(\tau_2,...,\tau_n)\in \mathcal{S}_{\delta_1+\tau_1}^{n-1}(L\delta,\mathfrak{F}^{\tau_1+\delta_1})$. Therefore
            \begin{eqnarray*}
                \E\Bigr{(}\sum_{i=1}^{n}Y(\tau_i)\Big{|}\mathcal{F}_{\sigma}\Bigl{)}
                & = & \E\Bigr{[} Y(\tau_1) + \E\Bigr{(}\sum_{i=2}^{n}Y(\tau_i)\Big{|}\mathcal{F}_{\tau_1+\delta_1}\Bigl{)}\Big{|}\mathcal{F}_\sigma\Bigl{]}\\
                & \leq &  \E \Bigr{(} Y(\tau_1) + Z_{n-1}^{L\delta,\mathfrak{F}^{\tau_1+\delta_1}}(\tau_1+\delta_1) \Big{|}\mathcal{F}_\sigma\Bigl{)}\\
                & = &  \E \Bigr{[} Y(\tau_1) + \E\Bigr{(} Z_{n-1}^{L\delta,\mathfrak{F}^{\tau_1+\delta_1}}(\tau_1+\delta_1)\Big{|} \mathcal{F}_{\tau_1}\Bigl{)} \Big{|}\mathcal{F}_\sigma\Bigl{]}\\
                & \leq & \underset{\tau\in\mathcal{S}_\sigma}{\esssup}\; \E\Bigr{(} Y(\tau)+ \E\bigr{(}Z_{n-1}^{L\delta,\mathfrak{F}^{\tau+\delta_1}}(\tau+\delta_1)|\mathcal{F}_\tau \bigl{)}  \Big{|}\mathcal{F}_\sigma \Bigl{)}.
            \end{eqnarray*}
To prove the other inequality let $\tau_1\in \mathcal{S}_\sigma$.
Since the set
\[
    \{\E\Bigr{(}\underset{i=1}{\overset{n}{\sum}}Y(\tau_i)\Big{|} \mathcal{F}_\sigma\Bigl{)};\; \tau\in\mathcal{S}_\sigma^{n}(\delta)\}\]
is directed upwards by Lemma \ref{lemma oben gerich}, there exists a sequence $(\tau^k)_{k\in\N}\in(\mathcal{S}_\sigma^{n}(\delta))^\N$ such that
\[
    Z_n^{\delta,\mathfrak{F}}(\sigma)=
    \lim\limits_{k \rightarrow \infty} {\uparrow}\E\Bigr{(}\sum_{i=1}^nY(\tau_i^k)\Big{|}\mathcal{F}_\sigma\Bigl{)}
\]
see e.g. \cite[Lemma 1.3]{PS}.
             Since $(\tau_1,\tau_2^k,...,\tau_{n}^k)\in \mathcal{S}_\sigma^n(\delta,\mathfrak{F})$ for all $k\in\N$ by Lemma \ref{lemma eigenschaften stopp zufaellig} (i) we obtain
            \begin{eqnarray*} \qquad\qquad\qquad\;\,
                Z_{n}^{\delta,\mathfrak{F}}(\sigma) & \geq & \limsup_{k\rightarrow \infty} \; \E\Bigr{(}Y(\tau_1) + \sum_{i=2}^{n}Y(\tau_i^k)\Big{|}\mathcal{F}_{\sigma}\Bigl{)}\\
                & = & \limsup_{k\rightarrow \infty} \; \E\Bigr{[}Y(\tau_1) + \E\Bigr{(}\sum_{i=2}^{n}Y(\tau_i^k)\Big{|}\mathcal{F}_{\tau_1+\delta}\Bigl{)}\Big{|} \mathcal{F}_\sigma \Bigl{]}\\
                & = &
                \E\Bigr{[}Y(\tau_1) + \lim_{k\rightarrow \infty} \uparrow \E\Bigr{(}\sum_{i=2}^{n}Y(\tau_i^k)\Big{|}\mathcal{F}_{\tau_1+\delta}\Bigl{)}\Big{|} \mathcal{F}_\sigma \Bigl{]}\\
                & = &  \E\Bigr{(}Y(\tau_1) + Z_{n-1}^{L\delta,\mathfrak{F}^{\tau_1+\delta_1}}(\tau_1+\delta_1) \Big{|} \mathcal{F}_\sigma \Bigl{)}\\
                & = &  \E\Bigr{(}Y(\tau_1) + \E\bigr{(} Z_{n-1}^{L\delta,\mathfrak{F}^{\tau_1+\delta_1}}(\tau_1+\delta_1)  | \mathcal{F}_{\tau_1} \bigl{)}\Big{|} \mathcal{F}_\sigma \Bigl{)}.
            \end{eqnarray*}

\end{proof}

A technical problem to use the reduction theorem given above is that the mapping $t\mapsto \E\bigr{(}Z_{n-1}^{L\delta,\mathfrak{F}^{t+\delta_1}}(t+\delta_1)|\mathcal{F}_t \bigl{)}$ does not have to be right-continuous. We overcome this problem by giving a right-continuous modification, compare  \cite{Car} . To this end we need the following lemmas:

\begin{lem}\label{lemma abz stoppteiten filtration}
    For each $\mathfrak{F}$-stopping time $\tau$ and all $\lambda{\geq 0}$ it holds that
    \[
        1_{\{\tau=\lambda\}} \, Z_{n-1}^{L\delta,\mathfrak{F}^{\tau+\delta_1}}(\tau+\delta_1) =  1_{\{\tau=\lambda\}} \, Z_{n-1}^{L\delta,\mathfrak{F}^{\lambda+\delta_1}}(\lambda+\delta_1).
    \]
\end{lem}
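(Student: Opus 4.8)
The plan is to prove the two inequalities $\geq$ and $\leq$ separately, in each case transporting an admissible strategy from one filtration to the other by means of Lemma~\ref{lemma eigenschaften stopp zufaellig} and exploiting that the two conditioning $\sigma$-algebras coincide on the event $\{\tau=\lambda\}$. Throughout write $D:=\{\tau=\lambda\}$, $S_1:=\tau+\delta_1$, $S_2:=\lambda+\delta_1$, $\mathfrak{A}:=\mathfrak{F}^{\tau+\delta_1}$ and $\mathfrak{B}:=\mathfrak{F}^{\lambda+\delta_1}$, and note that $S_1=S_2$ on $D$. By Definition~\ref{defi stoppproblem zuf}, $Z_{n-1}^{L\delta,\mathfrak{A}}(S_1)$ is the essential supremum of $\E(\sum_{i=2}^{n}Y(\rho_i)\mid\mathcal{A}_{S_1})$ over $\rho\in\mathcal{S}_{S_1}^{n-1}(L\delta,\mathfrak{A})$, and analogously for $S_2,\mathfrak{B}$; all sums appearing are integrable by \eqref{eigenschaft supremumsprozess int}.

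First I would isolate an elementary fact: if $\mathcal{G}_1,\mathcal{G}_2$ are $\sigma$-algebras with $D\in\mathcal{G}_1\cap\mathcal{G}_2$ and equal traces $\mathcal{G}_1\mid_D=\mathcal{G}_2\mid_D$, and if $X_1,X_2$ are integrable with $X_1=X_2$ on $D$, then $1_D\,\E(X_1\mid\mathcal{G}_1)=1_D\,\E(X_2\mid\mathcal{G}_2)$. This follows by checking that $1_D\,\E(X_2\mid\mathcal{G}_2)$ is $\mathcal{G}_1$-measurable (using $\mathcal{G}_2\mid_D\subseteq\mathcal{G}_1\mid_D$ and $D\in\mathcal{G}_1$) and satisfies the defining integral identity of $\E(1_DX_1\mid\mathcal{G}_1)$ (using $\mathcal{G}_1\mid_D\subseteq\mathcal{G}_2\mid_D$ and $1_DX_1=1_DX_2$). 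To apply this with $\mathcal{G}_1=\mathcal{A}_{S_1}$ and $\mathcal{G}_2=\mathcal{B}_{S_2}$ I need (A) $D\in\mathcal{A}_{S_1}\cap\mathcal{B}_{S_2}$ and (B) $\mathcal{A}_{S_1}\mid_D=\mathcal{B}_{S_2}\mid_D$.

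For (A), $\tau$ is an $\mathfrak{F}$- hence $\mathfrak{A}$-stopping time with $\tau\leq S_1$, so $D\in\mathcal{A}_\tau\subseteq\mathcal{A}_{S_1}$; for membership in $\mathcal{B}_{S_2}$ one verifies $D\cap\{S_2\leq t\}\in\mathcal{B}_t$ by distinguishing $\lambda\leq t$ (where $D\in\mathcal{F}_\lambda\subseteq\mathcal{B}_t$) and $\lambda>t$ (where $\{S_2\leq t\}=\emptyset$). The hard part will be (B), namely lifting the trace identity $\mathcal{F}^{\tau+\delta_1}_t\mid_D=\mathcal{F}^{\lambda+\delta_1}_t\mid_D$ of Proposition~\ref{satz eigenschaften von neuer filtration}(iii) (applied with $\sigma=\tau$, the constant stopping time $\lambda$, and $\rho=\delta_1$) from fixed times $t$ to the stopping-time $\sigma$-algebras. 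To this end, for $A\in\mathcal{A}_{S_1}$ I would use $S_1=S_2$ on $D$ to write $(A\cap D)\cap\{S_2\leq t\}=(A\cap\{S_1\leq t\})\cap D$; since $A\cap\{S_1\leq t\}\in\mathcal{A}_t$, the trace identity gives $(A\cap\{S_1\leq t\})\cap D=C\cap D$ with $C\in\mathcal{B}_t$, and this set lies in $\mathcal{B}_t$ because either $\lambda\leq t$ (so $D\in\mathcal{F}_\lambda\subseteq\mathcal{B}_t$) or $\lambda>t$ (so $\{S_2\leq t\}=\emptyset$ renders it empty). Hence $A\cap D\in\mathcal{B}_{S_2}$, which yields $\mathcal{A}_{S_1}\mid_D\subseteq\mathcal{B}_{S_2}\mid_D$; the reverse inclusion is entirely symmetric.

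With (A) and (B) available the two inequalities follow quickly. For $\leq$, fix $\rho\in\mathcal{S}_{S_1}^{n-1}(L\delta,\mathfrak{A})$ and let $\tilde\rho$ be the strategy furnished by Lemma~\ref{lemma eigenschaften stopp zufaellig}(ii), which lies in $\mathcal{S}_{S_2}^{n-1}(L\delta,\mathfrak{B})$ and satisfies $\tilde\rho_i=\rho_i$, hence $\sum_{i=2}^{n}Y(\tilde\rho_i)=\sum_{i=2}^{n}Y(\rho_i)$, on $D$. The measure-theoretic fact then gives $1_D\,\E(\sum_{i=2}^{n}Y(\rho_i)\mid\mathcal{A}_{S_1})=1_D\,\E(\sum_{i=2}^{n}Y(\tilde\rho_i)\mid\mathcal{B}_{S_2})\leq 1_D\,Z_{n-1}^{L\delta,\mathfrak{B}}(S_2)$. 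Taking the essential supremum over $\rho$, and using that multiplication by $1_D$ commutes with the essential supremum of the nonnegative family, yields $1_D\,Z_{n-1}^{L\delta,\mathfrak{A}}(S_1)\leq 1_D\,Z_{n-1}^{L\delta,\mathfrak{B}}(S_2)$. The reverse inequality is obtained identically, starting from an arbitrary strategy for $\mathfrak{B}$ and applying Lemma~\ref{lemma eigenschaften stopp zufaellig}(iii) to transport it to an admissible strategy for $\mathfrak{A}$ that agrees with it on $D$. Combining the two inequalities gives the asserted identity.
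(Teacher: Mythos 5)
Your proposal is correct and takes essentially the same route as the paper: transport an arbitrary admissible strategy across the two filtrations via Lemma~\ref{lemma eigenschaften stopp zufaellig}(ii) (resp.\ (iii) for the reverse inequality), use the trace identity of Proposition~\ref{satz eigenschaften von neuer filtration}(iii) to identify the two conditioning $\sigma$-algebras on $\{\tau=\lambda\}$, and conclude by taking essential suprema. The only difference is one of rigor, in your favor: you make explicit both the lifting of the trace identity from fixed times $t$ to the stopping-time $\sigma$-algebras $\mathcal{F}^{\tau+\delta_1}_{\tau+\delta_1}$, $\mathcal{F}^{\lambda+\delta_1}_{\lambda+\delta_1}$ and the measure-theoretic fact about conditional expectations given $\sigma$-algebras with equal traces, both of which the paper's proof uses implicitly.
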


\begin{proof}
Let $(\tau_2,...,\tau_n)\in\mathcal{S}_{\tau+\delta_1}^{n-1}(L\delta,\mathfrak{F}^{\tau+\delta_1})$. By Lemma \ref{lemma eigenschaften stopp zufaellig}.(ii) the random vector
            \[
                (\rho_2,...,\rho_n):=\Bigr{(}  \tau_2 1_{\{\tau=\lambda\}} +  (\lambda+ \delta_1)
                 1_{\{\tau\neq\lambda\}},..., \tau_n 1_{\{\tau=\lambda\}} + (\lambda+\delta_1+...+\delta_{n-1})
                 1_{\{\tau\neq\lambda\}}  \Bigl{)}
             \]
              is an element of $\mathcal{S}_{\lambda+\delta_1}^{n-1}(L\delta,\mathfrak{F}^{\lambda+\delta_1})$.
             Using this fact together with Proposition \ref{satz eigenschaften von neuer filtration}.(iii) we obtain

            \begin{eqnarray*}
                 1_{\{\tau=\lambda\}} \, \E\Bigr{(} \sum_{i=2}^n Y(\tau_i)
                  \Big{|} \mathcal{F}^{\tau+\delta_1}_{\tau+\delta_1}\Bigl{)}
                & = &  \E\Bigr{(} \sum_{i=2}^n 1_{\{\tau=\lambda\}} \, Y(\tau_i)
                  \Big{|} \mathcal{F}^{\tau+\delta_1}_{\tau+\delta_1}\mid_{ \{\tau=\lambda\}}\Bigl{)}\\
                & = &  \E\Bigr{(} \sum_{i=2}^n 1_{\{\tau=\lambda\}} \, Y(\rho_i)
                  \Big{|} \mathcal{F}^{\lambda+\delta_1}_{\lambda+\delta_1}\mid_{ \{\tau=\lambda\}}\Bigl{)}\\
                & = &  1_{\{\tau=\lambda\}} \, \E\Bigr{(} \sum_{i=2}^n Y(\rho_i)
                  \Big{|}\mathcal{F}^{\lambda+\delta_1}_{\lambda+\delta_1}\Bigl{)}\\
                & \leq & 1_{\{\tau=\lambda\}} \,
                    Z_{n-1}^{L\delta,\mathfrak{F}^{\lambda+\delta_1}}(\lambda+\delta_1).
            \end{eqnarray*}
The reverse inequality holds by the same argument using Lemma \ref{lemma eigenschaften stopp zufaellig}.(iii).
\end{proof}

\begin{lem}\label{defilem zuf wart supermart}
    The process $\overline{Z}_{n-1}^{\delta}$ given by
    \begin{equation}\label{defi zuf wart Zoverline}
        \overline{Z}_{n-1}^{\delta}(t):=
        \E\bigr{(}Z_{n-1}^{L\delta,\mathfrak{F}^{t+\delta_1}}(t+\delta_1)|\mathcal{F}_t \bigl{)}
    \end{equation}
    is a supermartingale with the following properties:
    \begin{enumerate}[(i)]
        \item
            $\overline{Z}_{n-1}^{\delta}$ has a right-continuous modification
            $\overline{Z}_{n-1}^{\delta,r}$.
        \item For all $\mathfrak{F}$-stopping times $\tau$ with countable range it holds that
            \begin{equation}\label{eigenschaft zuf wart bed erw}
                \overline{Z}_{n-1}^{\delta,r}(\tau)= \overline{Z}_{n-1}^{\delta}(\tau)=
                \E\bigr{(}Z_{n-1}^{L\delta,\mathfrak{F}^{\tau+\delta_1}}(\tau+\delta_1)|\mathcal{F}_\tau \bigl{)}.
            \end{equation}
    \end{enumerate}
\end{lem}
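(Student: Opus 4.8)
The plan is to verify the three assertions in turn: first that $\overline{Z}_{n-1}^{\delta}$ is an $\mathfrak{F}$-supermartingale, then to extract the right-continuous modification via the classical regularization theorem, and finally to identify the values of that modification at countably-valued stopping times. Throughout I read the objects $\mathcal{S}^{n-1}_{\cdot}(L\delta,\cdot)$, $Z_{n-1}^{L\delta,\cdot}$ and Lemmas \ref{lemma oben gerich} and \ref{lemma eigenschaften stopp zufaellig} as applying to a general base filtration satisfying the usual conditions (such as $\mathfrak{F}^{t+\delta_1}$), as the notation already indicates.

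For the supermartingale property, integrability and non-negativity follow from $0\le\sum_{i=2}^{n}Y(\tau_i)\le(n-1)\sup_{s}Y(s)$ together with \eqref{eigenschaft supremumsprozess int}, and $\overline{Z}_{n-1}^{\delta}(t)$ is $\mathcal{F}_t$-measurable by construction. The crucial point for $s\le t$ is the strategy-set inclusion $\mathcal{S}_{t+\delta_1}^{n-1}(L\delta,\mathfrak{F}^{t+\delta_1})\subseteq\mathcal{S}_{s+\delta_1}^{n-1}(L\delta,\mathfrak{F}^{s+\delta_1})$. Indeed, Proposition \ref{satz eigenschaften von neuer filtration}(iv), applied with the constants $s\le t$ (countable range) and $\rho=\delta_1$, gives $\mathcal{F}^{t+\delta_1}_u\subseteq\mathcal{F}^{s+\delta_1}_u$ for all $u$; since the enrichment $\mathfrak{A}\mapsto\mathfrak{A}^{\rho}$ is monotone in $\mathfrak{A}$ (it preserves inclusions under $\sigma(\mathcal{A}_u,\mathcal{G}^{\rho}_u)$ and under passage to the right-continuous version), an induction shows all further enriched filtrations for the $t$-problem lie inside those for the $s$-problem, while the order constraints persist because $\tau_2\ge t+\delta_1\ge s+\delta_1$. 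Given the inclusion, I pick by the upward directedness of Lemma \ref{lemma oben gerich} a sequence $(\tau^k)$ with $\E(\sum_{i=2}^{n}Y(\tau_i^k)\mid\mathcal{F}^{t+\delta_1}_{t+\delta_1})\uparrow Z_{n-1}^{L\delta,\mathfrak{F}^{t+\delta_1}}(t+\delta_1)$. Using $\mathcal{F}_s\subseteq\mathcal{F}_t\subseteq\mathcal{F}^{t+\delta_1}_{t+\delta_1}$, the tower property and conditional monotone convergence give $\E(\overline{Z}_{n-1}^{\delta}(t)\mid\mathcal{F}_s)=\lim_k\E(\sum_iY(\tau_i^k)\mid\mathcal{F}_s)$; and viewing each $\tau^k$ as an element of the $s$-problem yields $\E(\sum_iY(\tau_i^k)\mid\mathcal{F}_s)=\E(\E(\sum_iY(\tau_i^k)\mid\mathcal{F}^{s+\delta_1}_{s+\delta_1})\mid\mathcal{F}_s)\le\overline{Z}_{n-1}^{\delta}(s)$. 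Passing to the limit establishes $\E(\overline{Z}_{n-1}^{\delta}(t)\mid\mathcal{F}_s)\le\overline{Z}_{n-1}^{\delta}(s)$.

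For (i), the standard regularization theorem for supermartingales under the usual conditions guarantees a right-continuous (càdlàg) modification precisely when $t\mapsto\E(\overline{Z}_{n-1}^{\delta}(t))=\E(Z_{n-1}^{L\delta,\mathfrak{F}^{t+\delta_1}}(t+\delta_1))$ is right-continuous; by the supermartingale property this map is non-increasing, so only right-continuity from above is at stake. I fix $t$ and $t_m\downarrow t$, and, rewriting the expectation of the essential supremum as a genuine supremum over strategies (upward directedness again), choose for given $\varepsilon>0$ a strategy $\tau$ for the $t$-problem with $\E(\sum_{i=2}^{n}Y(\tau_i))\ge\E(\overline{Z}_{n-1}^{\delta}(t))-\varepsilon$. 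Shifting it by $c=t_m-t$ via Lemma \ref{lemma eigenschaften stopp zufaellig}(iv) gives an admissible strategy for the $t_m$-problem with reward $\sum_iY(\tau_i+(t_m-t))$; right-continuity of $Y$ and dominated convergence (dominant $(n-1)\sup_sY(s)$) yield $\E(\sum_iY(\tau_i+(t_m-t)))\to\E(\sum_iY(\tau_i))$, so $\liminf_m\E(\overline{Z}_{n-1}^{\delta}(t_m))\ge\E(\overline{Z}_{n-1}^{\delta}(t))-\varepsilon$. Letting $\varepsilon\to0$ and combining with monotonicity gives the required right-continuity. I expect this to be the main obstacle: it is exactly the failure of pathwise right-continuity flagged before the lemma, and it is recovered only through the shifting lemma together with the path regularity of $Y$.

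For (ii), let $\tau$ have countable range $\{\lambda_j\}$. On $\{\tau=\lambda_j\}$ one has $\overline{Z}_{n-1}^{\delta,r}(\tau)=\overline{Z}_{n-1}^{\delta,r}(\lambda_j)$, and since $\overline{Z}_{n-1}^{\delta,r}$ agrees a.s. with $\overline{Z}_{n-1}^{\delta}$ at each of the countably many deterministic times $\lambda_j$ simultaneously, this gives $\overline{Z}_{n-1}^{\delta,r}(\tau)=\sum_j 1_{\{\tau=\lambda_j\}}\overline{Z}_{n-1}^{\delta}(\lambda_j)=\overline{Z}_{n-1}^{\delta}(\tau)$ almost surely. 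For the final equality I combine Lemma \ref{lemma abz stoppteiten filtration}, namely $1_{\{\tau=\lambda_j\}}Z_{n-1}^{L\delta,\mathfrak{F}^{\tau+\delta_1}}(\tau+\delta_1)=1_{\{\tau=\lambda_j\}}Z_{n-1}^{L\delta,\mathfrak{F}^{\lambda_j+\delta_1}}(\lambda_j+\delta_1)$, with the standard identity $1_{\{\tau=\lambda_j\}}\E(\,\cdot\mid\mathcal{F}_\tau)=1_{\{\tau=\lambda_j\}}\E(\,\cdot\mid\mathcal{F}_{\lambda_j})$ (the traces of $\mathcal{F}_\tau$ and $\mathcal{F}_{\lambda_j}$ on $\{\tau=\lambda_j\}$ coincide); this produces $1_{\{\tau=\lambda_j\}}\E(Z_{n-1}^{L\delta,\mathfrak{F}^{\tau+\delta_1}}(\tau+\delta_1)\mid\mathcal{F}_\tau)=1_{\{\tau=\lambda_j\}}\overline{Z}_{n-1}^{\delta}(\lambda_j)$, and summing over $j$ identifies $\E(Z_{n-1}^{L\delta,\mathfrak{F}^{\tau+\delta_1}}(\tau+\delta_1)\mid\mathcal{F}_\tau)$ with $\overline{Z}_{n-1}^{\delta}(\tau)=\overline{Z}_{n-1}^{\delta,r}(\tau)$, completing the proof.
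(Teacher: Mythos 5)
Your proof is correct and follows essentially the same route as the paper's: the supermartingale property via the filtration comparison of Proposition \ref{satz eigenschaften von neuer filtration}(iv) together with the upward directedness of Lemma \ref{lemma oben gerich}, part (i) via the shift Lemma \ref{lemma eigenschaften stopp zufaellig}(iv) combined with right-continuity of $Y$, dominated convergence and the standard regularization theorem for supermartingales, and part (ii) via Lemma \ref{lemma abz stoppteiten filtration} and localization on the events $\{\tau=\lambda\}$. The only differences are presentational: you merge the paper's two-step comparison (first enlarging the filtration at fixed starting time $t+\delta_1$, then lowering the starting time to $s+\delta_1$) into a single strategy-set inclusion, and you work with explicit increasing or $\varepsilon$-optimal sequences of strategies where the paper manipulates the essential suprema directly.
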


\begin{proof}
      For all $s\leq t$ and $r\geq 0$ using Theorem \ref{satz eigenschaften von neuer filtration}.(iv) it holds that
      \[
        \mathcal{F}_r^{t+\delta}\subseteq \mathcal{F}_r^{s+\delta},
      \]
      hence
            \begin{eqnarray*}
                \E( \overline{Z}_{n-1}^{\delta}(t)|\mathcal{F}_s)
                & = &   \E( Z_{n-1}^{L\delta,\mathfrak{F}^{t+\delta_1}}(t+\delta_1) |\mathcal{F}_s)\\
                & \leq & \E( Z_{n-1}^{L\delta,\mathfrak{F}^{s+\delta_1}}(t+\delta_1) |\mathcal{F}_s)\\
                & = & \E\Bigr{(} \E\Bigr{[} Z_{n-1}^{L\delta,\mathfrak{F}^{s+\delta_1}}(t+\delta_1) \Big{|}\mathcal{F}_{s+\delta_1}^{s+\delta_1}\Bigl{]} \Big{|} \mathcal{F}_s \Bigl{)}\\
                & = & \E\Bigr{(} \E\Bigr{[} \underset{\tau\in \mathcal{S}_{t+\delta_1}^{n-1}(L\delta,\mathfrak{F}^{s+\delta_1})}{\esssup} \E\Bigr{(}\sum_{i=1}^{n-1}Y(\tau_i)\Big{|}\mathcal{F}_{t+\delta_1}^{s+\delta_1}\Bigl{)} \Big{|}\mathcal{F}_{s+\delta_1}^{s+\delta_1}\Bigl{]} \Big{|} \mathcal{F}_s \Bigl{)}.
            \end{eqnarray*}
            Since the set $\Bigr{\{} \E\Bigr{(}\underset{i=1}{\overset{n-1}{\sum}}Y(\tau_i)\Big{|}\mathcal{F}_{t+\delta_1}^{s+\delta_1}\Bigl{)};\; \tau\in \mathcal{S}_{t+\delta_1}^{n-1}(L\delta,\mathfrak{F}^{s+\delta_1})
             \Bigl{\}}$ is upwards directed by Lemma \ref{lemma oben gerich}, we obtain
            \begin{eqnarray*}
                 \E( \overline{Z}_{n-1}^{\delta}(t)|\mathcal{F}_s)
                & \leq  & \E\Bigr{(}  \underset{\tau\in \mathcal{S}_{t+\delta_1}^{n-1}(L\delta,\mathfrak{F}^{s+\delta_1})}{\esssup} \E\Bigr{(}\sum_{i=1}^{n-1}Y(\tau_i)\Big{|}\mathcal{F}_{s+\delta_1}^{s+\delta_1}\Bigl{)} \Big{|} \mathcal{F}_s \Bigl{)}\\
                & \leq & \E\Bigr{(}  \underset{\tau\in \mathcal{S}_{s+\delta_1}^{n-1}(L\delta,\mathfrak{F}^{s+\delta_1})}{\esssup} \E\Bigr{(}\sum_{i=1}^{n-1}Y(\tau_i)\Big{|}\mathcal{F}_{s+\delta_1}^{s+\delta_1}\Bigl{)} \Big{|} \mathcal{F}_s \Bigl{)}\\
                & = & \overline{Z}_{n-1}^{\delta}(s),
            \end{eqnarray*}
            i.e. the supermartingale property.
      \begin{description}
            \item{$\text{(i):}$} Let $t\geq 0$ and $(t_k)_{k\in\N}$ a sequence converging to $t$ from above. Since $\overline{Z}_{n-1}^{\delta}$ is a supermartingal, the limit $\lim_{k\rightarrow \infty} \E \Bigr{(} \overline{Z}_{n-1}^{\delta}(t_k) \Bigl{)}$ exists and is bounded above by  $ \E \Bigr{(}\overline{Z}_{n-1}^{\delta}(t) \Bigl{)}$. For $(\tau_2,...,\tau_n)\in \mathcal{S}_{t+\delta_1}^{n-1}(L\delta,\mathfrak{F}^{t+\delta_1})$ write
                \[
                   \Delta_k :=  t_k - t \quad \text{und} \quad  \tau^k:= \ (\tau_2 + \Delta_k ,..., \tau_n + \Delta_k)
                \]
                 for all $k\in\N$.
                By Lemma \ref{lemma eigenschaften stopp zufaellig}.(iv) $\tau_k\in\mathcal{S}_{t+ \Delta_k +\delta_1}^{n-1}(L\delta,\mathfrak{F}^{t+\Delta_k +\delta_1})=\mathcal{S}_{t_k +\delta_1}^{n-1}(L\delta,\mathfrak{F}^{t_k +\delta_1})$ for all $k\in \N$, yielding the inequality
                \[
                    \E \, \sum_{i=2}^n Y(\tau_i)
                     =  \lim_{k\rightarrow \infty} \E \, \sum_{i=2}^n Y(\tau_i^k)
                     \leq  \lim_{k\rightarrow \infty} \E \Bigr{(} Z_{n-1}^{L\delta,\mathfrak{F}^{t_k +\delta_1}}(t_k + \delta_1)\Bigl{)}
                     =  \lim_{k\rightarrow \infty} \E \Bigr{(} \overline{Z}_{n-1}^{\delta}(t_k) \Bigl{)}.
                \]
            Putting pieces together we obtain $ \E \Bigr{(} \overline{Z}_{n-1}^{\delta}(t_k) \Bigl{)}\rightarrow \E \Bigr{(}\overline{Z}_{n-1}^{\delta}(t) \Bigl{)}$.
            \item{${\text{(ii):}}$} For all $\lambda{\geq 0}$ we obtain using Lemma \ref{lemma abz stoppteiten filtration}
                \begin{eqnarray*}
                    1_{\{ \tau=\lambda\}} \, \overline{Z}_{n-1}^{\delta}(\tau)
                    & = &  1_{\{ \tau=\lambda\}} \, \overline{Z}_{n-1}^{\delta}(\lambda)\\
                    & = &  1_{\{ \tau=\lambda\}} \, \E\bigr{(}Z_{n-1}^{L\delta,\mathfrak{F}^{\lambda+\delta_1}}(\lambda+\delta_1)|\mathcal{F}_\lambda \bigl{)}\\
                    & = & \E\bigr{(} 1_{\{ \tau=\lambda\}} \, Z_{n-1}^{L\delta,\mathfrak{F}^{\lambda+\delta_1}}(\lambda+\delta_1)| \mathcal{F}_\lambda\mid_{\{ \tau=\lambda\}} \bigl{)}\\
                    & = &  \E\bigr{(} 1_{\{ \tau=\lambda\}} \, Z_{n-1}^{L\delta,\mathfrak{F}^{\tau+\delta_1}}(\tau+\delta_1)| \mathcal{F}_\tau\mid_{\{ \tau=\lambda\}} \bigl{)}\\
                    & = & 1_{\{ \tau=\lambda\}} \, \E\bigr{(}Z_{n-1}^{L\delta,\mathfrak{F}^{\tau+\delta_1}}(\tau+\delta_1)|\mathcal{F}_\tau \bigl{)}.
                \end{eqnarray*}
                Since $\tau$ has countable range we obtain (\ref{eigenschaft zuf wart bed erw}).
       \end{description}
\end{proof}

Now we come to the main result:
\begin{satz}
    The process $Y_{n}:= Y+ \overline{Z}_{n-1}^{\delta,r}$ is $\mathfrak{F}$-adapted, right continuous and fulfills
            \begin{equation}
                Z_{n}^{\delta,\mathfrak{F}}(\sigma)_{\textnormal{disc}}= \underset{\tau\in \mathcal{S}_\sigma}{\esssup}\; \E(Y_{n}(\tau)|\mathcal{F}_\sigma)
            \end{equation}
              for all $\sigma\in\mathcal{S}$.\\
              If $\delta_1,...,\delta_{n-1}$ are $\mathcal{F}$-stopping times, then furthermore
              \begin{equation}\label{eigenschaft red kl fall zu wart}
                Z_{n}^{\delta,\mathfrak{F}}(\sigma)= \underset{\tau\in \mathcal{S}_\sigma}{\esssup}\; \E(Y_{n}(\tau)|\mathcal{F}_\sigma).
            \end{equation}
\end{satz}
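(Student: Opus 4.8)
The plan is to read off the theorem from the induction step already established in Theorem~\ref{sazu zuf wart ind schr} by rewriting its integrand in terms of the modified process $Y_n$, and then to upgrade the resulting identity from discrete to arbitrary stopping times by a right-continuity approximation. Adaptedness and right-continuity of $Y_n$ are immediate: $Y$ is $\mathfrak{F}$-adapted and right-continuous by assumption, while $\overline{Z}_{n-1}^{\delta,r}$ is by construction (Lemma~\ref{defilem zuf wart supermart}.(i)) a right-continuous modification of the supermartingale $\overline{Z}_{n-1}^{\delta}$, hence $\mathfrak{F}$-adapted and right-continuous; therefore so is their sum $Y_n$.

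For the first displayed identity I would start from the discrete reduction formula \eqref{eigenschaft ind schritt zuf wart disk fall}. For every $\tau\in\mathcal{S}_{\sigma,\textnormal{disc}}$ the time $\tau$ has countable range, so Lemma~\ref{defilem zuf wart supermart}.(ii) gives $\E(Z_{n-1}^{L\delta,\mathfrak{F}^{\tau+\delta_1}}(\tau+\delta_1)\mid\mathcal{F}_\tau)=\overline{Z}_{n-1}^{\delta,r}(\tau)$, whence the integrand in \eqref{eigenschaft ind schritt zuf wart disk fall} equals $\E(Y(\tau)+\overline{Z}_{n-1}^{\delta,r}(\tau)\mid\mathcal{F}_\sigma)=\E(Y_n(\tau)\mid\mathcal{F}_\sigma)$. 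This rewrites \eqref{eigenschaft ind schritt zuf wart disk fall} as $Z_n^{\delta,\mathfrak{F}}(\sigma)_{\textnormal{disc}}=\esssup_{\tau\in\mathcal{S}_{\sigma,\textnormal{disc}}}\E(Y_n(\tau)\mid\mathcal{F}_\sigma)$.

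It remains to replace $\mathcal{S}_{\sigma,\textnormal{disc}}$ by $\mathcal{S}_\sigma$ in this last essential supremum. The inequality $\geq$ is trivial since $\mathcal{S}_{\sigma,\textnormal{disc}}\subseteq\mathcal{S}_\sigma$. For $\leq$ I would mimic the approximation in the proof of Proposition~\ref{satz zus zuf wart disk nicht disk}: given $\tau\in\mathcal{S}_\sigma$, choose countable-range stopping times $\tau_k\downarrow\tau$ with $\tau_k\in\mathcal{S}_{\sigma,\textnormal{disc}}$; by right-continuity of $Y_n$ one has $Y_n(\tau_k)\to Y_n(\tau)$ a.s., and passing this convergence through $\E(\,\cdot\mid\mathcal{F}_\sigma)$ yields $\E(Y_n(\tau)\mid\mathcal{F}_\sigma)\le\esssup_{\rho\in\mathcal{S}_{\sigma,\textnormal{disc}}}\E(Y_n(\rho)\mid\mathcal{F}_\sigma)$, so that the two suprema coincide. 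I expect this interchange of limit and conditional expectation to be the main obstacle, and I would justify it by uniform integrability. Since $0\le\overline{Z}_{n-1}^{\delta,r}(t)\le\E((n-1)\sup_{s}Y(s)\mid\mathcal{F}_t)=:M(t)$ and $M$ is the right-continuous, uniformly integrable martingale closing at the integrable variable $(n-1)\sup_{s}Y(s)$ by \eqref{eigenschaft supremumsprozess int}, optional sampling shows that the family $\{M(\tau_k)\}_k=\{\E((n-1)\sup_{s}Y(s)\mid\mathcal{F}_{\tau_k})\}_k$ is uniformly integrable; combined with the fixed integrable bound $\sup_{s}Y(s)$ controlling the $Y$-part, this makes $\{Y_n(\tau_k)\}_k$ uniformly integrable, so the a.s. convergence upgrades to $L^1$ convergence and transfers through the conditional expectation.

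Finally, for the second identity \eqref{eigenschaft red kl fall zu wart} I would simply invoke Proposition~\ref{satz zus zuf wart disk nicht disk}.(ii): when $\delta_1,\dots,\delta_{n-1}$ are $\mathfrak{F}$-stopping times one has $Z_n^{\delta,\mathfrak{F}}(\sigma)=Z_n^{\delta,\mathfrak{F}}(\sigma)_{\textnormal{disc}}$, so the already proved first identity immediately gives $Z_n^{\delta,\mathfrak{F}}(\sigma)=\esssup_{\tau\in\mathcal{S}_\sigma}\E(Y_n(\tau)\mid\mathcal{F}_\sigma)$.
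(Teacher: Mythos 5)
Your proposal is correct and takes essentially the same route as the paper: the paper's proof likewise combines the discrete reduction formula \eqref{eigenschaft ind schritt zuf wart disk fall} with Lemma \ref{defilem zuf wart supermart}.(ii) to rewrite the integrand as $\E(Y_n(\tau)\mid\mathcal{F}_\sigma)$, passes from $\mathcal{S}_{\sigma,\textnormal{disc}}$ to $\mathcal{S}_\sigma$ ``by approximation'', and obtains \eqref{eigenschaft red kl fall zu wart} from Proposition \ref{satz zus zuf wart disk nicht disk}. Your uniform-integrability argument simply fills in the approximation step that the paper leaves implicit, and it is sound.
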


\begin{proof}
    Using (\ref{eigenschaft ind schritt zuf wart disk fall}) and (\ref{eigenschaft zuf wart bed erw}) we obtain    \begin{eqnarray*}
        Z_{n}^{\delta,\mathfrak{F}}(\sigma)_{\text{disc}}
        & = & \underset{\tau\in \mathcal{S}_{\sigma,\text{disc}}}{\esssup} \; \E\Bigr{(} Y(\tau)+ \E\bigr{(}Z_{n-1}^{L\delta,\mathcal{F}^{\tau+\delta_1}}(\tau+\delta_1)|\mathcal{F}_\tau \bigl{)}  \Big{|}\mathcal{F}_\sigma \Bigl{)}\\
        & = & \underset{\tau\in \mathcal{S}_{\sigma,\text{disc}}}{\esssup} \; \E\Bigr{(} Y(\tau)+ \overline{Z}_{n-1}^{\delta,r}(\tau) \Big{|}\mathcal{F}_\sigma \Bigl{)}\\
        & = & \underset{\tau\in \mathcal{S}_{\sigma,\text{disc}}}{\esssup} \; \E( Y_n(\tau) {|}\mathcal{F}_\sigma )\\
        & = & \underset{\tau\in \mathcal{S}_{\sigma}}{\esssup} \; \E ( Y_n(\tau) {|}\mathcal{F}_\sigma ),
    \end{eqnarray*}
    where the last equality holds by approximation. (\ref{eigenschaft red kl fall zu wart}) holds by Proposition \ref{satz zus zuf wart disk nicht disk}.
\end{proof}

By applying the arguments given in the proof of \cite[Proposition 3.2]{Car} to $\overline{Z}_{n-1}^{\delta,r}$, we get

\begin{corollary}\label{folgerung zuf f stoppzeiten}
    If $\delta_1,...,\delta_{n-1}$ are $\mathfrak{F}$-stopping times, then
    \[
       \overline{Z}_{n-1}^{\delta,r}(\tau)=
                \E\bigr{(}Z_{n-1}^{L\delta}(\tau+\delta_1)|\mathcal{F}_\tau \bigl{)}
    \]
    for each $\mathfrak{F}$-stopping time $\tau$.
\end{corollary}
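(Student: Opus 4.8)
The plan is to obtain the identity in three moves: first simplify the filtration appearing in $\overline{Z}_{n-1}^{\delta}$, then read off the statement for stopping times with countable range from Lemma \ref{defilem zuf wart supermart}, and finally pass to an arbitrary stopping time by approximation from above, following the scheme of \cite[Proposition 3.2]{Car}.

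First I would note that, as $\delta_1,\dots,\delta_{n-1}$ are $\mathfrak{F}$-stopping times, $t+\delta_1$ is an $\mathfrak{F}$-stopping time for every $t\geq0$ and $\tau+\delta_1$ is one for every $\mathfrak{F}$-stopping time $\tau$. Since $\mathfrak{F}$ is right-continuous, Proposition \ref{satz eigenschaften von neuer filtration}.(i) gives $\mathfrak{F}^{t+\delta_1}=\mathfrak{F}$, hence $Z_{n-1}^{L\delta,\mathfrak{F}^{t+\delta_1}}=Z_{n-1}^{L\delta}$ and $\overline{Z}_{n-1}^{\delta}(t)=\E\bigl(Z_{n-1}^{L\delta}(t+\delta_1)\mid\mathcal{F}_t\bigr)$. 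For a stopping time $\tau$ with countable range the assertion is then exactly Lemma \ref{defilem zuf wart supermart}.(ii).

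For a general $\mathfrak{F}$-stopping time $\tau$ I would choose $\mathfrak{F}$-stopping times $\tau_k\downarrow\tau$ with countable range. The left-hand side converges by right-continuity of $\overline{Z}_{n-1}^{\delta,r}$, so $\overline{Z}_{n-1}^{\delta,r}(\tau_k)\to\overline{Z}_{n-1}^{\delta,r}(\tau)$ a.s., and by the countable-range case $\overline{Z}_{n-1}^{\delta,r}(\tau_k)=\E\bigl(Z_{n-1}^{L\delta}(\tau_k+\delta_1)\mid\mathcal{F}_{\tau_k}\bigr)$. It thus remains to show that these right-hand sides converge to $\E\bigl(Z_{n-1}^{L\delta}(\tau+\delta_1)\mid\mathcal{F}_\tau\bigr)$. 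Here I would use that, by \eqref{eigenschaft red kl fall zu wart} applied with $n-1$ exercise rights and waiting vector $L\delta$, the value $Z_{n-1}^{L\delta}$ is the Snell envelope of a right-continuous, $\mathfrak{F}$-adapted process of class (D) and therefore admits a right-continuous version. As $\tau_k+\delta_1\downarrow\tau+\delta_1$, this yields $Z_{n-1}^{L\delta}(\tau_k+\delta_1)\to Z_{n-1}^{L\delta}(\tau+\delta_1)$ a.s.; the uniform bound $0\leq Z_{n-1}^{L\delta}(\tau_k+\delta_1)\leq\E\bigl((n-1)\sup_{t\geq0}Y(t)\mid\mathcal{F}_{\tau_k+\delta_1}\bigr)$ makes the family uniformly integrable, upgrading this to $L^1$-convergence of the integrands.

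The step I expect to be the main obstacle is precisely the convergence of the conditional expectations $\E\bigl(Z_{n-1}^{L\delta}(\tau_k+\delta_1)\mid\mathcal{F}_{\tau_k}\bigr)$, where both the integrand and the conditioning $\sigma$-algebra move with $k$. I would decouple the two effects by the triangle inequality $\bigl\|\E(X_k\mid\mathcal{F}_{\tau_k})-\E(X\mid\mathcal{F}_\tau)\bigr\|_1\leq\|X_k-X\|_1+\bigl\|\E(X\mid\mathcal{F}_{\tau_k})-\E(X\mid\mathcal{F}_\tau)\bigr\|_1$ with $X_k=Z_{n-1}^{L\delta}(\tau_k+\delta_1)$ and $X=Z_{n-1}^{L\delta}(\tau+\delta_1)$: the first term vanishes by the $L^1$-convergence just established, and the second by the reverse (downward) martingale convergence theorem applied to the decreasing family $\mathcal{F}_{\tau_k}\downarrow\mathcal{F}_\tau$ (right-continuity of $\mathfrak{F}$ giving $\bigcap_k\mathcal{F}_{\tau_k}=\mathcal{F}_\tau$). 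Passing to the limit $k\to\infty$ in the countable-range identity then gives the claim.
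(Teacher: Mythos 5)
Your proposal is correct and matches the paper's route: the paper proves this corollary precisely by invoking the arguments of \cite[Proposition 3.2]{Car}, i.e.\ the identification $\mathfrak{F}^{t+\delta_1}=\mathfrak{F}$ via Proposition \ref{satz eigenschaften von neuer filtration}.(i), the countable-range case from Lemma \ref{defilem zuf wart supermart}.(ii), and the passage to general $\tau$ by approximation from above using right-continuity of the aggregated Snell envelope, uniform integrability, and downward martingale convergence of $\E(\cdot\,|\mathcal{F}_{\tau_k})$ as $\mathcal{F}_{\tau_k}\downarrow\mathcal{F}_\tau$. You have simply written out in full the details the paper leaves to the cited reference.
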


To end this section we remark how the model with random waiting times in discrete time can be found in the model described above. To this end we consider
\[
   \mathcal{F}_t=\mathcal{F}_{\lfloor t \rfloor}, \quad Y(t)= Y(\lfloor t \rfloor) \quad \text{and} \quad \text{range}(\delta_1),..., \text{range}(\delta_{n-1})\subseteq \N_0
\]
for all $t\geq 0$. For each $(\mathcal{F}_k)_{k\in\N_0}$-stopping time $\sigma$ write
\[
            \mathcal{T}_\sigma^n(\delta,\mathfrak{F}):=\Bigr{\{}(\tau_1,...,\tau_n)\in  \mathcal{S}_\sigma^n(\delta,\mathfrak{F});\; \text{range($\tau_1$),..., range($\tau_n$) $\subseteq \N_0$}  \Bigl{\}}.
\]

\begin{lem}\label{lemma zuf wart disk} For each $(\mathcal{F}_k)_{k\in\N_0}$-stopping time $\sigma$:
    \begin{enumerate}[(i)]
        \item
        $\mathcal{T}_\sigma^n(\delta,\mathfrak{F})\subseteq \mathcal{S}_\sigma^n(\delta,\mathfrak{F})_{\textnormal{disc}}$.
        \item
        $\mathcal{F}^{\rho_1,...,\rho_i}_t = \mathcal{F}^{\rho_1,...,\rho_i}_{\lfloor t \rfloor}$ for each $\rho_1,...,\rho_n:\Omega \rightarrow \N_0$, $t\geq 0$, $i{\leq n}$.
       \item
         $\mathcal{F}^{\tau_1+\delta_1,...,\tau_i+\delta_i}_t \subseteq \mathcal{F}^{\lfloor \tau_1 \rfloor +\delta_1,...,\lfloor \tau_i \rfloor +\delta_i}_t$ and $\lfloor \tau_{i} \rfloor$ is an $\mathfrak{F}^{\lfloor \tau_1 \rfloor +\delta_1,...,\lfloor \tau_{i-1} \rfloor +\delta_{i-1}}$-stopping time for each $\tau\in \mathcal{S}_\sigma^n(\delta,\mathfrak{F})$, $t\geq 0$, $i{\leq n}$.
      \item
        $(\lfloor \tau_1 \rfloor,..., \lfloor \tau_n \rfloor)\in \mathcal{T}_\sigma^n(\delta,\mathfrak{F})$ for each $\tau\in \mathcal{S}_\sigma^n(\delta,\mathfrak{F})$.
    \end{enumerate}
\end{lem}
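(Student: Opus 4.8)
The plan is to take the four parts in order; the substance lies in (ii) and (iii), while (i) is immediate and (iv) is an assembly of (iii) with elementary order relations. Throughout I call a filtration \emph{block-constant} if $\mathcal{A}_t=\mathcal{A}_{\lfloor t\rfloor}$ for all $t$. Part (i) is a definitional triviality: any $\tau\in\mathcal{T}^n_\sigma(\delta,\mathfrak{F})$ lies in $\mathcal{S}^n_\sigma(\delta,\mathfrak{F})$ by definition and has $\mathrm{range}(\tau_1)\subseteq\N_0$ countable, which is exactly what membership in $\mathcal{S}^n_\sigma(\delta,\mathfrak{F})_{\mathrm{disc}}$ requires.

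For part (ii) I would induct on $i$, the core being the single claim: if $\mathfrak{A}$ is block-constant and $\rho$ takes values in $\N_0$, then $\mathcal{A}^\rho_t=\mathcal{A}^\rho_{\lfloor t\rfloor}$. Indeed, from the characterisation \eqref{eigenschaft charak neue filtration 1} one reads off $\mathcal{G}^\rho_t=\mathcal{G}^\rho_{\lfloor t\rfloor}$ for $\N_0$-valued $\rho$, so $\mathcal{B}_t:=\sigma(\mathcal{A}_t,\mathcal{G}^\rho_t)$ is again constant on each $[k,k+1)$. A block-constant filtration is right-continuous at every integer and at every non-integer point, so its right-continuification $\mathfrak{A}^\rho$ coincides with $\mathfrak{B}$ and stays block-constant. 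Applying this with $\mathfrak{A}=\mathfrak{F}^{\rho_1,\dots,\rho_{i-1}}$ (block-constant by the inductive hypothesis, $\mathfrak{F}$ being so by assumption) and $\rho=\rho_i\in\N_0$ closes the induction.

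For part (iii) write $\mathfrak{G}^{(i)}:=\mathfrak{F}^{\tau_1+\delta_1,\dots,\tau_i+\delta_i}$ and $\mathfrak{H}^{(i)}:=\mathfrak{F}^{\lfloor\tau_1\rfloor+\delta_1,\dots,\lfloor\tau_i\rfloor+\delta_i}$, with $\mathfrak{G}^{(0)}=\mathfrak{H}^{(0)}=\mathfrak{F}$; note $\mathfrak{H}^{(i)}$ is block-constant by (ii) since each $\lfloor\tau_j\rfloor+\delta_j=\lfloor\tau_j+\delta_j\rfloor\in\N_0$. I would prove jointly, by induction on $i$, (a) $\mathcal{G}^{(i)}_t\subseteq\mathcal{H}^{(i)}_t$ and (b) $\lfloor\tau_i\rfloor$ is an $\mathfrak{H}^{(i-1)}$-stopping time. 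Assuming (a) at level $i-1$, statement (b) at level $i$ follows from
\[
\{\lfloor\tau_i\rfloor\le k\}=\{\tau_i<k+1\}=\bigcup_{m\ge 2}\Bigl\{\tau_i\le k+1-\tfrac1m\Bigr\},
\]
since each set on the right lies in $\mathcal{G}^{(i-1)}_{k+1-1/m}\subseteq\mathcal{H}^{(i-1)}_{k+1-1/m}=\mathcal{H}^{(i-1)}_k$, using first (a) at $i-1$ and then block-constancy from (ii); hence $\{\lfloor\tau_i\rfloor\le k\}\in\mathcal{H}^{(i-1)}_k$.

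The hard part is (a) at level $i$. The clean route would factor $\mathfrak{G}^{(i)}\to\mathfrak{H}^{(i)}$ through base-monotonicity of enrichment (from $\mathcal{G}^{(i-1)}\subseteq\mathcal{H}^{(i-1)}$ one gets $(\mathfrak{G}^{(i-1)})^{\tau_i+\delta_i}\subseteq(\mathfrak{H}^{(i-1)})^{\tau_i+\delta_i}$, as the larger enriched filtration is a right-continuous candidate for the smaller one) followed by Proposition~\ref{satz eigenschaften von neuer filtration}.(iv) applied to $\lfloor\tau_i\rfloor\le\tau_i$. The obstacle is precisely that Proposition~\ref{satz eigenschaften von neuer filtration}.(iv) is stated for stopping times with countable range, whereas $\tau_i$ need not have one. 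I would therefore verify the required generator inclusion $\mathcal{G}^{\tau_i+\delta_i}_s\subseteq\mathcal{H}^{(i)}_s$ directly, by showing $\{\tau_i+\delta_i\le u\}\in\mathcal{H}^{(i)}_u$ for every $u$ via
\[
\{\tau_i+\delta_i\le u\}=\bigcup_{j=0}^{\lfloor u\rfloor}\ \bigcup_{l=0}^{\lfloor u-j\rfloor}\Bigl(\{\lfloor\tau_i\rfloor=l,\ \tau_i\le u-j\}\cap\{\lfloor\tau_i\rfloor+\delta_i=l+j\}\Bigr).
\]
Here the first factor lies in $\mathcal{H}^{(i-1)}_u$ because $\tau_i$ is an $\mathfrak{H}^{(i-1)}$-stopping time (as $\mathfrak{G}^{(i-1)}\subseteq\mathfrak{H}^{(i-1)}$, so $\tau_i$ and $\lfloor\tau_i\rfloor$ are revealed on $\{\tau_i\le u\}$), and the second lies in $\mathcal{G}^{\lfloor\tau_i\rfloor+\delta_i}_{l+j}\subseteq\mathcal{H}^{(i)}_{l+j}\subseteq\mathcal{H}^{(i)}_u$ since $l+j\le u$. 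With $\mathcal{G}^{(i-1)}_s\subseteq\mathcal{H}^{(i-1)}_s\subseteq\mathcal{H}^{(i)}_s$ and the right-continuity of $\mathfrak{H}^{(i)}$, minimality of $\mathfrak{G}^{(i)}$ gives (a). The conceptual point is that although $\mathfrak{H}^{(i)}$ records $\tau_i+\delta_i$ only up to its integer part, it separately knows $\tau_i$ exactly through the finer base $\mathfrak{H}^{(i-1)}$ and knows $\lfloor\tau_i\rfloor+\delta_i$, from which $\delta_i$ and hence the real value $\tau_i+\delta_i$ are reconstructible.

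Finally, part (iv) assembles from (iii): $\lfloor\tau_1\rfloor$ is an $\mathfrak{F}$-stopping time with $\sigma=\lfloor\sigma\rfloor\le\lfloor\tau_1\rfloor$; for $i\ge 2$, $\lfloor\tau_i\rfloor$ is an $\mathfrak{F}^{\lfloor\tau_1\rfloor+\delta_1,\dots,\lfloor\tau_{i-1}\rfloor+\delta_{i-1}}$-stopping time by (iii).(b); and applying the monotone map $\lfloor\cdot\rfloor$ to $\tau_{i-1}+\delta_{i-1}\le\tau_i$, together with $\lfloor\tau_{i-1}+\delta_{i-1}\rfloor=\lfloor\tau_{i-1}\rfloor+\delta_{i-1}$ (as $\delta_{i-1}\in\N_0$), yields $\lfloor\tau_{i-1}\rfloor+\delta_{i-1}\le\lfloor\tau_i\rfloor$. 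Since all ranges lie in $\N_0$, we conclude $(\lfloor\tau_1\rfloor,\dots,\lfloor\tau_n\rfloor)\in\mathcal{T}^n_\sigma(\delta,\mathfrak{F})$.
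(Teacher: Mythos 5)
Your proof is correct, and there is no authorial argument to compare it against: the paper's entire proof of Lemma~\ref{lemma zuf wart disk} reads ``The proof is a straightforward exercise,'' so your writeup supplies exactly what the authors omitted. The backbone you chose — block-constancy is preserved under enrichment by $\N_0$-valued times, hence the right-continuification in Definition~\ref{defi filtration der stoppzeit 2} is vacuous and $\mathfrak{F}^{\rho_1,\dots,\rho_i}$ stays block-constant — is the natural one and is carried out correctly, as is the joint induction in (iii) with (b) at level $i$ deduced from (a) at level $i-1$ via $\{\lfloor\tau_i\rfloor\le k\}=\bigcup_m\{\tau_i\le k+1-\tfrac1m\}$ and block-constancy. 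The genuinely non-routine point, which you identify accurately, is that (iii)(a) does \emph{not} follow from Proposition~\ref{satz eigenschaften von neuer filtration}.(iv): that statement (and its proof in the appendix) requires \emph{both} stopping times $\sigma\le\tau$ to have countable range, and $\tau_i$ need not. Your substitute works: the decomposition of $\{\tau_i+\delta_i\le u\}$ over the integer values $j$ of $\delta_i$ and $l$ of $\lfloor\tau_i\rfloor$ is an exact set identity (the factor $\{\lfloor\tau_i\rfloor+\delta_i=l+j\}$ pins down $\delta_i=j$ on $\{\lfloor\tau_i\rfloor=l\}$), the first factor lies in $\mathcal{H}^{(i-1)}_u$ since $\tau_i$ and $\lfloor\tau_i\rfloor$ are $\mathfrak{H}^{(i-1)}$-stopping times by the inductive hypotheses, the second lies in $\mathcal{G}^{\lfloor\tau_i\rfloor+\delta_i}_{l+j}\subseteq\mathcal{H}^{(i)}_{l+j}$ with $l+j\le u$, and then minimality of $\mathfrak{G}^{(i)}$ together with right-continuity (block-constancy) of $\mathfrak{H}^{(i)}$ closes the induction. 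So in effect you prove a small extension of Proposition~\ref{satz eigenschaften von neuer filtration}.(iv) valid when the gap $\tau_i-\lfloor\tau_i\rfloor$ is absorbed by an $\N_0$-valued increment; this is precisely the content hiding behind the paper's ``straightforward exercise,'' and it is the part a careless reader would get wrong by citing (iv) blindly. Parts (i), (ii) and (iv) are as routine as you claim (modulo the paper's own glossing over of the value $+\infty$ in ``$\mathrm{range}\subseteq\N_0$,'' which affects your version no more than the original).
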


\begin{proof}
The proof is a straightforward exercise.
\end{proof}

So the discrete model may be treated as follows: 
On the one hand by Proposition \ref{satz zus zuf wart disk nicht disk}.(i) and Lemma \ref{lemma zuf wart disk}.(i)
            \[
                Z_n^{\delta,\mathfrak{F}}(\sigma) \geq  Z_n^{\delta,\mathfrak{F}}(\sigma)_{\text{disc}}
                \geq  \underset{\tau\in \mathcal{T}_\sigma^n(\delta,\mathfrak{F})}{\esssup}  \E\Bigr{(}\sum_{i=1}^nY(\tau_i)\Big{|}\mathcal{F}_\sigma\Bigl{)}.
            \]
On the other hand using Lemma \ref{lemma zuf wart disk}.(iv) we have for each $\tau\in \mathcal{S}_\sigma^n(\delta,\mathfrak{F})$
             \begin{eqnarray*}\qquad
                \E\Bigr{(}\sum_{i=1}^nY(\tau_i)\Big{|}\mathcal{F}_\sigma\Bigl{)} = \E\Bigr{(}\sum_{i=1}^nY(\lfloor \tau_i \rfloor)\Big{|}\mathcal{F}_\sigma\Bigl{)} \leq \underset{\rho\in \mathcal{T}_\sigma^n(\delta,\mathfrak{F})}{\esssup} \E\Bigr{(}\sum_{i=1}^nY(\rho_i)\Big{|}\mathcal{F}_\sigma\Bigl{)}.
             \end{eqnarray*}
Therefore we get for each $(\mathcal{F}_k)_{k\in\N_0}$-stopping time $\sigma$
    \begin{equation}\label{eigenschaft disk fall zuf wart}
        Z_n^{\delta,\mathfrak{F}}(\sigma)= Z_n^{\delta,\mathfrak{F}}(\sigma)_{\text{disc}}
        = \underset{\tau\in \mathcal{T}_\sigma^n(\delta,\mathfrak{F})}{\esssup} \E\Bigr{(}\sum_{i=1}^nY(\tau_i)\Big{|}\mathcal{F}_\sigma\Bigl{)}.
    \end{equation}

\section{The Markovian case}\label{markov}
In this section we assume $(X(t),\mathcal{F}_t)_{t\geq 0}$ to be a strong Markov process with state space $E$ and fix a discounting rate $\beta>0$. Furthermore let $h:E \rightarrow [0,\infty)$ be measurable such that the $\mathfrak{F}$-adapted and non-negative process
\begin{equation}\label{eigneschaft defi ausz markov}
    Y = \Bigr{(}e^{-\beta t}h(X(t))\Bigl{)}_{t{\geq 0}}
\end{equation}
is right-continuous, i.e. $\mathbb{P}_x$-right continuous for all $x\in E$ and we assume that
\begin{equation}\label{eigenschaft auz prozess endl}
    \E_x\Bigr{(} \sup_{t{\geq 0}} Y(t) \Bigl{)} < \infty
\end{equation}
holds.\\
In the following we assume the waiting times $\delta_1,...,\delta_{n-1}$ to be $\mathfrak{F}$-stopping times. In particular we have $\mathfrak{F}^{\tau_1+\delta_1,...,\tau_{n-1}+\delta_{n-1}}=\mathfrak{F}$ for all $\tau\in\mathcal{S}_0^n(\delta,\mathfrak{F})$.

\begin{defi}\label{defi Z_n(t,x)}
    For each $x\in E$  let
    \begin{enumerate}[(i)]
        \item $Z_n^\delta(\sigma,x):= \underset{\tau\in \mathcal{S}_\sigma^n(\delta)}{\esssup} \; \E_x\Bigr{(}\underset{i=1}{\overset{n}{\sum}}Y(\tau_i)\Big{|}\mathcal{F}_\sigma\Bigl{)}$ for all $\sigma\in\mathcal{S}$,
        \item $V_n^\delta(x):= Z_n^\delta(0,x)$,
        \item $\overline{Z}_{n-1}^{\delta}(t,x):=  \E_x(Z_{n-1}^{L\delta}(t+\delta_1,x)|\mathcal{F}_t)$ for all $t\geq 0$.
        \item Denote the right-continuous modification of $\overline{Z}_{n-1}^{\delta}(\cdot,x)$ introduced in Lemma \ref{defilem zuf wart supermart} by $\overline{Z}_{n-1}^{\delta,r}(\cdot,x)$.
    \end{enumerate}
\end{defi}

By using the stopping time $\delta_1$ instead of the deterministic $\delta$ in the proofs of \cite[Chapter 4]{Car}, we obtain the following results:

\begin{lem}\label{defilem g_n zuf}
The function $g_{n-1}^\delta:E\rightarrow[0,\infty)$  given by
    \begin{equation}
        g_{n-1}^{\delta}(x):= \E_x\Bigr{(} e^{-\beta \delta_1} V_{n-1}^{L\delta}(X(\delta_1))\Bigl{)}
    \end{equation}
    is measurable and fulfills the following properties:
    \begin{enumerate}[(i)]
        \item
            $g_{n-1}^{\delta}$ is $\beta$-excessive and $C_0$-continuous.
        \item     The process $\Bigr{(} e^{-\beta t} g_{n-1}^{\delta}(X(t))\Bigl{)}_{t\geq 0}$ is right-continuous.
        \item For all $x\in E$ and $\tau\in\mathcal{S}$ it holds that
            \begin{equation}\label{eigenschaft overline Z(dot) zuf}
                e^{-\beta \tau} g_{n-1}^{\delta}(X(\tau))= \overline{Z}_{n-1}^{\delta,r}(\tau,x) \quad \text{$\mathbb{P}_x$-a.s}.
            \end{equation}
    \end{enumerate}
    \qed
\end{lem}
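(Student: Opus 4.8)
The plan is to derive all three assertions from a single Markovian representation of the continuation value, obtained by transporting the abstract right-continuous supermartingale $\overline{Z}_{n-1}^{\delta,r}$ of Lemma~\ref{defilem zuf wart supermart} into the state space by the strong Markov property; this is the random-refraction analogue of the deterministic argument in \cite[Chapter 4]{Car}, with the constant $\delta$ replaced by the stopping time $\delta_1$. The measurability and finiteness of $g_{n-1}^\delta$ come first and are routine: $V_{n-1}^{L\delta}$ is measurable as the value function of an $(n-1)$-fold stopping problem, $x\mapsto\E_x(\cdot)$ is measurable because $X$ is strong Markov, and from $Y(\tau_i)\le\sup_{t\ge0}Y(t)$ one gets $V_{n-1}^{L\delta}(\cdot)\le(n-1)\E_\cdot(\sup_{t\ge0}Y(t))$, so that \eqref{eigenschaft auz prozess endl} and the strong Markov property bound $g_{n-1}^\delta(x)\le(n-1)\E_x(\sup_{t\ge0}Y(t))<\infty$.

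The core is assertion (iii). First I would establish the time--space representation $Z_{n-1}^{L\delta}(\rho,x)=e^{-\beta\rho}V_{n-1}^{L\delta}(X(\rho))$ $\mathbb{P}_x$-a.s.\ for every $\mathfrak{F}$-stopping time $\rho$; this is the Markovian form of the value process and is proved, as in \cite{Car}, by shifting the admissible family $\mathcal{S}_\rho^{n-1}(L\delta)$ through $\theta_\rho$ and invoking the strong Markov property together with the time-homogeneity of the discounted reward $Y(t)=e^{-\beta t}h(X(t))$. Inserting $\rho=\tau+\delta_1$ into the definition $\overline{Z}_{n-1}^{\delta}(\tau,x)=\E_x(Z_{n-1}^{L\delta}(\tau+\delta_1,x)\mid\mathcal{F}_\tau)$ from Definition~\ref{defi Z_n(t,x)} and pulling out the $\mathcal{F}_\tau$-measurable factor $e^{-\beta\tau}$ gives $\overline{Z}_{n-1}^{\delta}(\tau,x)=e^{-\beta\tau}\E_x(e^{-\beta\delta_1}V_{n-1}^{L\delta}(X(\tau+\delta_1))\mid\mathcal{F}_\tau)$. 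Applying the strong Markov property at $\tau$ so that the refraction period is read off the post-$\tau$ path identifies the conditional expectation with $\E_{X(\tau)}(e^{-\beta\delta_1}V_{n-1}^{L\delta}(X(\delta_1)))=g_{n-1}^\delta(X(\tau))$, hence $\overline{Z}_{n-1}^{\delta}(\tau,x)=e^{-\beta\tau}g_{n-1}^\delta(X(\tau))$. For stopping times of countable range this already equals $\overline{Z}_{n-1}^{\delta,r}(\tau,x)$ by Lemma~\ref{defilem zuf wart supermart}(ii), and a right-continuous approximation $\tau_k\downarrow\tau$ (using right-continuity of both sides, the right side via Lemma~\ref{defilem zuf wart supermart}(i)) extends the identity to all $\tau\in\mathcal{S}$, which is (iii).

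Assertions (i) and (ii) then follow from (iii) combined with the supermartingale property already proved in Lemma~\ref{defilem zuf wart supermart}. Specialising (iii) to deterministic $\tau=t$ exhibits $(e^{-\beta t}g_{n-1}^\delta(X(t)))_{t\ge0}$ as a version of the right-continuous modification $\overline{Z}_{n-1}^{\delta,r}(\cdot,x)$, which gives the right-continuity in (ii). The supermartingale inequality then reads $e^{-\beta t}\E_x(g_{n-1}^\delta(X(t)))\le g_{n-1}^\delta(x)$ for all $t\ge0$ and $x\in E$, and together with the right-continuity and the dominated-convergence bound supplied by \eqref{eigenschaft auz prozess endl} this is exactly $\beta$-excessivity; the $C_0$-continuity is then obtained as in the corresponding step of \cite{Car}. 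This proves (i).

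The step that genuinely has to be reworked relative to \cite{Car}, and which I expect to be the main obstacle, is the re-centring of the refraction in the core step. For a constant $\delta$ the identity $X(t+\delta)=X(\delta)\circ\theta_t$ is automatic, so the strong Markov property applies verbatim; for the stopping time $\delta_1$ one must instead apply the strong Markov property so that $\delta_1$ (and, in the representation of $Z_{n-1}^{L\delta}$, also $\delta_2,\dots,\delta_{n-1}$) is evaluated on the shifted path, and check that this is consistent with the definition of $g_{n-1}^\delta$ and with the admissibility classes $\mathcal{S}_\rho^{n-1}(L\delta)$. The measure-theoretic bookkeeping of the enriched filtrations collected in Proposition~\ref{satz eigenschaften von neuer filtration} and Lemma~\ref{lemma eigenschaften stopp zufaellig} is what makes this re-centring legitimate.
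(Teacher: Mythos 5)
Your overall strategy --- transport the value process into the state space via the strong Markov property, prove the identification \eqref{eigenschaft overline Z(dot) zuf} for countable-range stopping times using Lemma \ref{defilem zuf wart supermart}.(ii), and pass to general $\tau$ by approximation --- is exactly the adaptation of \cite[Chapter 4]{Car} that the paper intends (the paper itself prints no proof of Lemma \ref{defilem g_n zuf}, only the remark that one replaces the deterministic $\delta$ by the stopping time $\delta_1$ in Carmona--Dayanik's arguments). You also correctly isolate the genuinely new point, namely that the strong Markov step is only legitimate if the refraction time is read off the post-$\tau$ path, i.e.\ if $\delta_1\circ\theta_\tau$ may be substituted for $\delta_1$. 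Note, however, that Proposition \ref{satz eigenschaften von neuer filtration} and Lemma \ref{lemma eigenschaften stopp zufaellig} do not deliver this substitution: they concern the enriched filtrations, not the shift operators; what makes it harmless is the structure of the waiting times in the paper's applications (independence of the process, or hitting-time form), a modeling point the paper leaves implicit.

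The genuine gap is that your order of deduction, (iii) $\Rightarrow$ (ii) $\Rightarrow$ (i), is circular. Your extension of (iii) from countable-range stopping times to general $\tau\in\mathcal{S}$ takes $\tau_k\downarrow\tau$ and requires $e^{-\beta\tau_k}g_{n-1}^{\delta}(X(\tau_k))\rightarrow e^{-\beta\tau}g_{n-1}^{\delta}(X(\tau))$, which is precisely the path right-continuity asserted in (ii). But you obtain (ii) from (iii) by specialising to deterministic times and observing that $\bigr{(}e^{-\beta t}g_{n-1}^{\delta}(X(t))\bigl{)}_{t\geq 0}$ is then a version of $\overline{Z}_{n-1}^{\delta,r}(\cdot,x)$. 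Being a modification of a right-continuous process does \emph{not} yield right-continuous paths: modifications agree a.s.\ only at each fixed time, and path regularity does not transfer (a modification of the zero process may have jumps). So (ii) is not established, and with it the general-$\tau$ case of (iii) collapses. This is exactly why the lemma lists (i) before (ii) before (iii), mirroring \cite{Car}: one first proves that $g_{n-1}^{\delta}$ is $\beta$-excessive directly --- via the strong Markov property and optional sampling of the right-continuous supermartingale $\bigr{(}e^{-\beta s}V_{n-1}^{L\delta}(X(s))\bigl{)}_{s\geq 0}$, without invoking $\overline{Z}_{n-1}^{\delta,r}$ --- then deduces (ii) from excessivity together with the regularity theory of excessive functions (in the diffusion setting of \cite{Car}, excessive functions are continuous, so composition with the right-continuous paths is right-continuous), and only then runs your countable-range-plus-approximation argument for (iii), where both sides of \eqref{eigenschaft overline Z(dot) zuf} are by then known to be right-continuous. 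Your derivation of the excessivity inequality itself (fixed-time identification plus the supermartingale property from Lemma \ref{defilem zuf wart supermart}) is fine, but it cannot substitute for this missing, non-circular proof of (ii).
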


\begin{satz}\label{satz exzessive darstellung von n mal stoppen zuf}
    For the non-negative and $C_0$-continuous function
    \[
        h_{n}^\delta:= h + g_{n-1}^{\delta}
    \]
    it holds that
    \begin{equation}\label{eigenschaft Z_n(sigma,x) zuf}
        Z_{n}^\delta(\sigma,x)=\underset{\tau\in\mathcal{S}_\sigma}{\esssup}\; \E_x\Bigr{(}e^{-\beta\tau} h_{n}^\delta(X(\tau)) \Big{|} \mathcal{F}_\sigma\Bigl{)} \quad \text{$\mathbb{P}_x$-a.s.},
    \end{equation}
    in particular
    \begin{equation}\label{eigenschaft V_n(X(t)) zuf}
        e^{-\beta \sigma} V_{n}^\delta(X(\sigma)) = \underset{\tau\in\mathcal{S}_\sigma}{\esssup}\; \E_x\Bigr{(}e^{-\beta\tau} h_{n}^\delta(X(\tau)) \Big{|} \mathcal{F}_\sigma\Bigl{)} \quad \text{$\mathbb{P}_x$-a.s.}
    \end{equation}
    for all $x\in E$ and $\sigma\in\mathcal{S}$. \qed
\end{satz}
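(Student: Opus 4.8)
The plan is to reduce the $n$-fold problem to an ordinary (single) optimal stopping problem for a suitable right-continuous reward process, and then to invoke the classical Markovian representation of its value function.

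First I would specialize the main result of Section~\ref{reduction} to the present situation. Since $\delta_1,\dots,\delta_{n-1}$ are assumed to be $\mathfrak{F}$-stopping times, we have $\mathfrak{F}^{\tau+\delta_1}=\mathfrak{F}$, and by Proposition~\ref{satz zus zuf wart disk nicht disk} the discrete and non-discrete value functions coincide. Consequently the reduction theorem of Section~\ref{reduction} yields, under $\mathbb{P}_x$,
\[
    Z_n^\delta(\sigma,x)=\underset{\tau\in\mathcal{S}_\sigma}{\esssup}\;\E_x\bigl(Y_n(\tau)\mid\mathcal{F}_\sigma\bigr),\qquad Y_n=Y+\overline{Z}_{n-1}^{\delta,r}.
\]
Next I would rewrite the integrand in Markovian form. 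By Lemma~\ref{defilem g_n zuf}.(iii) we have $\overline{Z}_{n-1}^{\delta,r}(\tau,x)=e^{-\beta\tau}g_{n-1}^\delta(X(\tau))$ $\mathbb{P}_x$-a.s., while $Y(\tau)=e^{-\beta\tau}h(X(\tau))$ by definition. Adding these and using $h_n^\delta=h+g_{n-1}^\delta$ gives $Y_n(\tau)=e^{-\beta\tau}h_n^\delta(X(\tau))$ $\mathbb{P}_x$-a.s.\ for every $\tau\in\mathcal{S}$. Substituting this into the display above establishes (\ref{eigenschaft Z_n(sigma,x) zuf}).

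It remains to prove the ``in particular'' identity (\ref{eigenschaft V_n(X(t)) zuf}), i.e.\ that the right-hand side equals $e^{-\beta\sigma}V_n^\delta(X(\sigma))$. Here I would view the right-hand side as the value, evaluated at $\sigma$, of the ordinary optimal stopping problem with reward process $\bigl(e^{-\beta t}h_n^\delta(X(t))\bigr)_{t\geq 0}$. This process is non-negative, right-continuous (by the right-continuity of $Y$ together with Lemma~\ref{defilem g_n zuf}.(ii)) and integrable, so the classical theory of optimal stopping for strong Markov processes applies and furnishes the representation
\[
    \underset{\tau\in\mathcal{S}_\sigma}{\esssup}\;\E_x\bigl(e^{-\beta\tau}h_n^\delta(X(\tau))\mid\mathcal{F}_\sigma\bigr)=e^{-\beta\sigma}\,W(X(\sigma)),\qquad W(y):=\underset{\tau\in\mathcal{S}}{\sup}\;\E_y\bigl(e^{-\beta\tau}h_n^\delta(X(\tau))\bigr),
\]
$\mathbb{P}_x$-a.s. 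Finally I would identify $W=V_n^\delta$: taking $\sigma=0$ in (\ref{eigenschaft Z_n(sigma,x) zuf}) and using that $\mathcal{F}_0$ is $\mathbb{P}_x$-trivial and $\mathcal{S}_0=\mathcal{S}$, one gets $V_n^\delta(x)=Z_n^\delta(0,x)=W(x)$. Combining the last two displays yields (\ref{eigenschaft V_n(X(t)) zuf}).

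The main obstacle is the Markovian value-function representation used in the last step. Its nontrivial (``$\geq$'') direction requires restarting the process at $\sigma$ via the strong Markov property and a measurable $\varepsilon$-optimal selection of stopping times that is uniform in the starting state; the easy (``$\leq$'') direction is just the Markov property combined with the definition of $W$. This is precisely where the $C_0$-continuity of $h_n^\delta$ (and hence the regularity of $W$) enters, and it is the point that the reference to the arguments of \cite[Chapter~4]{Car} — now carried out with the stopping time $\delta_1$ in place of a deterministic $\delta$ — is meant to cover.
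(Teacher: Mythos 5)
Your proposal is correct and takes essentially the same route the paper intends: the paper states this theorem without proof, remarking only that it follows by running the arguments of \cite[Chapter 4]{Car} with the stopping time $\delta_1$ in place of the deterministic $\delta$ — which is exactly your combination of the Section \ref{reduction} reduction theorem (in the case where the $\delta_i$ are $\mathfrak{F}$-stopping times, so that the disc and non-disc values coincide), Lemma \ref{defilem g_n zuf}.(iii) to identify $Y_n(\tau)=e^{-\beta\tau}h_n^\delta(X(\tau))$ $\mathbb{P}_x$-a.s., and the classical Markovian representation of the value process. The final step you flag as the main obstacle (the strong-Markov/$\varepsilon$-optimal-selection argument identifying the essential supremum with $e^{-\beta\sigma}V_n^\delta(X(\sigma))$, plus Blumenthal triviality of $\mathcal{F}_0$ under $\mathbb{P}_x$) is precisely what the paper's citation of \cite[Chapter 4]{Car} is meant to cover, so your treatment matches the paper's level of rigor.
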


The following lemma shows that the stopping set of the function $h$ given in (\ref{eigneschaft defi ausz markov}) is a subset of the stopping set given by $h_n^\delta$:

\begin{lem}\label{lemma stoppgebiert zuf}
    Let $g:E \rightarrow [0,\infty)$ be measurable such that $\Bigr{(} e^{-\beta t} g(X(t)) \Bigl{)}_{t\geq 0}$ is a right-continuous and non-negative supermartingale with $\sup_{t\geq 0} \E_x\Bigr{(} e^{-\beta t} g(X(t)) \Bigl{)} < \infty $ for all $x\in E$. Then the stopping set given by $h$ is a subset of the stopping set given by $h+g$.
\end{lem}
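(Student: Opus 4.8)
The plan is to phrase everything through the value functions of the associated single-stopping problems. For a non-negative measurable reward $f$ write
\[
V^f(x):=\sup_{\tau\in\mathcal{S}}\E_x\bigl(e^{-\beta\tau}f(X(\tau))\bigr),
\]
with the convention $e^{-\beta\infty}f(X(\infty)):=\limsup_{t}e^{-\beta t}f(X(t))$ on $\{\tau=\infty\}$, so that the stopping set given by $f$ is $\Gamma_f:=\{x\in E;\ f(x)=V^f(x)\}$. The claim is $\Gamma_h\subseteq\Gamma_{h+g}$. Since the choice $\tau=0$ always yields $V^{h+g}(x)\geq (h+g)(x)$, it suffices to establish the reverse inequality $V^{h+g}(x)\leq (h+g)(x)$ for each fixed $x\in\Gamma_h$.

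So I would fix $x\in\Gamma_h$ and an arbitrary $\tau\in\mathcal{S}$. Because $h,g\geq 0$, linearity of the expectation for non-negative integrands gives the well-defined splitting
\[
\E_x\bigl(e^{-\beta\tau}(h+g)(X(\tau))\bigr)=\E_x\bigl(e^{-\beta\tau}h(X(\tau))\bigr)+\E_x\bigl(e^{-\beta\tau}g(X(\tau))\bigr).
\]
The first summand is at most $V^h(x)$ by definition of the supremum, and $V^h(x)=h(x)$ precisely because $x\in\Gamma_h$. For the second summand I would invoke optional sampling for the non-negative right-continuous supermartingale $M:=(e^{-\beta t}g(X(t)))_{t\geq 0}$, which yields $\E_x(M_\tau)\leq\E_x(M_0)=g(x)$ (here $M_0=g(X(0))=g(x)$ $\mathbb{P}_x$-a.s.). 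Adding the two bounds gives $\E_x(e^{-\beta\tau}(h+g)(X(\tau)))\leq h(x)+g(x)$, and taking the supremum over $\tau\in\mathcal{S}$ produces $V^{h+g}(x)\leq (h+g)(x)$, as required.

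The one genuinely delicate point is the optional sampling step, specifically the treatment of $\{\tau=\infty\}$ and its compatibility with the reward-at-infinity convention. I would use that a non-negative right-continuous supermartingale is $L^1$-bounded (indeed $\E_x(M_t)\leq\E_x(M_0)=g(x)$, which is also what the hypothesis $\sup_{t}\E_x(e^{-\beta t}g(X(t)))<\infty$ records) and hence converges $\mathbb{P}_x$-a.s.\ to a limit $M_\infty=\lim_{t\to\infty}M_t$, so that optional sampling extends to stopping times valued in $[0,\infty]$ with $M_\tau:=M_\infty$ on $\{\tau=\infty\}$, giving $\E_x(M_\tau)\leq\E_x(M_0)$. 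It then remains only to check that this $M_\infty$ agrees $\mathbb{P}_x$-a.s.\ with the convention $\limsup_{t}e^{-\beta t}g(X(t))$ used in the stopping problem; this is immediate, since the almost sure limit exists and therefore equals the $\limsup$. With this identification the inequality $\E_x(e^{-\beta\tau}g(X(\tau)))\leq g(x)$ holds for every $\tau\in\mathcal{S}$, and the argument closes.
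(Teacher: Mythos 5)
Your proposal is correct and follows essentially the same route as the paper: split $\E_x\bigl(e^{-\beta\tau}(h+g)(X(\tau))\bigr)$ into the $h$-part, bounded by $V^h(x)=h(x)$ since $x$ lies in the stopping set of $h$, and the $g$-part, bounded by $g(x)$ via optional sampling for non-negative right-continuous supermartingales. The only difference is that you spell out the treatment of $\{\tau=\infty\}$ and the identification of the supermartingale limit with the $\limsup$ convention, which the paper leaves implicit.
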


\begin{proof}
    By the optional sampling theorem for non-negative supermartingales we obtain for each $x$ in the stopping set given by $h$ that
    \begin{eqnarray*} \qquad
        \sup_{\tau \in \mathcal{S}} \E_x\Bigr{(} e^{-\beta\tau} (h+g)(X(\tau)) \Bigl{)}
        & \leq  & \sup_{\tau \in \mathcal{S}} \E_x\Bigr{(} e^{-\beta\tau} h(X(\tau)) \Bigl{)} + \sup_{\tau \in \mathcal{S}} \E_x\Bigr{(} e^{-\beta\tau} g(X(\tau)) \Bigl{)}\\
        & = & h(x) + \E_x\Bigr{(} e^{-\beta 0} g(X(0)) \Bigl{)}\\
        & = & h(x) + g(x).
    \end{eqnarray*}
\end{proof}

\begin{corollary}\label{folgerung zuf stoppgebiet}
    $\{ x\in E;\; V_1(x)=h(x)\} \subseteq \{ x\in E;\; V_n^\delta(x)=h_n(x)\}$.
\end{corollary}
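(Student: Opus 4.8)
The plan is to recognize the statement as a direct application of Lemma \ref{lemma stoppgebiert zuf} with the choice $g := g_{n-1}^\delta$. The key observation is that both value functions appearing in the corollary are themselves value functions of \emph{single} optimal stopping problems. For $n=1$ there are no waiting times, so
\[
    V_1(x) = \sup_{\tau\in\mathcal{S}} \E_x\bigl(e^{-\beta\tau} h(X(\tau))\bigr),
\]
and hence $\{x : V_1(x)=h(x)\}$ is precisely the stopping set associated with the reward $h$. On the other hand, evaluating (\ref{eigenschaft Z_n(sigma,x) zuf}) of Theorem \ref{satz exzessive darstellung von n mal stoppen zuf} at $\sigma=0$ (where $\mathcal{F}_0$ is trivial up to null sets) together with $V_n^\delta(x)=Z_n^\delta(0,x)$ gives
\[
    V_n^\delta(x) = \sup_{\tau\in\mathcal{S}} \E_x\bigl(e^{-\beta\tau} h_n^\delta(X(\tau))\bigr),
\]
so that $\{x : V_n^\delta(x)=h_n(x)\}$ (recall $h_n=h_n^\delta$) is exactly the stopping set associated with the reward $h+g_{n-1}^\delta$.

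First I would verify that $g := g_{n-1}^\delta$ satisfies the hypotheses of Lemma \ref{lemma stoppgebiert zuf}. Measurability and nonnegativity are part of the conclusion of Lemma \ref{defilem g_n zuf}, and right-continuity of $\bigl(e^{-\beta t} g_{n-1}^\delta(X(t))\bigr)_{t\ge 0}$ is its assertion (ii). For the supermartingale property I would use assertion (iii), namely $e^{-\beta t} g_{n-1}^\delta(X(t)) = \overline{Z}_{n-1}^{\delta,r}(t,x)$ $\mathbb{P}_x$-a.s.; since $\overline{Z}_{n-1}^{\delta,r}$ is the right-continuous modification of the supermartingale $\overline{Z}_{n-1}^{\delta}$ from Lemma \ref{defilem zuf wart supermart}, the process $\bigl(e^{-\beta t} g_{n-1}^\delta(X(t))\bigr)_{t\ge 0}$ is itself a right-continuous, nonnegative supermartingale under each $\mathbb{P}_x$.

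Next I would check the integrability bound. Because the process is a nonnegative supermartingale, $\E_x\bigl(e^{-\beta t} g_{n-1}^\delta(X(t))\bigr) \le \E_x\bigl(g_{n-1}^\delta(X(0))\bigr) = g_{n-1}^\delta(x) < \infty$ for every $t\ge 0$, whence $\sup_{t\ge 0}\E_x\bigl(e^{-\beta t} g_{n-1}^\delta(X(t))\bigr) < \infty$ for all $x\in E$, as required.

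Having verified all hypotheses, Lemma \ref{lemma stoppgebiert zuf} applied with $g=g_{n-1}^\delta$ yields that the stopping set of $h$ is contained in the stopping set of $h+g_{n-1}^\delta = h_n^\delta = h_n$, which by the identifications above is exactly the asserted inclusion. The only point requiring a little care is the supermartingale identification in the second step, i.e. transporting the supermartingale property from $\overline{Z}_{n-1}^{\delta}$ to the explicit functional $e^{-\beta t}g_{n-1}^\delta(X(t))$ via Lemma \ref{defilem g_n zuf}(iii); once this is in place the corollary is immediate.
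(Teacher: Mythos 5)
Your proof is correct and follows essentially the same route as the paper: the paper likewise invokes the identity $e^{-\beta t} g_{n-1}^\delta(X(t)) = \overline{Z}_{n-1}^{\delta,r}(t,x)$ from Lemma \ref{defilem g_n zuf}(iii) to obtain the right-continuous non-negative supermartingale property with finite supremum of expectations, and then concludes via the representation (\ref{eigenschaft V_n(X(t)) zuf}) and Lemma \ref{lemma stoppgebiert zuf}. Your write-up merely makes explicit the verification steps (identification of the two stopping sets and the integrability bound) that the paper leaves implicit.
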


\begin{proof}
    By  (\ref{eigenschaft overline Z(dot) zuf}) the process
    $\Bigr{(} e^{-\beta t} g_{n-1}^\delta(X(t)) \Bigl{)}_{t\geq 0}$ is a right-continuous and non-negative supermartingale with $\sup_{t\geq 0} \E_x\Bigr{(} e^{-\beta t} g_{n-1}^\delta(X(t)) \Bigl{)} < \infty $
    for all $x\in E$. Keeping (\ref{eigenschaft V_n(X(t)) zuf}) and Lemma \ref{lemma stoppgebiert zuf} in mind this proves the result.
\end{proof}

\begin{anm}
The case of discrete time Markov processes may be treated the same way.
\end{anm}

\subsection{Example: Multiple house-selling problem }\label{subsection:house}
As an application of the theory developed before we consider a multiple-exercise variant of the classical house-selling problem. The situation is the following: We would like to sell $n$ identical houses and assume that at each time point one offer comes in for one of the houses. By $X_k$ we denote the amount of the offer on day $k$, $k=0,1,...$. 
For each offer we have to decide whether to accept it or not. If we decide to accept the offer, then closing the contract lasts a random time $\delta\geq 1$, that is assumed to be independent of the offers. During that refraction time we cannot deal with new offers. When should we accept an offer if we are not able to recall and accept a past offer?\\
For the single exercise case the problem was treated in \cite{CRS}. We model the situation in the following way: Let $X_0,X_1,...$ be non-negative iid random variables with $\E(X_0^2)<\infty$. The last assertion guarantees that condition \eqref{eigenschaft supremumsprozess int} is fulfilled. Then we consider the discrete time stochastic process $Y(k)=\alpha^kX_k,k=0,1,2,...$, where $\alpha\in(0,1)$ is a fixed discounting factor. Furthermore, we consider i.i.d. refraction times $\delta_1,...,\delta_{n-1}$ that are assumed to be $>0$ and independent of $(X_0,X_1,...)$. Now, the problem is to maximize the expectation
\begin{equation}\label{eq:max_house}
\E\Bigr{(}\sum_{i=1}^nY(\tau_i)\Bigl{)}=\E\Bigr{(}\sum_{i=1}^n\alpha^{\tau_i}X_{\tau_i}\Bigl{)}
\end{equation}
over all $(\tau_1,...,\tau_n)\in  \mathcal{S}_0^n(\delta,\mathfrak{F})$, where $\mathfrak{F}$ denotes the filtration generated by $X_0,X_1,...$. Now, we enrich the filtration $\mathfrak{F}$ by $\delta_1,...,\delta_{n-1}$, i.e. we define $\overline{\mathfrak{F}}$ by $\overline{\mathcal{F}}_k={\mathcal{F}}_k\vee \sigma(\delta_1,...,\delta_{n-1})$. Trivially, the stochastic process $X_0,X_1,...$ is still Markovian w.r.t. $\overline{\mathfrak{F}}$. Furthermore, $\mathcal{S}_0^n(\delta,\mathfrak{F})\subseteq \mathcal{S}_0^n(\delta,\overline{\mathfrak{F}})$. We first maximize \eqref{eq:max_house} over the set $\mathcal{S}_0^n(\delta,\overline{\mathfrak{F}})$ and then see that the maximizer is indeed an element of $\mathcal{S}_0^n(\delta,\mathfrak{F})$. Using Theorem \ref{satz exzessive darstellung von n mal stoppen zuf} we see that 
\[Z_{n}^\delta(\sigma,x)=\underset{\tau\in\mathcal{S}_\sigma(\overline{\mathfrak{F}})}{\esssup}\; \E_x\Bigr{(} \alpha^\tau (X_\tau+g_{n-1}^{\delta}(X_\tau))\Big{|} \mathcal{F}_\sigma\Bigl{)},\]
where $g_{n-1}^\delta(x)=\E_x(\alpha^\delta V_{n-1}(X_{\delta})), \delta=\delta_{1}$. Conditioning on $\delta$ we see that $X_\delta$ has the same distribution as $X_1$. Therefore, we obtain that 
\[g_{n-1}^\delta(x)=\E(V_{n-1}(X_1))\E(\alpha^\delta)=:d_{n-1}\]
is independent of $x$. Moreover, 
\[Z_{n}^\delta(\sigma,x)=\underset{\tau\in\mathcal{S}_\sigma(\overline{\mathfrak{F}})}{\esssup}\; \E_x\Bigr{(} \alpha^\tau Z_{\tau}^{n-1}\Big{|} \mathcal{F}_\sigma\Bigl{)},\]
where ${Z_k^{n-1}}=X_k+d_{n-1}$. Consequently, we have reduced the multiple optimal stopping problem \eqref{eq:max_house} to the ordinary stopping problem for the case $n=1$ with random variables with adjusted distributions. The $d_{n-1}$ are computed successively and at each step the distribution is shifted by this constant quantity. Therefore, to solve the general problem we only have to consider the problem of maximizing
\[\E_x(\alpha^\tau X_\tau)\]
over all stopping times $\tau$ for a sequence of random variables $X_i,i=0,1,2,...$, see \cite{CRS}. It is well-known how to do this. We repreat the simple argument for completeness. We denote the value function by $v(x)$. Using the Bellman-principle we see that 
\begin{equation}\label{eq:bellman}
v(x)=\max\{x,\alpha \E_x( v(X_1))\}
\end{equation}
 for all $x$. Since $\alpha\E_x( v(X_1))$ is independent of $x$, we see that there exists $x^*$ such that $v(x)=x$ for $x\geq x^*$, and $v(x)=x^*$ for $x\leq x^*$, i.e. $v(x)=(x-x^*)^++x^*$. Equation \eqref{eq:bellman} now yields for $x=x^*$
\[x^*=\alpha\E((X_1-x^*)^++x^*)\]
that is
\[\frac{1-\alpha}{\alpha}=\E\left(\left(\frac{X_1}{x^*}-1\right)^+\right).\]
Since the right hand side is decreasing in $x^*$, we see that this determines $x^*$ uniquely. The optimal stopping time is now given by 
\[\inf\{k:X_k\geq x^*\}.\]

\section{A second model for random refraction times}\label{sec:second_model}
%
For a motivation consider employee options:
As a variable component of the salary the holder of the option has the right to exercise an option on the companies share price $n$ times during a given time period. But the total amount of variable compensations must not restrict the institution's ability to maintain an adequate capital base. Therefore the holder has to wait between two exercises, e.g., until the liquid assets of the company excess a certain level. Of course the waiting time is not deterministic, but random. But since the waiting time directly depends on the foregoing exercise time this situation is not included in the previous model. Motivated by this example we consider the following situation:

Let $B_1,...,B_{n-1}$ be Borel sets and let $X=(X(t))_{t\geq 0}$ be a further $\mathfrak{F}$-adapted and right-continuous process such that for all $i<n$ and $s\geq 0$
\begin{equation}\label{eq:finite}
    \bigcup_{t>s}\{ X(t)\notin B_i\}\quad \text{is a $\mathbb{P}$ null set}
\end{equation}
and
\begin{equation}\label{eigenschaft stoppzeit zu wart}
   \rho_s^{B_i}:= \inf\{t>s;\; X(t)\in B_i\} \quad \text{is an $\mathfrak{F}$-stopping time;}
\end{equation}
so, refraction times are assumed $<\infty$ a.s.
Then we consider the problem of determining
    \begin{equation}\label{eigenschaft defi stoppproblem mit zuf wart 2}
        Z_n(\sigma)= Z_n^{B}(\sigma):= \underset{\tau\in \mathcal{S}_\sigma^n(B)}{\esssup}\; \E\Bigr{(}\sum_{i=1}^nY(\tau_i)\Big{|}\mathcal{F}_\sigma\Bigl{)},
    \end{equation}
    where
    \[
        \mathcal{S}_\sigma^n(B):=\{\tau\in\mathcal{S}^n;\; \sigma \leq \tau_1, \; \rho_{\tau_i}^{B_i} \leq  \tau_{i+1} \, \text{for all $i< n$}\}.
    \]

%
In the employee option problem described above $X$ is modeled as the liquid assets of the company; of course this process is not independent of the gain process $Y$. Furthermore $B_1,...,B_{n-1}$ are bounded intervals of the form $[b_i,\infty)$.\\

As in \eqref{eigenschaft disk fall zuf wart} we have:
\begin{lem}\label{lemma zuf wart mod 2 disk}
    \begin{enumerate}[(i)]
    \item
    We use the notation
    \[
        \mathcal{S}_\sigma^n(B)_{\textnormal{disc}}:= \{\tau \in\mathcal{S}_\sigma^n(B); \textnormal{range}(\tau_1),...,\textnormal{range}(\tau_n) \, \textnormal{countable}\}.
    \]
     Then  for each stopping time $\sigma$
    \begin{equation}
         Z_n^{B}(\sigma)= \underset{\tau\in \mathcal{S}_\sigma^n(B)_{\textnormal{disc}}}{\esssup} \E\Bigr{(}\sum_{i=1}^nY(\tau_i)\Big{|}\mathcal{F}_\sigma\Bigl{)}.
    \end{equation}
    \item
     The set
     $\{\E\Bigr{(}\underset{i=1}{\overset{n}{\sum}}Y(\tau_i)\Big{|} \mathcal{F}_\sigma\Bigl{)};\; \tau\in\mathcal{S}_\sigma^{n}(B)\}$ is directed upwards.
     \end{enumerate}
\end{lem}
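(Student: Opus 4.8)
The plan is to treat the two parts independently, since neither uses the other: (ii) is a measurable-pasting argument and (i) is an approximation-from-above argument, both mirroring the corresponding statements for the first model (Lemma~\ref{lemma oben gerich} and Proposition~\ref{satz zus zuf wart disk nicht disk}).

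For (ii), given $\tau,\tau'\in\mathcal{S}_\sigma^n(B)$ I would set $A:=\{\E(\sum_{i}Y(\tau_i)\mid\mathcal{F}_\sigma)\geq\E(\sum_{i}Y(\tau_i')\mid\mathcal{F}_\sigma)\}\in\mathcal{F}_\sigma$ and define $c_i:=\tau_i 1_A+\tau_i' 1_{A^c}$. First I note that the refraction constraint forces $\sigma\leq\tau_1\leq\cdots\leq\tau_n$ (and likewise for $\tau'$), because $\rho^{B_i}_{\tau_i}\geq\tau_i$; hence $A\in\mathcal{F}_\sigma\subseteq\mathcal{F}_{\tau_i}\cap\mathcal{F}_{\tau_i'}$, which makes each $c_i$ an $\mathfrak{F}$-stopping time. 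Since $A\in\mathcal{F}_\sigma$, the constraints split pathwise: on $A$ one has $c_i=\tau_i$, so $\rho^{B_i}_{c_i}=\rho^{B_i}_{\tau_i}\leq\tau_{i+1}=c_{i+1}$, and symmetrically on $A^c$; thus $c\in\mathcal{S}_\sigma^n(B)$. Pulling the $\mathcal{F}_\sigma$-measurable indicators out of the conditional expectation gives $\E(\sum_{i}Y(c_i)\mid\mathcal{F}_\sigma)=\max\{\E(\sum_{i}Y(\tau_i)\mid\mathcal{F}_\sigma),\,\E(\sum_{i}Y(\tau_i')\mid\mathcal{F}_\sigma)\}$, which is precisely the directedness claim. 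I expect this part to be routine.

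For (i), the inequality ``$\geq$'' is immediate from $\mathcal{S}_\sigma^n(B)_{\textnormal{disc}}\subseteq\mathcal{S}_\sigma^n(B)$. For ``$\leq$'' I would fix $\tau\in\mathcal{S}_\sigma^n(B)$ and build discrete approximations recursively. Writing $\lceil\eta\rceil_m$ for the stopping time obtained by rounding a stopping time $\eta$ up to the next dyadic point $k2^{-m}$ (so $\lceil\eta\rceil_m$ has countable range, $\lceil\eta\rceil_m\geq\eta$ and $\lceil\eta\rceil_m\downarrow\eta$), I set $\tilde\tau_1^{(m)}:=\lceil\tau_1\rceil_m$ and $\tilde\tau_{i+1}^{(m)}:=\lceil\max\{\rho^{B_i}_{\tilde\tau_i^{(m)}},\,\tau_{i+1}\}\rceil_m$. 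Each $\tilde\tau_i^{(m)}$ has countable range by construction; since $\tilde\tau_i^{(m)}$ takes countably many values $\{s_j\}$ and $\rho^{B_i}_{\tilde\tau_i^{(m)}}=\rho^{B_i}_{s_j}$ on $\{\tilde\tau_i^{(m)}=s_j\}\in\mathcal{F}_{s_j}$, a standard check shows $\rho^{B_i}_{\tilde\tau_i^{(m)}}$ is again an $\mathfrak{F}$-stopping time, and $\tilde\tau_{i+1}^{(m)}\geq\rho^{B_i}_{\tilde\tau_i^{(m)}}$ by construction, so $\tilde\tau^{(m)}\in\mathcal{S}_\sigma^n(B)_{\textnormal{disc}}$.

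It remains to show $\tilde\tau_i^{(m)}\to\tau_i$ a.s.\ for every $i$, and here lies the main obstacle: one needs that the map $s\mapsto\rho_s^{B_i}$ is right-continuous in the starting point. This I would establish pathwise: $\rho_s^{B_i}$ is non-decreasing in $s$, and for $s_m\downarrow s$ any entry time of $X$ into $B_i$ just after $s$ is eventually also an entry time after $s_m$, whence $\rho^{B_i}_{s_m}\downarrow\rho^{B_i}_s$. Granting this, induction on $i$ yields the a.s.\ convergence: $\tilde\tau_1^{(m)}\downarrow\tau_1$, and if $\tilde\tau_i^{(m)}\to\tau_i$ then $\rho^{B_i}_{\tilde\tau_i^{(m)}}\to\rho^{B_i}_{\tau_i}\leq\tau_{i+1}$, so $\max\{\rho^{B_i}_{\tilde\tau_i^{(m)}},\tau_{i+1}\}\to\tau_{i+1}$ and applying $\lceil\cdot\rceil_m$ (which perturbs by at most $2^{-m}$) gives $\tilde\tau_{i+1}^{(m)}\to\tau_{i+1}$. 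Finally, as $\tilde\tau_i^{(m)}\geq\tau_i$ with $\tilde\tau_i^{(m)}\to\tau_i$, right-continuity of $Y$ gives $Y(\tilde\tau_i^{(m)})\to Y(\tau_i)$ a.s.; by \eqref{eigenschaft supremumsprozess int} the sums are dominated by $n\sup_{t\geq 0}Y(t)\in L^1$, so the conditional dominated convergence theorem yields $\E(\sum_{i}Y(\tilde\tau_i^{(m)})\mid\mathcal{F}_\sigma)\to\E(\sum_{i}Y(\tau_i)\mid\mathcal{F}_\sigma)$ a.s. Hence $\E(\sum_{i}Y(\tau_i)\mid\mathcal{F}_\sigma)\leq\esssup_{\rho\in\mathcal{S}_\sigma^n(B)_{\textnormal{disc}}}\E(\sum_{i}Y(\rho_i)\mid\mathcal{F}_\sigma)$, and taking the essential supremum over $\tau\in\mathcal{S}_\sigma^n(B)$ closes the argument.
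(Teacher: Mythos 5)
Your proposal is correct and follows essentially the route the paper intends: part (ii) is the pasting argument of Lemma \ref{lemma oben gerich}, made easier here because no enlarged filtrations are involved, and part (i) is the recursive discretization used in Proposition \ref{satz zus zuf wart disk nicht disk}, adapted to the hitting-time constraint by rounding up to dyadic grids. The one ingredient beyond what the paper's cross-references supply --- the pathwise right-continuity of $s\mapsto\rho_s^{B_i}$, which you obtain from monotonicity together with the observation that any entry time $t>s$ of $X$ into $B_i$ eventually satisfies $t>s_m$ --- is precisely the detail that makes the ``straightforward exercise'' work, and your treatment of it is sound.
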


\begin{proof} $(i)$ is a straightforward exercise and $(ii)$ is similar to (but easier than) the proof of Lemma \ref{lemma oben gerich} in \ref{proof lemma oben gerich}.
\end{proof}
%
%
Now we can adapt the proof of Theorem \ref{sazu zuf wart ind schr} and obtain the first step of the reduction principle:
\begin{satz}\label{satz red schritt zuf wart 2}
    For all stopping times $\sigma$, using the notation $LB:=(B_2,...,B_{n-1})$, it holds that
    \begin{eqnarray}
         Z_{n}^{B}(\sigma)
         & = & \underset{\tau\in \mathcal{S}_{\sigma, \textnormal{disc}}}{\esssup} \; \E\Bigr{(} Y(\tau)+ \E\bigr{(}Z_{n-1}^{LB}(\rho_\tau^{B_1})|\mathcal{F}_\tau \bigl{)}
                \Big{|}\mathcal{F}_\sigma \Bigl{)}\\
         & = &   \underset{\tau\in \mathcal{S}_\sigma}{\esssup} \;  \E\Bigr{(} Y(\tau)+ \E\bigr{(}Z_{n-1}^{LB}(\rho_\tau^{B_1})|\mathcal{F}_\tau \bigl{)}
                \Big{|}\mathcal{F}_\sigma \Bigl{)}.
    \end{eqnarray}
    \qed
\end{satz}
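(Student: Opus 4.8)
The plan is to mimic the proof of Theorem \ref{sazu zuf wart ind schr} line by line, taking advantage of the fact that this model requires \emph{no} filtration enrichment: since $X$ is $\mathfrak{F}$-adapted and every $\rho_s^{B_i}$ is an $\mathfrak{F}$-stopping time by \eqref{eigenschaft stoppzeit zu wart}, all stopping times occurring below are genuine $\mathfrak{F}$-stopping times and all conditioning is with respect to $\mathfrak{F}$. The first thing I would record is the decomposition
\[
\mathcal{S}_\sigma^n(B)=\bigl\{(\tau_1,\dots,\tau_n);\; \tau_1\in\mathcal{S}_\sigma,\ (\tau_2,\dots,\tau_n)\in\mathcal{S}_{\rho_{\tau_1}^{B_1}}^{n-1}(LB)\bigr\},
\]
which is the exact analogue of Lemma \ref{lemma eigenschaften stopp zufaellig}.(i) and follows straight from the definition of $\mathcal{S}_\sigma^n(B)$: the constraints $\sigma\leq\tau_1$ and $\rho_{\tau_1}^{B_1}\leq\tau_2$ say $\tau_1\in\mathcal{S}_\sigma$, while the remaining constraints $\rho_{\tau_i}^{B_i}\leq\tau_{i+1}$ for $i\geq 2$ say precisely that the tail $(\tau_2,\dots,\tau_n)$ is admissible for the $(n-1)$-fold problem started at $\rho_{\tau_1}^{B_1}$. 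Here $\rho_{\tau_1}^{B_1}$ is a finite $\mathfrak{F}$-stopping time by the standing assumptions \eqref{eq:finite}--\eqref{eigenschaft stoppzeit zu wart}, so $Z_{n-1}^{LB}(\rho_{\tau_1}^{B_1})$ is well defined.

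With this decomposition I would prove both equalities by the sandwich $Z_n^B(\sigma)\leq R_{\mathrm{disc}}\leq R_{\mathrm{full}}\leq Z_n^B(\sigma)$, where $R_{\mathrm{disc}}$ and $R_{\mathrm{full}}$ denote the two right-hand sides (esssup over $\mathcal{S}_{\sigma,\mathrm{disc}}$ and over $\mathcal{S}_\sigma$). For $Z_n^B(\sigma)\leq R_{\mathrm{disc}}$ I start from Lemma \ref{lemma zuf wart mod 2 disk}.(i), pick any $\tau\in\mathcal{S}_\sigma^n(B)_{\mathrm{disc}}$, note $\tau_1\in\mathcal{S}_{\sigma,\mathrm{disc}}$ and $(\tau_2,\dots,\tau_n)\in\mathcal{S}_{\rho_{\tau_1}^{B_1}}^{n-1}(LB)$, and estimate, using $\mathcal{F}_\sigma\subseteq\mathcal{F}_{\tau_1}\subseteq\mathcal{F}_{\rho_{\tau_1}^{B_1}}$, the tower property, and the defining esssup of $Z_{n-1}^{LB}$,
\[
\E\Bigl(\sum_{i=1}^n Y(\tau_i)\,\Big|\,\mathcal{F}_\sigma\Bigr)
=\E\Bigl[Y(\tau_1)+\E\Bigl(\sum_{i=2}^n Y(\tau_i)\,\Big|\,\mathcal{F}_{\rho_{\tau_1}^{B_1}}\Bigr)\,\Big|\,\mathcal{F}_\sigma\Bigr]
\leq \E\Bigl[Y(\tau_1)+\E\bigl(Z_{n-1}^{LB}(\rho_{\tau_1}^{B_1})\,\big|\,\mathcal{F}_{\tau_1}\bigr)\,\Big|\,\mathcal{F}_\sigma\Bigr].
\]
Taking the esssup over $\tau$ gives $Z_n^B(\sigma)\leq R_{\mathrm{disc}}$, and $R_{\mathrm{disc}}\leq R_{\mathrm{full}}$ is trivial since $\mathcal{S}_{\sigma,\mathrm{disc}}\subseteq\mathcal{S}_\sigma$.

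For the closing inequality $R_{\mathrm{full}}\leq Z_n^B(\sigma)$ I fix $\tau_1\in\mathcal{S}_\sigma$ and invoke the upward-directedness of the tail family (Lemma \ref{lemma zuf wart mod 2 disk}.(ii) applied with $n-1$, $LB$ and initial time $\rho_{\tau_1}^{B_1}$) to extract, via the standard property of the essential supremum of an upward-directed set (cf. \cite[Lemma 1.3]{PS}), a sequence $(\tau_2^k,\dots,\tau_n^k)\in\mathcal{S}_{\rho_{\tau_1}^{B_1}}^{n-1}(LB)$ with $\E(\sum_{i=2}^n Y(\tau_i^k)\mid\mathcal{F}_{\rho_{\tau_1}^{B_1}})\uparrow Z_{n-1}^{LB}(\rho_{\tau_1}^{B_1})$. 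By the decomposition each spliced vector $(\tau_1,\tau_2^k,\dots,\tau_n^k)$ again lies in $\mathcal{S}_\sigma^n(B)$, so $Z_n^B(\sigma)\geq\E(Y(\tau_1)+\sum_{i=2}^n Y(\tau_i^k)\mid\mathcal{F}_\sigma)$; letting $k\to\infty$ and passing the monotone limit through $\E(\,\cdot\mid\mathcal{F}_\sigma)$ yields $Z_n^B(\sigma)\geq\E(Y(\tau_1)+\E(Z_{n-1}^{LB}(\rho_{\tau_1}^{B_1})\mid\mathcal{F}_{\tau_1})\mid\mathcal{F}_\sigma)$, and the esssup over $\tau_1\in\mathcal{S}_\sigma$ gives $Z_n^B(\sigma)\geq R_{\mathrm{full}}$, closing the chain.

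I expect the main obstacle to be precisely this last splicing-and-limit step: I must be certain that replacing the first component of the maximizing tail sequence by the fixed $\tau_1$ produces an admissible element of $\mathcal{S}_\sigma^n(B)$ (guaranteed exactly by the decomposition), and that the monotone passage to the limit inside the conditional expectation is legitimate (conditional monotone convergence, with non-negativity of $Y$ and the integrability \eqref{eigenschaft supremumsprozess int} of $\sup_t Y(t)$ supplying the bound). Everything else is bookkeeping; crucially, the absence of filtration enrichment means the technical companions needed in Theorem \ref{sazu zuf wart ind schr} (Lemmas \ref{lemma abz stoppteiten filtration}--\ref{defilem zuf wart supermart} and the restriction-to-$\{\sigma=\tau\}$ arguments) are not required here, which is why the proof can be merely sketched.
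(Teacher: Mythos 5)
Your proposal is correct and takes essentially the same route as the paper: the paper omits the proof of this theorem entirely (stating only that one "can adapt the proof of Theorem \ref{sazu zuf wart ind schr}"), and your argument is precisely that adaptation --- the decomposition of $\mathcal{S}_\sigma^n(B)$ playing the role of Lemma \ref{lemma eigenschaften stopp zufaellig}.(i), Lemma \ref{lemma zuf wart mod 2 disk} supplying the discrete reduction and upward directedness, and the splicing plus conditional monotone convergence closing the chain. Your observation that no filtration enrichment (and hence none of Lemmas \ref{lemma abz stoppteiten filtration}--\ref{defilem zuf wart supermart}) is needed here is exactly why the paper felt free to leave the proof as a sketch.
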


In contrast to the model of Section \ref{Abschnitt zuf wart} we obtain the second claim of the following Lemma immediately from the definition of the stopping problem (\ref{eigenschaft defi stoppproblem mit zuf wart 2}):

\begin{lem}\label{defilem zuf wart supermart 2}
    The process defined by
    \begin{equation}\label{eigenschaft defi zuf wart quer}
        \overline{Z}_{n-1}^{B}(t):=
        \E\bigr{(}Z_{n-1}^{LB}(\rho_t^{B_1})|\mathcal{F}_t \bigl{)}
    \end{equation}
    is a supermartingal $\overline{Z}_{n-1}^{B}$ with the following properties:
    \begin{enumerate}[(i)]
        \item
            $\overline{Z}_{n-1}^{B}$ has a right-continuous modification
            $\overline{Z}_{n-1}^{B,r}$.
        \item For all $\tau\in\mathcal{S}$ with countable range it holds that
            \begin{equation}\label{eigenschaft zuf wart bed erw 2}
                \overline{Z}_{n-1}^{B,r}(\tau)= \overline{Z}_{n-1}^{B}(\tau)=
                \E\bigr{(}Z_{n-1}^{LB}(\rho_\tau^{B_1})|\mathcal{F}_\tau \bigl{)}.
            \end{equation}
    \end{enumerate}
\end{lem}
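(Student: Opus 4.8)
The plan is to follow the structure of the proof of Lemma \ref{defilem zuf wart supermart}, exploiting that in the present model the value $Z_{n-1}^{LB}$ lives on the fixed filtration $\mathfrak{F}$, so that none of the filtration-enrichment results (Proposition \ref{satz eigenschaften von neuer filtration} and Lemma \ref{lemma abz stoppteiten filtration}) are needed; their role is taken over by the monotonicity of the map $s\mapsto\rho_s^{B_1}$.

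First I would establish the supermartingale property. For $s\le t$ one has $\rho_s^{B_1}\le\rho_t^{B_1}$, hence $\mathcal{F}_s\subseteq\mathcal{F}_{\rho_s^{B_1}}\subseteq\mathcal{F}_{\rho_t^{B_1}}$ together with the inclusion $\mathcal{S}_{\rho_t^{B_1}}^{n-1}(LB)\subseteq\mathcal{S}_{\rho_s^{B_1}}^{n-1}(LB)$. Writing $Z_{n-1}^{LB}(\rho_t^{B_1})$ as an essential supremum and using the upward-directedness from Lemma \ref{lemma zuf wart mod 2 disk}.(ii) to pull the $\esssup$ out of $\E(\cdot\mid\mathcal{F}_{\rho_s^{B_1}})$ (via the tower property), the set inclusion gives
\begin{align*}
\E\bigl(Z_{n-1}^{LB}(\rho_t^{B_1})\mid\mathcal{F}_{\rho_s^{B_1}}\bigr)
 &= \esssup_{\tau\in\mathcal{S}_{\rho_t^{B_1}}^{n-1}(LB)}\E\Bigl(\sum_{i=1}^{n-1}Y(\tau_i)\,\Big|\,\mathcal{F}_{\rho_s^{B_1}}\Bigr)\\
 &\le \esssup_{\tau\in\mathcal{S}_{\rho_s^{B_1}}^{n-1}(LB)}\E\Bigl(\sum_{i=1}^{n-1}Y(\tau_i)\,\Big|\,\mathcal{F}_{\rho_s^{B_1}}\Bigr)=Z_{n-1}^{LB}(\rho_s^{B_1}).
\end{align*}
Applying $\E(\cdot\mid\mathcal{F}_s)$ and the tower property turns this into $\E(\overline{Z}_{n-1}^B(t)\mid\mathcal{F}_s)\le\overline{Z}_{n-1}^B(s)$.

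For claim (i) I would use the standard fact that a supermartingale on a filtration satisfying the usual conditions admits a right-continuous modification as soon as $t\mapsto\E(\overline{Z}_{n-1}^B(t))=\E(Z_{n-1}^{LB}(\rho_t^{B_1}))$ is right-continuous; this map is non-increasing by the supermartingale property, so only the inequality $\liminf_{t_k\downarrow t}\E(\overline{Z}_{n-1}^B(t_k))\ge\E(\overline{Z}_{n-1}^B(t))$ must be checked. The key geometric observation is that $s\mapsto\rho_s^{B_1}$ is right-continuous: for $t_k\downarrow t$ one has $\rho_{t_k}^{B_1}\downarrow\rho_t^{B_1}$, and in fact $\rho_{t_k}^{B_1}=\rho_t^{B_1}$ as soon as $t_k<\rho_t^{B_1}$, since no hitting of $B_1$ can occur in $(t,t_k]$. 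Hence on $\{\rho_t^{B_1}>t\}$ the quantity $Z_{n-1}^{LB}(\rho_{t_k}^{B_1})$ eventually coincides with $Z_{n-1}^{LB}(\rho_t^{B_1})$, and monotone/dominated convergence (justified by \eqref{eigenschaft supremumsprozess int}) yields the right-continuity of the expectation on that event. I expect this step to be the main obstacle: unlike the additive constraints $\tau_{i-1}+\delta_{i-1}\le\tau_i$ of Section \ref{Abschnitt zuf wart}, the constraints $\rho_{\tau_i}^{B_i}\le\tau_{i+1}$ are not translation invariant, so the shift $\tau_i\mapsto\tau_i+\Delta_k$ used in Lemma \ref{defilem zuf wart supermart}.(i) no longer transports feasible families. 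The remedy is precisely the coincidence $\rho_{t_k}^{B_1}=\rho_t^{B_1}$ above, which makes the original schedule feasible again for the perturbed start; the residual event $\{\rho_t^{B_1}=t\}$, which shrinks as $k\to\infty$ and is null in the motivating examples where refraction times are a.s.\ positive, has to be treated separately using right-continuity of $Y$ and $X$.

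Finally, claim (ii) is the one that becomes immediate here, as announced before the statement. For an $\mathfrak{F}$-stopping time $\tau$ with countable range I would decompose over its values $\lambda$. On $\{\tau=\lambda\}$ one has the pointwise identity $\rho_\tau^{B_1}=\rho_\lambda^{B_1}$ together with $\mathcal{F}_\tau\mid_{\{\tau=\lambda\}}=\mathcal{F}_\lambda\mid_{\{\tau=\lambda\}}$, so that
\[
1_{\{\tau=\lambda\}}\,\overline{Z}_{n-1}^B(\lambda)=1_{\{\tau=\lambda\}}\,\E\bigl(Z_{n-1}^{LB}(\rho_\lambda^{B_1})\mid\mathcal{F}_\lambda\bigr)=1_{\{\tau=\lambda\}}\,\E\bigl(Z_{n-1}^{LB}(\rho_\tau^{B_1})\mid\mathcal{F}_\tau\bigr);
\]
summing over the countably many $\lambda$ gives $\overline{Z}_{n-1}^B(\tau)=\E(Z_{n-1}^{LB}(\rho_\tau^{B_1})\mid\mathcal{F}_\tau)$. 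Here, in contrast to Lemma \ref{lemma abz stoppteiten filtration}, no statement about enriched filtrations is required, since $Z_{n-1}^{LB}$ depends on $\tau$ only through the start time $\rho_\tau^{B_1}$ and not through the filtration. The identity $\overline{Z}_{n-1}^{B,r}(\tau)=\overline{Z}_{n-1}^B(\tau)$ for such $\tau$ then follows from the same decomposition, since the right-continuous modification agrees with $\overline{Z}_{n-1}^B$ at every deterministic time.
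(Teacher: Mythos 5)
Your supermartingale computation and your part (ii) coincide with what the paper does: the paper proves the supermartingale property by exactly your argument (monotonicity $s\le\rho_s^{B_1}\le\rho_t^{B_1}$, the inclusion $\mathcal{S}_{\rho_t^{B_1}}^{n-1}(LB)\subseteq\mathcal{S}_{\rho_s^{B_1}}^{n-1}(LB)$, and the upward-directedness of Lemma \ref{lemma zuf wart mod 2 disk}.(ii) to interchange $\esssup$ with conditional expectation), and it dismisses (ii) with the remark that it holds ``by the definition of $\overline{Z}_{n-1}^{B}$''; your decomposition over the countable range is the natural fleshing-out of that remark (note only that on $\{\tau=\lambda\}$ the identity $Z_{n-1}^{LB}(\rho_\tau^{B_1})=Z_{n-1}^{LB}(\rho_\lambda^{B_1})$ is not purely pointwise: the value is an $\esssup$ over a constraint set and a conditional expectation, both depending on the stopping time globally, so a standard patching argument is still needed there). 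For (i), the paper is terser than you: it simply cites \cite[Theorem 3.13]{KS}, leaving the required right-continuity of $t\mapsto\E(\overline{Z}_{n-1}^{B}(t))$ implicit, consistent with its announcement that proofs in this section are only sketched. You correctly identify that this mean right-continuity is the substantive thing to check.

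However, your treatment of that check — the step you yourself call the main obstacle — has a genuine gap. The residual event $\{\rho_t^{B_1}=t\}$ does not ``shrink as $k\to\infty$'': it does not depend on $k$ at all; what shrinks is $\{\rho_t^{B_1}\le t_k\}$, and it decreases precisely to $\{\rho_t^{B_1}=t\}$. Nor is this event null in the motivating examples: in the perpetual put of Section \ref{sec:example}, with $B_1=[z_0,\infty)$, right-continuity of $A$ gives $\rho_t^{B_1}=t$ on the positive-probability set $\{A(t)>z_0\}$, and the lemma assumes nothing that would exclude this. So your argument only yields $\liminf_k\E\bigl(Z_{n-1}^{LB}(\rho_{t_k}^{B_1})\bigr)\ge\E\bigl(Z_{n-1}^{LB}(\rho_t^{B_1})1_{\{\rho_t^{B_1}>t\}}\bigr)$, and the missing contribution of $\{\rho_t^{B_1}=t\}$ cannot be recovered from ``right-continuity of $Y$ and $X$'' alone, because what must be controlled there is the value process along $\rho_{t_k}^{B_1}\downarrow t$ — exactly the kind of right-continuity in question. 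The clean repair is the analogue of the shift argument in Lemma \ref{defilem zuf wart supermart}.(i), with the additive shift $\Delta_k$ replaced by a minimal delay: given $(\tau_1,\dots,\tau_{n-1})\in\mathcal{S}_{\rho_t^{B_1}}^{n-1}(LB)$, set $\tau_1^k:=\tau_1\vee\rho_{t_k}^{B_1}$ and recursively $\tau_{i+1}^k:=\tau_{i+1}\vee\rho_{\tau_i^k}^{B_{i+1}}$ (hitting times started at stopping times are again stopping times, as is used implicitly throughout this section, e.g.\ already in the statement of the lemma). Then $\tau^k\in\mathcal{S}_{\rho_{t_k}^{B_1}}^{n-1}(LB)$, and the pathwise right-continuity of $s\mapsto\rho_s^{B}$ along decreasing sequences — which you did establish, and which holds equally at random starting times — gives $\tau_i^k\downarrow\tau_i$ on all of $\Omega$. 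Right-continuity of $Y$, dominated convergence via \eqref{eigenschaft supremumsprozess int}, and the directedness-based exchange of supremum and expectation then yield $\E\bigl(Z_{n-1}^{LB}(\rho_t^{B_1})\bigr)\le\lim_k\E\bigl(Z_{n-1}^{LB}(\rho_{t_k}^{B_1})\bigr)$ with no case distinction, completing (i).
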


\begin{proof}
For all $t\geq s\geq 0$ we have $s \leq \rho_s^{B_1} \leq \rho_t^{B_1}$ and hence
        \begin{eqnarray*}
        \E(\overline{Z}_{n-1}^{B}(t) |\mathcal{F}_s)
          =  \E\Bigr{(} \E(Z_{n-1}^{LB}(\rho_t^{B_1})|\mathcal{F}_{\rho_s^{B_1}}) \Big{|}\mathcal{F}_s \Bigl{)}
         \overset{(*)}{\leq}  \E(Z_{n-1}^{LB}(\rho_s^{B_1})|\mathcal{F}_s)
        = \overline{Z}_{n-1}^{B}(s).
    \end{eqnarray*}
    Here (*) is valid since
    \begin{eqnarray} \label{eigenschaft rechnung 2}
         \E(Z_{n-1}^{LB}(\rho_t^{B_1})|\mathcal{F}_{\rho_s^{B_1}})
         & = &  \E\Bigr{[}\underset{\tau\in \mathcal{S}_{\rho_t^{B_1}}^{n-1}(LB)}{\esssup} \; \E\Bigr{(}\sum_{i=1}^{n-1}Y(\tau_i)\Big{|}\mathcal{F}_{\rho_t^{B_1}}\Bigl{)}\Big{|}\mathcal{F}_{\rho_s^{B_1}}\Bigl{]}\\\nonumber
         & \leq & \E\Bigr{[}\underset{\tau\in \mathcal{S}_{\rho_s^{B_1}}^{n-1}(LB)}{\esssup} \; \E\Bigr{(}\sum_{i=1}^{n-1}Y(\tau_i)\Big{|}\mathcal{F}_{\rho_t^{B_1}}\Bigl{)}\Big{|}\mathcal{F}_{\rho_s^{B_1}}\Bigl{]}\\\nonumber
         & \overset{\ref{lemma zuf wart mod 2 disk}.\textnormal{(ii)}}{=} & \underset{\tau\in \mathcal{S}_{\rho_s^{B_1}}^{n-1}(LB)}{\esssup} \; \E\Bigr{(}\sum_{i=1}^{n-1}Y(\tau_i)\Big{|}\mathcal{F}_{\rho_s^{B_1}}\Bigl{)}\\ \nonumber
         & = & Z_{n-1}^{LB}(\rho_s^{B_1}).
    \end{eqnarray}
    Now Property (i) holds by \cite[Theorem 3.13]{KS} and
    (ii) holds by the definition of $\overline{Z}_{n-1}^{B}$.
\end{proof}

Therefore we obtain

\begin{satz}\label{satz zuf wart alt modell hs}
     $ Y_n:=Y_{n}^B:= Y+ \overline{Z}_{n-1}^{B,r}$ is an $\mathfrak{F}$-adapted and right-continuous stochastic process and fulfills
    \begin{equation}\label{eigenschaft neues stoppp zuf wart 2}
        Z_{n}^{B}(\sigma)= \underset{\tau\in \mathcal{S}_\sigma}{\esssup}\; \E(Y_{n}(\tau)|\mathcal{F}_\sigma)
    \end{equation}
    for all $\sigma\in\mathcal{S}$.
\end{satz}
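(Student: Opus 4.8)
The plan is to mirror the proof of the main reduction theorem of Section~\ref{reduction}, substituting the ingredients of the present model: Theorem~\ref{satz red schritt zuf wart 2} for the one-step reduction and Lemma~\ref{defilem zuf wart supermart 2} for the supermartingale and its right-continuous modification. The regularity claim is immediate: $Y$ is $\mathfrak{F}$-adapted and right-continuous by the standing assumptions of Section~\ref{model}, and by Lemma~\ref{defilem zuf wart supermart 2}.(i) the process $\overline{Z}_{n-1}^{B,r}$ is an adapted, right-continuous modification of $\overline{Z}_{n-1}^{B}$; hence the sum $Y_n = Y + \overline{Z}_{n-1}^{B,r}$ is $\mathfrak{F}$-adapted and right-continuous as well.

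For \eqref{eigenschaft neues stoppp zuf wart 2} I would begin from the discrete form provided by Theorem~\ref{satz red schritt zuf wart 2},
\[
Z_n^B(\sigma) = \underset{\tau \in \mathcal{S}_{\sigma,\textnormal{disc}}}{\esssup}\; \E\Bigl( Y(\tau) + \E\bigl( Z_{n-1}^{LB}(\rho_\tau^{B_1}) \bigm| \mathcal{F}_\tau \bigr) \Bigm| \mathcal{F}_\sigma \Bigr).
\]
Every $\tau$ in this supremum has countable range, so \eqref{eigenschaft zuf wart bed erw 2} of Lemma~\ref{defilem zuf wart supermart 2}.(ii) permits replacing the inner conditional expectation by $\overline{Z}_{n-1}^{B,r}(\tau)$; the integrand then becomes $Y(\tau) + \overline{Z}_{n-1}^{B,r}(\tau) = Y_n(\tau)$, giving
\[
Z_n^B(\sigma) = \underset{\tau \in \mathcal{S}_{\sigma,\textnormal{disc}}}{\esssup}\; \E\bigl( Y_n(\tau) \bigm| \mathcal{F}_\sigma \bigr).
\]
Since $\mathcal{S}_{\sigma,\textnormal{disc}} \subseteq \mathcal{S}_\sigma$, the left-hand side is at most $\esssup_{\tau\in\mathcal{S}_\sigma}\E(Y_n(\tau)\mid\mathcal{F}_\sigma)$, and only the reverse inequality remains.

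I expect this last passage, from countable-range to arbitrary stopping times, to be the main obstacle, and I would handle it by approximation using the right-continuity of $Y_n$. Given $\tau \in \mathcal{S}_\sigma$, take the dyadic discretisations $\tau_k := 2^{-k}\lceil 2^k \tau \rceil$ (with $\tau_k := \infty$ on $\{\tau = \infty\}$), which decrease to $\tau$, have countable range, and satisfy $\tau_k \geq \tau \geq \sigma$, so $\tau_k \in \mathcal{S}_{\sigma,\textnormal{disc}}$. Right-continuity yields $Y_n(\tau_k) \to Y_n(\tau)$ a.s., and to interchange limit and conditional expectation I would invoke uniform integrability: $Y(\tau_k) \leq \sup_{t\geq 0} Y(t)$ is integrable by \eqref{eigenschaft supremumsprozess int}, while $Z_{n-1}^{LB}(\rho) \leq (n-1)\,\E(\sup_{t\geq 0} Y(t)\mid\mathcal{F}_\rho)$ for every stopping time $\rho$ gives, via Lemma~\ref{defilem zuf wart supermart 2}.(ii) and the tower property, the domination $\overline{Z}_{n-1}^{B,r}(\tau_k) \leq \E\bigl((n-1)\sup_{t\geq 0} Y(t) \bigm| \mathcal{F}_{\tau_k}\bigr)$; conditional expectations of a single integrable random variable form a uniformly integrable family. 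Hence $\E(Y_n(\tau_k)\mid\mathcal{F}_\sigma) \to \E(Y_n(\tau)\mid\mathcal{F}_\sigma)$, and as each term on the left is dominated by $\esssup_{\rho\in\mathcal{S}_{\sigma,\textnormal{disc}}}\E(Y_n(\rho)\mid\mathcal{F}_\sigma) = Z_n^B(\sigma)$, so is the limit. This supplies the missing inequality, and the claim follows.
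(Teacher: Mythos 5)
Your proposal is correct and follows the paper's own route exactly: the one-step reduction of Theorem~\ref{satz red schritt zuf wart 2}, the substitution $\E\bigl(Z_{n-1}^{LB}(\rho_\tau^{B_1})\big|\mathcal{F}_\tau\bigr)=\overline{Z}_{n-1}^{B,r}(\tau)$ for countable-range $\tau$ via Lemma~\ref{defilem zuf wart supermart 2}.(ii), and then passage to general stopping times by right-continuity. The only difference is that you carry out in detail the approximation step (dyadic discretisation plus uniform integrability of $\bigl(Y_n(\tau_k)\bigr)_k$) which the paper compresses into the phrase ``using the right-continuity of $Y_n$'', and your justification of that step is sound.
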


\begin{proof}
    Applying Theorem \ref{satz red schritt zuf wart 2} and Lemma \ref{defilem zuf wart supermart 2}.(ii) and using the right-continuity of $Y_n$ yields
    \begin{eqnarray*} \qquad\qquad\qquad\qquad\qquad\,
        Z_{n}^{B}(\sigma)
         & = & \underset{\tau\in \mathcal{S}_{\sigma, \textnormal{disc}}}{\esssup} \; \E\Bigr{(} Y(\tau)+ \E\bigr{(}Z_{n-1}^{LB}(\rho_\tau^{B_1})|\mathcal{F}_\tau \bigl{)}
                \Big{|}\mathcal{F}_\sigma \Bigl{)}\\
         & = & \underset{\tau\in \mathcal{S}_{\sigma, \textnormal{disc}}}{\esssup} \; \E\Bigr{(} Y_n(\tau)|\mathcal{F}_\tau \bigl{)}
                \Big{|}\mathcal{F}_\sigma \Bigl{)}\\
         & = &    \underset{\tau\in \mathcal{S}_\sigma}{\esssup} \; \E\Bigr{(} Y_n(\tau)|\mathcal{F}_\tau \bigl{)}
                \Big{|}\mathcal{F}_\sigma \Bigl{)}.
    \end{eqnarray*}
\end{proof}

\section{Example:\ Perpetual put option}\label{sec:example}
As an example we consider a perpetual put option with multiple stopping in a Black-Scholes market with waiting times depending on the asset price. To be more precise let  $W$ be a Brownian motion and let $\mathfrak{F}$ be the right-continuous filtration generated by $W$. Furthermore let $K,\sigma{> 0}$ , $E=(0,\infty)$, $h(x)= (K-x)^+$ and the asset price process $A$ be given by
\[
  A(t)= x \, \exp\Bigr{(} \sigma W_t + ( \beta - \frac{\sigma^2}{2})t \Bigl{)}
\]
under $\mathbb{P}_x$. The reward process $Y$ is  given by
\[Y(t)=e^{-\beta t}h(A_t).\]
 For deterministic waiting times it can be shown that there exist $x_1^*\leq x_2^* \leq ...\leq x_n^*$ such that the threshold times for $A$ over $x_i$ is the optimal time for the $i$-th exercise, see \cite[Section 6.2]{Car}. But it seems very hard to determine $x_2^*,x_3^*,...$ explicitly in non-trivial examples. However we can find an explicit solution when dealing with some random waiting times in our second model:\\
We set $X=A$ and assume that between each two exercises we always have to wait until the process reaches a level $z_0\geq K$, i.e. $B_i=[z_0,\infty)$ for all $i$. For the refraction times to be finite -- see \eqref{eq:finite} -- we assume that $\beta\geq \sigma^2/2$.  Since the problem has a Markovian structure it is reasonable to use this as  in Section \ref{markov}. The results and notations given there can immediately be taken over to the second model and we use them in the following.\\
It is well-known that with only one stopping opportunity (i.e. $n=1$) it is optimal to stop when the process reaches
\[x_1^*:= \frac{K}{1+ \sigma^2/ 2\beta},\]
see e.g. \cite{BL}. Then for $x\leq z_0$ the function $g_1^{(\delta_1)}$ is given by
\begin{align}\label{g_1}
g_1^{(\delta_1)}(x)&:=\E_x(e^{-\beta\delta_1}V_1(A(\delta_1)))=V_1(z_0)\E_x(e^{-\beta \delta_1})=c_1x,
\end{align}
and for $x> z_0$
 \begin{align*}
g_1^{(\delta_1)}(x)&:=V_1(x),
\end{align*}
where
\[c_1=c_1(z_0)=\frac{V_1(z_0)}{z_0}=\frac{K-x_1^*}{z_0}\left(\frac{z_0}{x_1^*}\right)^\gamma, \gamma=-\frac{2r}{\sigma^2},\] see e.g. \cite{Borodin} for the last equality in \eqref{g_1}. The explicit representation of $g_1^{(\delta_1)}$ is the main reason why an explicit solution is possible in this example. Now note that
\begin{align*}
\frac{V_1(K)}{K}&=\frac{(K-x_1^*)\left(\frac{K}{x_1^*}\right)^\gamma}{K}=\left(1-\frac{1}{1+\frac{\sigma^2}{2\beta}}\right)\left(1+\frac{\sigma^2}{2\beta}\right)^\gamma\\
&=\frac{\sigma^2}{2\beta}\left(1+\frac{\sigma^2}{2\beta}\right)^{\gamma-1}< 1.
\end{align*}
Therefore we obtain $c_1< 1$ since $z_0\geq K$.\\
We concentrate on the case $n=2$. In this situation the problem to be solved is given by
\begin{align*}
V_2^{(\delta_1)}(x)&=\sup_\tau \E_x(e^{-\beta\tau}((K-A_\tau)^++g_{1}^{(\delta_1)}(A_{\tau})))\\
&=\sup_\tau \E_x(e^{-\beta\tau}h(A_\tau)),
\end{align*}
where the reward function $h$ is a continuous functions given by
\[h(x)=
\begin{cases}
K-(1-c_1)x&,\,x\leq K,\\
c_1x&,\,K<x\leq z_0,\\
V_1(x)=c_1z_0^{1-\gamma}x^\gamma&,\,x>z_0.
\end{cases}
\]
Obviously it holds that
\begin{equation}\label{eq:maj}
h(x) \geq (K-(1-c_1)x)^+\mbox{ for all $x$.}
\end{equation}
 On the other hand,
\begin{align*}
\sup_\tau\E_x(e^{-\beta\tau}(K-(1-c_1)A_{\tau})^+)
&=\sup_\tau\E_{(1-c_1)x}(e^{-\beta\tau}(K-A_{\tau})^+)\\
&=V_1((1-c_1)x).
\end{align*}
This shows that $x\mapsto V_1((1-c_1)x)$ is $\beta$-excessive and a majorant of $h$. Therefore, keeping \eqref{eq:maj} in mind, we obtain that
\[V_2^{(\delta_1)}(x)=V_1((1-c_1)x).\]
We see that for $n=2$ it is optimal to stop the first time when the process started in $(1-c_1)x$ reaches $x_1^*$, i.e. if the process started in $x$ reaches
\[x_2^*=\frac{x_1^*}{1-c_1}= \frac{K}{(1-c_1)(1+ \sigma^2/ 2\beta)}.\]
 Now the optimal solution for $n>2$ can be found the same way using induction.

\section*{Acknowledgements}
This paper was partly written during the first authors stay at \AA bo Akademi in the project \textit{Applied Markov processes -- fluid queues, optimal stopping and population dynamics, Project number: 127719  (Academy of Finland)}.  He would like to express his gratitude for the hospitality and support.

  \begin{appendix}
  \section{Appendix: Some proofs}
\subsection{Proof of Proposition \ref{satz eigenschaften von neuer filtration}}
\label{proof satz eigenschaften von neuer filtration}
$(i)$ and $(ii)$ are immediate from the definition. \\
For $(iii)$ note that
            \begin{eqnarray*}
                 \mathcal{G}_t^{\sigma+\rho}\mid_{ \{\sigma=\tau\}}
            &  = & \sigma\Bigr{(} \{ \{\sigma + \rho \leq s\};\; s\leq t\}\mid_{ \{\sigma=\tau\}} \Bigl{)}\\
             & = & \sigma\Bigr{(} \{ \{\tau + \rho \leq s\};\; s\leq t\}\mid_{ \{\sigma=\tau\}} \Bigl{)}\\
            & = & \mathcal{G}_t^{\tau+\rho}\mid_{ \{\sigma=\tau\}},
            \end{eqnarray*}
            hence
            \begin{eqnarray*}
              \sigma(\mathcal{A}_{t},\mathcal{G}^{\sigma+\rho}_{t})\mid_{ \{\sigma=\tau\}}
                & = & \sigma(\mathcal{A}_{t},\mathcal{G}^{\tau+\rho}_{t})\mid_{ \{\sigma=\tau\}}.
            \end{eqnarray*}
To prove $(iv)$ write $\text{range}(\sigma) \cup \text{range}(\tau)= \{\lambda_i;\; i\in I\}$, $I$ countable. Then it holds
                   \begin{eqnarray*}
                        \{ \tau+\rho \leq t\}
                        & = & \quad \;\; \bigcup_{i,j\in I} \{\sigma=\lambda_i\}\cap \{ \tau=\lambda_j\}\cap \{\lambda_j+\rho \leq t\}\\
                        & = &  \;\; \bigcup_{i,j\in I,\; \lambda_j\leq t}  \{\sigma=\lambda_i\}\cap \{ \tau=\lambda_j\}\cap \{\lambda_i+\rho \leq t +\lambda_i - \lambda_j\}\\
                        & \overset{\sigma\leq \tau}{=} & \bigcup_{i,j\in I,\, \lambda_i\leq \lambda_j\leq t} \{\sigma=\lambda_i\}\cap \{ \tau=\lambda_j\}\cap \{\sigma+\rho \leq t +\lambda_i - \lambda_j\}\\
                        & \in & \bigcup_{i,j\in I,\, \lambda_i\leq \lambda_j\leq t} \sigma\Bigr{(}\mathcal{A}_{\lambda_i}, \mathcal{A}_{\lambda_j}, \mathcal{A}_{t+\lambda_i - \lambda_j}^{\sigma+\rho}\Bigl{)}  \quad \;\; \subseteq   \quad \;\; \mathcal{A}_{t}^{\sigma+\rho},
                    \end{eqnarray*}
                  therefore
                    \begin{eqnarray*}\qquad\quad\;\;\;
                        \sigma(\mathcal{A}_t,\mathcal{G}^{\tau+\rho}_t) =  \sigma(\mathcal{A}_t,\{\{\tau+\rho \leq s\}, s\leq t\})\subseteq  \sigma(\mathcal{A}_t,\mathcal{A}_{t}^{\sigma+\rho})= \mathcal{A}_{t}^{\sigma+\rho}.
                    \end{eqnarray*}

\subsection{Proof of Lemma \ref{lemma eigenschaften stopp zufaellig}}\label{proof lemma eigenschaften stopp zufaellig}
    \begin{description}
        \item{${\text{(i):}}$}\; This is immediate by definition.
         \item{${\text{(ii):}}$}\; Write
         \[
            (\rho_2,...,\rho_n):=  \Bigr{(}  \tau_2 1_{\{\tau=\lambda\}} +  (\lambda+ \delta_1) 1_{\{\tau\neq\lambda\}},..., \tau_n 1_{\{\tau=\lambda\}} + (\lambda+\delta_1+...+\delta_{n-1}) 1_{\{\tau\neq\lambda\}}  \Bigl{)}.
         \]
         For all $t< \lambda$ and $i\in\{2,...,n\}$ we have
         \[
            \{\rho_i\leq t\}=\emptyset.
         \]
         Using Proposition \ref{satz eigenschaften von neuer filtration}.(iii) recursively we obtain for $i\geq 3$
         \[
            \mathcal{F}^{\tau+\delta_1,\tau_2+\delta_2,...,\tau_{i-1}+\delta_{i-1}}_t\mid_{ \{ \tau=\lambda\}}= \mathcal{F}^{\lambda+\delta_1,\tau_2+\delta_2,...,\tau_{i-1}+\delta_{i-1}}_t\mid_{ \{ \tau=\lambda\}}
         \]
          and using $\{\tau=\lambda\}\subseteq \underset{j=2}{\overset{n}{\bigcap}} \{\tau_j=\rho_j \}$ we furthermore get
         \[
            \mathcal{F}^{\lambda+\delta_1,\tau_2+\delta_2,...,\tau_{i-1}+\delta_{i-1}}_t\mid_{ \{ \tau=\lambda\}}= \mathcal{F}^{\lambda+\delta_1,\rho_2+\delta_2,...,\rho_{i-1}+\delta_{i-1}}_t\mid_{ \{ \tau=\lambda\}}.
         \]
         Therefore
         \begin{eqnarray*}
            \{\rho_i\leq t\}
            & = &  \{\tau_i\, 1_{\{\tau=\lambda\}}\, + \, (\lambda+ \delta_1+...+\delta_{i-1})\, 1_{\{\tau\neq\lambda\}} \leq t\}\\
            & = & \bigr{(}\{ \tau=\lambda\}\cap \{\tau_i\leq t\} \bigl{)} \; \cup \;  \bigr{(}\{ \tau\neq\lambda\}\cap \{ \lambda+ \delta_1+...+\delta_{i-1}\leq t\} \bigl{)}\\
            & = & \bigr{(}\{ \tau=\lambda\}\cap \{\tau_i\leq t\} \bigl{)} \; \cup \;  \bigr{(}\{ \tau\neq\lambda\}\cap \{ \rho_{i-1} + \delta_{i-1}\leq t\} \bigl{)}\\
            & \in & \sigma\Bigr{(} \mathcal{F}^{\tau+\delta_1,\tau_2+\delta_2,...,\tau_{i-1}+\delta_{i-1}}_t\mid_{\{ \tau=\lambda\}},\,  \mathcal{F}_t^{\rho_{i-1}+\delta_{i-1}}\mid_{\{  \tau\neq\lambda\}} \Bigl{)} \\
             & = & \sigma\Bigr{(} \mathcal{F}^{\lambda+\delta_1,\rho_2+\delta_2,...,\rho_{i-1}+\delta_{i-1}}_t\mid_{\{ \tau=\lambda\}} ,\, \mathcal{F}_t^{\rho_{i-1}+\delta_{i-1}}\mid_{\{  \tau\neq\lambda\}}\Bigl{)} \\
            & \subseteq & \sigma\Bigr{(} \mathcal{F}_\lambda, \mathcal{F}^{\lambda+\delta_1,\rho_2+\delta_2,...,\rho_{i-1}+\delta_{i-1}}_t, \mathcal{F}_t^{\rho_{i-1}+\delta_{i-1}} \Bigl{)}\\
            & = & \mathcal{F}^{\lambda+\delta_1,\rho_2+\delta_2,...,\rho_{i-1}+\delta_{i-1}}_t.
         \end{eqnarray*}
            Hence $\rho_i$ is an $\mathfrak{F}^{\lambda+\delta_1,\rho_2+\delta_2,...,\rho_{i-1}+\delta_{i-1}}$-stopping time with $\rho_{i-1}+\delta_{i-1}\leq \rho_i$ for $i\geq 3$. A similar argument also applies for $i=2$.
        \item{${\text{(iii):}}$}\; This is immediate by applying (ii) to
         \[\Bigr{(}  \tau_2 1_{\{\tau=\lambda\}} +  (\tau+ \delta_1) 1_{\{\tau\neq\lambda\}},..., \tau_n 1_{\{\tau=\lambda\}} + (\tau+\delta_1+...+\delta_{n-1}) 1_{\{\tau\neq\lambda\}}  \Bigl{)}.\]
        \item{${\text{(iv):}}$}\; We first prove that for $i\geq 2$
          \begin{equation}\label{eigenschaft lemma fil}
                \mathcal{F}^{\tau+ \delta_1,\tau_2+\delta_2,...,\tau_i+\delta_i}_{t-c}\subseteq \mathcal{F}^{\tau+ c+ \delta_1,\tau_2+c +\delta_2,...,\tau_i+ c+ \delta_i}_{t}.
          \end{equation}
         The case $i=2$ holds by \ref{satz eigenschaften von neuer filtration}.(ii). By induction we obtain again using Proposition \ref{satz eigenschaften von neuer filtration}.(ii)
         \begin{eqnarray*}
          \sigma\Bigr{(} \mathcal{F}^{\tau+ \delta_1,\tau_2+\delta_2,...,\tau_i+\delta_i}_{t-c}, \mathcal{G}^{\tau_{i+1}+\delta_{i+1}}_{t-c} \Bigl{)}& \subseteq & \sigma\Bigr{(} \mathcal{F}^{\tau+c+ \delta_1,\tau_2+c+\delta_2,...,\tau_i+c+\delta_i}_{t}, \mathcal{G}^{\tau_{i+1}+\delta_{i+1}}_{t-c} \Bigl{)}\\ & = & \sigma\Bigr{(} \mathcal{F}^{\tau+c+ \delta_1,\tau_2+c+\delta_2,...,\tau_i+c+\delta_i}_{t}, \mathcal{G}^{\tau_{i+1}+c+\delta_{i+1}}_{t} \Bigl{)},
         \end{eqnarray*}
         hence
         \[
                \mathcal{F}^{\tau+ \delta_1,\tau_2+\delta_2,...,\tau_{i+1}+\delta_{i+1}}_{t-c}\subseteq \mathcal{F}^{\tau+ c+ \delta_1,\tau_2+c +\delta_2,...,\tau_{i+1}+ c+ \delta_{i+1}}_{t}.
         \]
         Therefore we obtain
            \[
                \{ \tau_i + c \leq t \} =  \{ \tau_i  \leq t -c \} \in \mathcal{F}^{\tau+ \delta_1,\tau_2+\delta_2,...,\tau_{i-1}+\delta_{i-1}}_{t-c}\subseteq \mathcal{F}^{\tau+ c+ \delta_1,\tau_2+c +\delta_2,...,\tau_{i-1}+ c+ \delta_{i-1}}_{t},
            \]
            i.e. $\tau_i+c$ is an $\mathfrak{F}^{\tau+ c+ \delta_1,\tau_2+c +\delta_2,...,\tau_{i-1}+ c+ \delta_{i-1}}$-stopping time.

   \end{description}
  \end{appendix}

 \subsection{Proof of Lemma \ref{lemma oben gerich}} \label{proof lemma oben gerich}
   For $\tau,\rho\in\mathcal{S}_\sigma^{n}(\delta)$ let
 \[
    A:=\Bigr{\{ }\E\Bigr{(}\sum_{i=1}^nY(\tau_i)\Big{|}\mathcal{F}_\sigma\Bigl{)} \; \geq \; \E\Bigr{(}\sum_{i=1}^nY(\rho_i)\Big{|}\mathcal{F}_\sigma\Bigl{)} \Bigr{\} }
 \]
 and
 \[
    \nu_i:= 1_A\tau_i\, + \, 1_{A^c}\, \rho_i \mbox{ for }i=1,...,n.
 \]
Then $\nu_1$ is an $\mathfrak{F}$-stopping time with $\sigma\leq \nu_1$ and $\tau_{i-1}+\delta_{i-1}\leq \tau_i$ for all $i\in\{2,...,n\}$. To prove that  $\nu_i$ is an $\mathfrak{F}^{\tau_1+\delta_1,...,\tau_{i-1}+\delta_{i-1}}$-stopping time for $i=2,...,n$ we first prove that
 \begin{equation}\label{eigenschaft ind vor oben ger}
    \mathcal{F}_t^{\tau_1+\delta_1,...,\tau_i+\delta_i}\mid_{ A\cap \{\sigma\leq t\}}\; \subseteq  \mathcal{F}_t^{\nu_1+\delta_1,...,\nu_i+\delta_i} \quad \text{for all $t\geq 0$}.
 \end{equation}
For all $s\leq t$ it holds that  $\mathcal{F}_t\mid_{ A\cap \{\sigma\leq t\}}\; \subseteq  \mathcal{F}_t^{\nu_1+\delta_1}$ and
    \begin{eqnarray} \label{eigenschaft rech im ia}
        & & A\cap \{\sigma\leq t\}\cap \{ \tau_1+\delta_1 \leq s  \}\\ \nonumber
        & = & A\cap \{\sigma\leq t\}\cap A\cap \{ \tau_1+\delta_1 \leq s  \}\\ \nonumber
        &\in & \sigma \Bigr{(} \mathcal{F}_t, \, \{ A\cap \{ \tau_1+\delta_1 \leq r \} ;\; r\leq t  \} \Bigl{)}\\ \nonumber
        & = & \sigma \Bigr{(} \mathcal{F}_t,\, \{ A\cap\{ \sigma \leq r\};\; r\leq t \} ,\, \{ A\cap \{ \tau_1+\delta_1 \leq r \} ;\; r\leq t  \} \Bigl{)}\\ \nonumber
        & \subseteq & \sigma \Bigr{(} \mathcal{F}_t,\, \{ A\cap\{ \sigma \leq r\};\; r\leq t \} ,\, \\ \nonumber
        & & \qquad \quad \{ \Bigr{(}A\cap \{ \tau_1+\delta_1 \leq r \} \Bigl{)} \cup  \Bigr{(}A^c\cap \{ \rho_1+\delta_1 \leq r \}\Bigl{)} ;\; r\leq t  \} \Bigl{)}\\ \nonumber
        & = & \sigma \Bigr{(} \mathcal{F}_t,\, \{ A\cap\{ \sigma \leq r\};\; r\leq t \} ,\, \{ \{ \nu_1+\delta_1 \leq r\} ;\; r\leq t  \} \Bigl{)}\\ \nonumber
        & = & \sigma \Bigr{(} \mathcal{F}_t,\, \{ \{ \nu_1+\delta_1 \leq r\} ;\; r\leq t  \} \Bigl{)}\\ \nonumber
        & = & \sigma\Bigr{(} \mathcal{F}_t,\, \mathcal{G}_t^{\nu_1+\delta_1}\Bigl{)} .
    \end{eqnarray}
 and therefore
    \[
         \sigma(\mathcal{F}_t, \mathcal{G}^{\tau_1+\delta_1}_t)\mid_{ A\cap \{\sigma\leq t\}}\; \subseteq \mathcal{F}_t^{\nu_1+ \delta_1},
    \]
i.e.
     \[
        \mathcal{F}_t^{\tau_1+ \delta_1}\mid_{ A\cap \{\sigma\leq t\}}\; \subseteq \mathcal{F}_t^{\nu_1+ \delta_1}.
    \]
        For general $i$ by substituting the sets  $\{ \tau_1+\delta_1 \leq s \}$ by $\{ \tau_{i+1}+\delta_{i+1} \leq s \}$ in (\ref{eigenschaft rech im ia}), we obtain that
        \[
            A\cap \{\sigma\leq t\}\cap \{ \tau_{i+1}+\delta_{i+1} \leq s  \} \in \sigma( \mathcal{F}_t,\, \mathcal{G}_t^{\nu_{i+1}+\delta_{i+1}})
        \]
and by induction
     \[
         \mathcal{F}_t^{\tau_1+\delta_1,...,\tau_{i+1}+\delta_{i+1}}\mid_{ A\cap \{\sigma\leq t\}}\; \subseteq  \mathcal{F}_t^{\nu_1+\delta_1,...,\nu_{i+1}+\delta_{i+1}}.
    \]
 \bigskip
Analogously we obtain
 \[
    \mathcal{F}_t^{\rho_1+\delta_1,...,\rho_i+\delta_i}\mid_{ A^c\cap \{\sigma\leq t\}}\; \subseteq  \mathcal{F}_t^{\nu_1+\delta_1,...,\nu_i+\delta_i}
 \]
and therefore
 \begin{eqnarray*}
    \{\nu_{i}+ \delta_{i} \leq t\} & = & \Bigr{(}A\cap \{ \tau_i+\delta_i \leq t \} \Bigl{)}\; \cup \;  \Bigr{(}A^c\cap \{ \rho_i+\delta_i \leq t \}\Bigl{)}\\
    & = & \Bigr{(}A\cap \{\sigma\leq t\}\cap \{ \tau_i+\delta_i \leq t \} \Bigl{)} \; \cup \;  \Bigr{(}A^c\cap  \{\sigma\leq t\}\cap  \{ \rho_i+\delta_i \leq t \}\Bigl{)}\\
    & \in & \sigma\Bigr{(} \mathcal{F}_t^{\tau_1+\delta_1,...,\tau_{i-1}+\delta_{i-1}}\mid_{ A\cap \{\sigma\leq t\}},\, \mathcal{F}_t^{\rho_1+\delta_1,...,\rho_{i-1}+\delta_{i-1}}\mid_{ A^c\cap \{\sigma\leq t\}}\Bigl{)}\\
    & \subseteq & \mathcal{F}_t^{\nu_1+\delta_1,...,\nu_{i-1}+\delta_{i-1}}.
\end{eqnarray*}
This proves that
$(\nu_1,...,\nu_n)\in\mathcal{S}_\sigma^{n}(\delta)$ and it holds that
 \begin{eqnarray*}\qquad \quad \;
 \E\Bigr{(}\sum_{i=1}^nY(\nu_i)\Big{|}\mathcal{F}_\sigma\Bigl{)}
                 =
 \max\Bigr{\{}\E\Bigr{(}\sum_{i=1}^nY(\tau_i)\Big{|}\mathcal{F}_\sigma\Bigl{)} ,\; \E\Bigr{(}\sum_{i=1}^nY(\rho_i)\Big{|}\mathcal{F}_\sigma\Bigl{)}\Bigl{\}}.
 \end{eqnarray*}
 This proves the first statement. For the second assertion note that
if $\text{range}(\tau_1)$ and $\text{range}(\rho_1)$ are countable, then so is $\text{range}(\nu_1)$, i.e. by the proof above $(\nu_1,....,\nu_n)\in\mathcal{S}_\sigma^{n}(\delta)_{\text{disc}}$.

\end{document}